\documentclass[oneside, reqno]{amsart}
\usepackage{graphicx}
\usepackage{caption, subcaption, float}
\usepackage{tabularx}
\usepackage{xcolor}
\usepackage[margin=1in]{geometry}
\usepackage{amsmath}

\allowdisplaybreaks

\newcommand{\bea}{\begin{eqnarray}}
\newcommand{\eea}{\end{eqnarray}}
\newcommand{\beann}{\begin{eqnarray*}}
\newcommand{\eeann}{\end{eqnarray*}}

\newcolumntype{Y}{>{\centering\arraybackslash}X}

\usepackage{verbatim}
\usepackage{amssymb,amsmath,amsthm}
\usepackage[mathscr]{eucal}
\usepackage{tikz}
\usetikzlibrary{positioning}
\usepackage{pdfpages}
\usepackage{mathtools}
\usepackage{setspace}

\numberwithin{equation}{section}
\usepackage[toc,page]{appendix}

\usepackage{graphicx}
\usepackage{psfrag}
\usepackage{afterpage}
 \usepackage{array,multirow,graphicx}
\newtheorem{theorem}{Theorem}[section]

\newtheorem{proposition}[theorem]{Proposition}
\newtheorem{lemma}[theorem]{Lemma}
\newtheorem{cor}[theorem]{Corollary}
\newtheorem{rem}[theorem]{Remark}

\usepackage[colorlinks=true]{hyperref}
\usepackage{ifpdf}
\usepackage{float}

\title{Multiple Scale Asymptotics of Map Enumeration}

\author
{Nicholas Ercolani$^1$}
\thanks{$^1$ University of Arizona, Department of Mathematics
  (ercolani@math.arizona.edu)}

\author{Joceline Lega$^2$}
\thanks{$^2$ University of Arizona, Department of Mathematics (lega@math.arizona.edu, \url{www.math.arizona.edu/\string~lega/}),}

\author{Brandon Tippings$^3$}
\thanks{$^3$ University of Arizona, Department of Mathematics ({tippings@arizona.edu}).}


\begin{document}
\maketitle

\begin{abstract}
    We introduce a systematic approach to express generating functions for the enumeration of maps on surfaces of high genus in terms of a single generating function relevant to planar surfaces. Central to this work is the comparison of two asymptotic expansions obtained from two different fields of mathematics: the Riemann-Hilbert analysis of orthogonal polynomials and the theory of discrete dynamical systems. By equating the coefficients of these expansions in a common region of uniform validity in their parameters, we recover known results and provide new expressions for generating functions associated with graphical enumeration on surfaces of genera 0 through 7. Although the body of the article focuses on 4-valent maps, the methodology presented here extends to regular maps of arbitrary even valence and to some cases of odd valence, as detailed in the appendices.
\end{abstract}

\section{Introduction}
\label{sec:mapsgraphs}

This paper combines ideas from random matrix theory and dynamical systems to address a long-standing question relevant to a particular branch of graph theory, specifically the enumeration of \textit{maps}. This branch of graphical enumeration arose in the mid-twentieth century as a first step in addressing the following general question: given a spatial graph, when can that graph be embedded on a particular type of topological surface? Some graphs are planar, meaning the graph can be embedded in a plane (or equivalently a sphere) without being forced to cross itself. The same question can be posed for more general surfaces, thereby setting up a kind of complexity classification of spatial graphs, or networks, in terms of the topology of surfaces on which they can or cannot be embedded. 

Being able to enumerate graphs subject to topological complexity serves as a first step in understanding the general role of topological frustration in network theory. There have been quite a few studies in the physics and mathematics literature related to this problem and in particular toward the construction of generating functions for this enumeration indexed by graph size (the number of vertices, which we will denote $j$). Because the graph size is not bounded, this potentially involves an infinite amount of information for each topological surface. However, it was shown in \cite{bib:er} that these generating functions depend only on a minimal, specific, {\it finite} set of rational parameters. The results discussed in this paper develop a systematic method for identifying these parameters explicitly.

A \textit{map} is a connected graph $\Gamma$ embedded in a surface $M$ that satisfies certain additional conditions. The surfaces we consider are compact, oriented and connected topological surfaces, each of them being uniquely specified, up to a homeomorphism, by its genus, $g$. Embedding a graph, $\Gamma$, into $M$ amounts to embedding its vertices and edges in such a way that the overall placement of the graph on $M$ is injective and continuous. The last additional condition required is that after the surface is cut along the edges of the embedded graph, what remains is a disjoint union of contractible topological cells. For fixed genus $g$, we refer to maps satisfying these conditions as $g$-\textit{maps}.

A depiction of a map in a {\it local chart} on a surface is illustrated by the dashed black graph embedded in a planar region shown in Figure \ref{dualmap}. Note that in this example all (black) vertices have valence 4 (in the graph-theoretic sense). Maps whose vertices all have the same valence, $\mathcal V$, are referred to as $\mathcal V$-regular maps in analogy with the terminology for graphs. Figure \ref{dualmap} also (locally) illustrates 
the dual map (depicted in terms of the solid blue graph). The 4-regularity of the original map results in the dual map being a tiling of the surface by topological rectangles. 
\begin{figure}[h] 
\begin{center} 
\includegraphics[width=.6\linewidth]{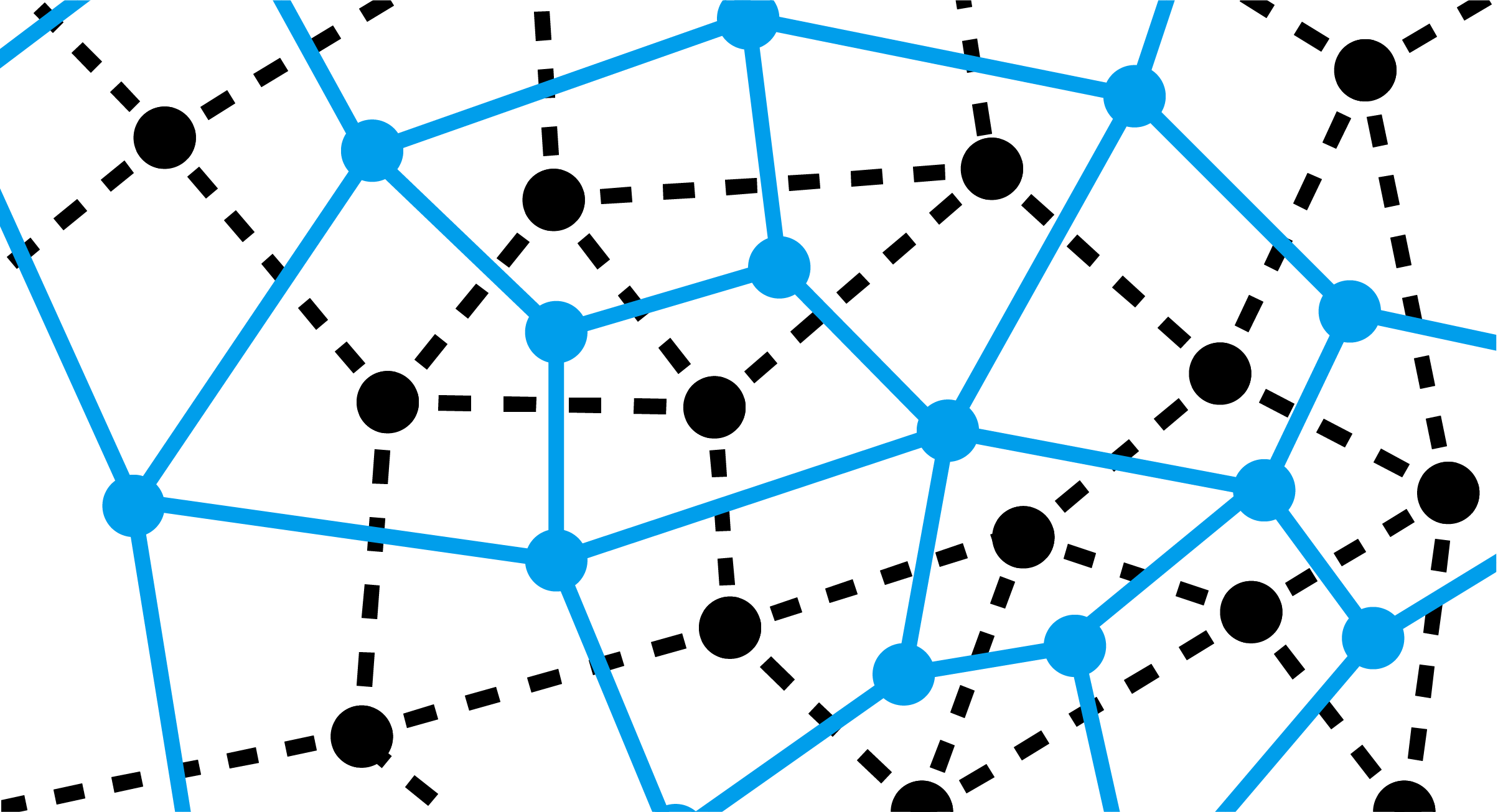}
\caption{Illustration of a 4-valent map in a local chart and of its dual.}  \label{dualmap}
\end{center}
\end{figure}

Such surface tilings arise in a number of settings where one may be interested in modelling some kind of large scale cellular growth subject to global topological constraints. Physical applications arise in pattern formation in foams \cite{bib:ball}, planar systems of interacting particles \cite{bib:l11}, embryo gastrulation \cite{bib:mb}, and vertex dynamics \cite{bib:fomg}. For related statistical or stochastic questions (such as statistical mechanics/dynamics on random networks \cite{bib:hs} or stochastic Loewner evolution of interfaces \cite{bib:cn}) the large scale enumeration of maps with fixed features  is an important initial problem.

As mentioned earlier, we are interested in the enumeration of maps with a fixed number, $j$, of vertices as $j$ varies and becomes large. To reduce such enumerations to a combinatorial question, one needs to define when two maps are equivalent. One counts maps modulo equivalence and the set of equivalence classes is finite. On a genus $g$ surface, two maps are equivalent if there is an orientation-preserving homeomorphism from the surface to itself that induces a homeomorphism of the graph to itself preserving the sets of vertices and edges but possibly respectively permuting them (while still preserving the incidence relations) \cite{bib:lz}. Equivalences for which such a permutation is non-trivial can arise. To avoid such technicalities at the outset, it is typical to consider the enumeration of {\it labelled} maps. These are maps in which the vertices are labelled (or numbered) and the edges around each vertex on the surface are also labelled consistent with the orientation of the surface. For the latter it suffices to label one initial edge. The orientation (say clockwise) will then order the successive numbering of the remaining edges around that vertex.  This labelling breaks any symmetries that could yield a non-trivial automorphism of $\Gamma$.
\medskip

The earliest work on map enumeration goes back to Tutte \cite{bib:tu}, using a purely combinatorial approach. Further results in this vein have continued up to the present time, producing some remarkable combinatorial insights \cite{bib:lz, bib:jv, bib:bgr, bib:cms, bib:chap}. Separately, deep and surprising connections to random matrix theory have led to generating functions for map enumeration. These generating functions are series, one for each genus $g$, whose $j^{th}$ Taylor coefficient counts the number 
of labelled maps on the surface with $j$ vertices of prescribed valence. One of the earliest approaches was based on a formal application of resolvent identities for random matrices that goes back to Ambjorn, Chekov, Kristjansen, and Makeenko \cite{bib:a}.  This is known as the method of {\it loop equations}. Eynard  \cite{bib:e,bib:ebook} subsequently improved on this work to establish a direct connection between loop equations and Tutte's equations that are key to the combinatorial method mentioned earlier.
Finally, in \cite{bib:biz} and, later in \cite{bib:fik}, a different random matrix approach to deriving generating functions was developed based on recurrence relations for orthogonal polynomials. Subsequently, a rigorous basis for deriving map generating functions in general was established in \cite{bib:em, bib:emp08, bib:bd, bib:ep, bib:ew}, and led to further insights into their structure. 
The present work builds on these and recent results of the authors to compare two expansions, both centered on recurrence coefficients for orthogonal polynomials. One of the expansions considers these coefficients in terms of their combinatorial interpretation related to graphical enumeration discussed above. The other understands these coefficients in terms of an orbit embedded in a dynamical system known as the discrete Painlev\'e I equation. Comparing these two expansions in a region where they are both valid, as illustrated in Figure \ref{fig:val}, provides a procedure to systematically count the number of regular $g$-maps with fixed number of vertices, for arbitrary values of $g$. This procedure builds on an approach first developed in \cite{bib:tip20} (Section 7.4).

The rest of this article is organized as follows. Section \ref{sec:two_exp} introduces the two expansions, which we call the \textit{genus expansion} and the \textit{center manifold expansion}. Section \ref{sec:bridge} recasts them using the same gauge as the asymptotic parameter $n \to \infty$, and identifies a common region of validity where they can be equated term by term. Section \ref{sec:closed_forms} uses the result of Section \ref{sec:bridge} to find closed-form expressions for the generating functions of labeled $g$-maps with 4-valent vertices, and illustrates the methodology in calculating the number of $g$-maps with up to 15 vertices, for genera $g$ between 0 and 7. Section \ref{sec:conclusions} summarizes our results and considers a range of extensions. These include a generalization to $2\nu$-valent 2-legged maps that makes use of asymptotic expansions available in the literature in lieu of the center manifold expansion, possible extensions of the method of \cite{bib:elt22} to higher-order Painlev\'e equations, a discussion of triangulations, and the existence of closed-form expressions for the number of 4-valent $g$-maps with an arbitrary number of vertices. For clarity, the body of the article only considers 4-valent maps. Proofs of all of the theorems are presented in the appendices, in the more general case of $2\nu$-valent maps.

\begin{figure}[h] 
\begin{center} 
\includegraphics[width=.95\linewidth]{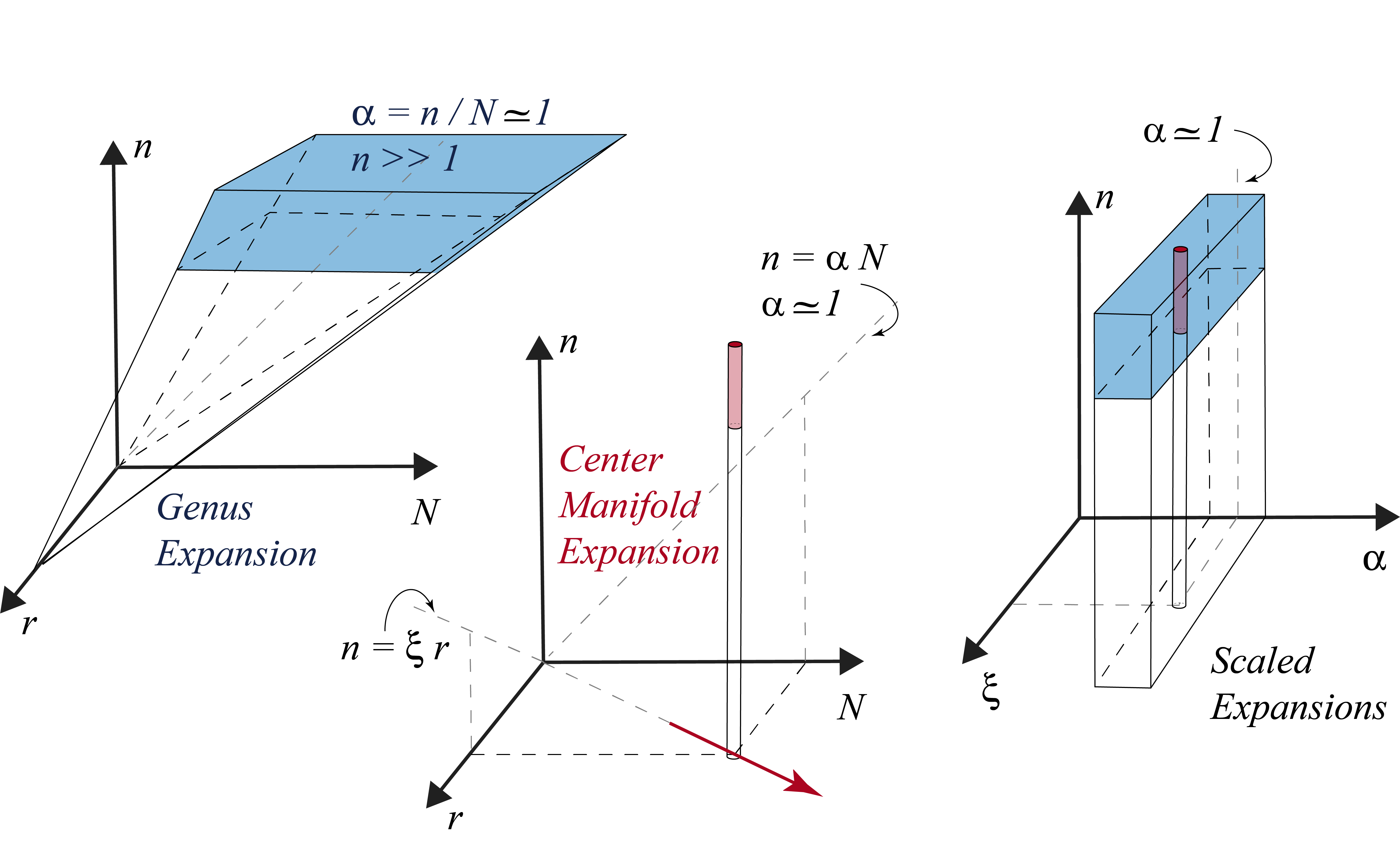}
\caption{The two expansions for $\nu=2$. Left: the genus expansion is valid as $n \to \infty$ for arbitrary values of $r >0$ and $\alpha = n / N \simeq 1$. Middle: the center manifold expansion is valid as $n \to \infty$ for arbitrary positive values of $r$ and $N$, here chosen such that $r = n / \xi$ and $N = n / \alpha$. As $n \to \infty$, both $r$ and $N$ increase linearly with $n$, as suggested by the red arrow. Right: in $(\xi,\alpha,n)$ coordinates, the regions of validity of the expansions overlap for fixed values of $\alpha \simeq 1$ and $\xi > 0$.}  \label{fig:val}
\end{center}
\end{figure}

\section{The Two Expansions}
\label{sec:two_exp}

\subsection{Recurrence Relations and The Genus Expansion}

We consider orthogonal polynomials defined on the real line with respect to an exponential weight of the form $ w(\lambda) = e^{- V_{{\bf t}, N}(\lambda)}$,
where the {\it potential} $V_{{\bf t}, N}$ is given by
\begin{equation} \label{potential}
V_{{\bf t}, N}(\lambda) = N \left(\frac1{2} \lambda^2 + \sum_{j=1}^J t_j \lambda^j\right), \qquad {\bf t}=(t_1, \cdots, t_J)
\end{equation}
with $J$ even. Although this paper will focus on a very particular case of (\ref{potential}), the general expression of $V_{{\bf t}, N}$ given above will be relevant in some of the appendices. Given the weight $w$, one can define a family of {\it monic} orthogonal polynomials $\{ \pi_\ell\}$ that satisfy the conditions
\begin{eqnarray*}
\int_{\mathbb{R}} \pi_n(\lambda) \pi_m(\lambda) w(\lambda) d \lambda = 0,\,\,\,\, n\ne m .
\end{eqnarray*}
When the potential $V_{{\bf t}, N}(\lambda)$ in \eqref{potential} is even, these polynomials are determined by a recurrence of the form
\begin{eqnarray} \label{rec}
\lambda\, \pi_n(\lambda) =  \pi_{n+1}(\lambda) + b^2_n\, \pi_{n-1}(\lambda).
\end{eqnarray}
The results directly pertinent to map enumeration rest on a detailed analysis of the truncated Mercer kernel associated to the family of monic orthogonal polynomials $\{ \pi_\ell\}$, 
\[
K_{{\bf t},n}(\lambda, \eta)= e^{ -(1/2) ( V_{{\bf t}, N}(\lambda) + V_{{\bf t},N}(\eta)   ) } \sum_{\ell = 0}^{n-1} \pi_\ell(\lambda) \pi_\ell(\eta), 
\]
and its large $n$ asymptotics. %
The fundamental result is the following so-called genus expansion.
\begin{theorem} \cite{bib:em} \label{Workhorse}
There exist $T >0$ and $\gamma > 0$ such that one has an asymptotic expansion, uniformly valid for $\alpha = \frac{n}N$ sufficiently close to 1 and all  ${\bf t} \in \mathbb{T}(T, \gamma) = \left\{  {\bf t} \in \mathbb{R}^J : \left|{ {\bf t}}\right| \leq T , \,\, t_J > \gamma \sum_{j=1}^{J-1} |t_j| \right\}$,  of the form
\beann
\int_{-\infty}^\infty F(\lambda) K_{{\bf t},n}(\lambda,\lambda)  d\lambda = F_0(\alpha, {\bf t}) + n^{-2} F_1(\alpha, {\bf t}) + n^{-4} F_2(\alpha, {\bf t}) + \cdots , 
\eeann
provided the function $F(\lambda)$ is $C^\infty$ and grows no faster than polynomially. The coefficients $F_m$ depend analytically on $\alpha$ and ${\bf t}$ for ${\bf t} \in \mathbb{T}(T, \gamma)$ and the asymptotic expansion may be differentiated 
term by term with respect to $\alpha$ and $\bf t$. 
\end{theorem}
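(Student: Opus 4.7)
The plan is to prove this via the Deift--Zhou nonlinear steepest descent analysis of the Fokas--Its--Kitaev Riemann--Hilbert problem (RHP) characterizing the monic orthogonal polynomials $\pi_n$. First, the Christoffel--Darboux formula expresses the kernel $K_{{\bf t},n}(\lambda,\lambda)$ in terms of the entries of the $2\times 2$ matrix $Y(\lambda)$ satisfying the standard conditions: analyticity off $\mathbb{R}$, jump $Y_+ = Y_-\bigl(\begin{smallmatrix} 1 & e^{-V_{{\bf t},N}} \\ 0 & 1 \end{smallmatrix}\bigr)$, and normalization $Y(\lambda)\sim \lambda^{n\sigma_3}$ at infinity. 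The integral against $F$ is thereby recast as a spectral pairing with the diagonal of the kernel, which will inherit an asymptotic expansion once $Y$ does.

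The core preparatory step is to establish a \emph{one-cut regular regime} for the underlying variational problem. For ${\bf t}\in\mathbb{T}(T,\gamma)$ with $T$ sufficiently small and $\gamma$ sufficiently large, and for $\alpha$ in a neighborhood of $1$, I would show that the equilibrium measure $\mu^{\alpha,{\bf t}}$ minimizing the weighted logarithmic energy for the external field $V_{{\bf t}}/\alpha$ is supported on a single interval $[a(\alpha,{\bf t}),b(\alpha,{\bf t})]$ with strictly positive density in the interior and square-root vanishing at the endpoints. The condition $t_J>\gamma\sum_{j<J}|t_j|$ with $J$ even is precisely what guarantees that the $t_J\lambda^J$ term confines and convex-dominates the potential, so that a perturbation from the Gaussian case ${\bf t}=0$ (where the semicircle law applies) preserves one-cut regularity uniformly. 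The endpoints $a,b$ solve algebraic equations whose Jacobian is nondegenerate at ${\bf t}=0$, yielding analytic dependence on $({\bf t},\alpha)$ by the implicit function theorem.

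With the equilibrium measure in hand, I would execute the standard four-step steepest descent transformation: (i) normalize $Y$ by $e^{n g(\lambda)\sigma_3}$ with $g$ the logarithmic potential of $\mu^{\alpha,{\bf t}}$; (ii) factor the jump matrix on the support and open a lens; (iii) build a global outer parametrix from the Szeg\H{o} function of $\mu^{\alpha,{\bf t}}$, combined with local Airy parametrices in fixed disks around $a$ and $b$; (iv) reduce to an error matrix $R(\lambda)$ solving a small-norm RHP on a contour disjoint from the support, whose jumps admit full asymptotic series $I + \sum_{k\geq 1} n^{-k}\Delta_k(\lambda;\alpha,{\bf t})$. Iterating the small-norm solution produces $R = I + \sum_{k\geq 1} n^{-k} R^{(k)}$ uniformly on compact sets, with each $R^{(k)}$ analytic in $({\bf t},\alpha)$. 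Substituting back through the Christoffel--Darboux representation and integrating against $F$ yields the claimed expansion. The vanishing of odd powers of $1/n$ reflects a symmetry of the Airy contributions at the two endpoints of the one-cut support, leaving only the even-indexed coefficients $F_m$.

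The principal obstacle is \emph{uniformity} of every estimate above over the entire parameter region $\mathbb{T}(T,\gamma)\times\{\alpha\simeq 1\}$, rather than pointwise in $({\bf t},\alpha)$. This forces the lens contours, the radii of the Airy disks, the separation of auxiliary contours from the support, and the operator norms in the small-norm estimate to be chosen independently of $({\bf t},\alpha)$. Securing these uniform choices is exactly the purpose of the admissibility condition defining $\mathbb{T}(T,\gamma)$: it supplies simultaneous confinement and a uniform spectral gap that keep the problem one-cut regular throughout. Once uniformity is in place, term-by-term differentiability in $\alpha$ and $\bf t$ follows from Cauchy's integral formula applied to the analytic coefficients together with the uniform control on remainders.
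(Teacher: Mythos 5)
The paper itself offers no proof of this statement: it is imported verbatim from \cite{bib:em}, and the proof there is precisely the Riemann--Hilbert/steepest-descent program you describe --- Christoffel--Darboux reduction to the Fokas--Its--Kitaev problem, a one-cut regular equilibrium measure for ${\bf t}\in\mathbb{T}(T,\gamma)$ and $\alpha$ near $1$, the $g$-function transformation, lens opening, outer plus Airy local parametrices, and a small-norm error problem whose estimates are made uniform in $(\alpha,{\bf t})$ so that analyticity of the coefficients and term-by-term differentiability follow. In that respect your plan coincides with the cited source's approach, and the emphasis you place on uniformity over the whole admissible region is exactly the right technical concern.

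There is, however, one genuine gap: your justification of the central structural claim --- that the expansion proceeds in powers of $n^{-2}$ rather than $n^{-1}$ --- is an assertion, not an argument, and the stated reason (``a symmetry of the Airy contributions at the two endpoints'') does not hold up. The set $\mathbb{T}(T,\gamma)$ only requires $|{\bf t}|\le T$ and $t_J>\gamma\sum_{j<J}|t_j|$, so the odd coefficients $t_1,t_3,\dots$ may be nonzero; the potential is then not even, the endpoints $a(\alpha,{\bf t})\neq -b(\alpha,{\bf t})$ are not exchanged by any symmetry, and the two edge parametrices are genuinely different. Moreover, the small-norm iteration generically produces a nonvanishing $R^{(1)}$, so odd powers of $1/n$ do appear in the matrix asymptotics; what must be shown is that their net contribution to $\int F(\lambda)K_{{\bf t},n}(\lambda,\lambda)\,d\lambda$ vanishes identically in $(\alpha,{\bf t})$. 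This evenness is the very feature that makes the result a genus expansion, and in \cite{bib:em} it is the object of a dedicated argument (exploiting the detailed structure of the correction terms, including oscillatory contributions that are negligible against smooth $F$, together with the analyticity in ${\bf t}$ used to identify the coefficients); as written, your outline would only deliver a uniform expansion in all integer powers of $1/n$ with the odd coefficients left unaccounted for.
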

 \noindent This is referred to as a {\it genus expansion} because for various choices of $F(\lambda)$ the coefficient of $n^{-2 g}$ is the generating function for some map enumeration problem on a surface of genus $g$.  
\begin{rem}
The discrete variable $n$ in this theorem, and the discussion preceding it, appears in other related contexts. In the setting of random matrix theory, briefly  mentioned in Section \ref{sec:mapsgraphs}, $n$ is the matrix size, and a probability density on 
$n \times n$ Hermitian matrices, $M$, is given by 
$\exp\left( -\mbox{Tr} \,\, V_{{\bf t}, N}(M) \right) dM$. In the dynamical setting of the discrete Painlev\'e I equation, to be discussed in Section \ref{sec:background}, $n$ labels the discrete time step. The parameters in {\bf t} of course determine the precise polynomial potential but, more importantly, they serve to identify different universality classes for statistical or dynamical behaviors of the physical system being modelled. Finally, the (continuous) parameter $N$ acts as a kind of inverse temperature in the random matrix setting and $\alpha = n/N$ is used to describe natural scaling invariances in all the systems just mentioned, as well as in this paper. In random matrix theory, $\alpha$ is called the {\it 'tHooft} parameter and is usually denoted by $x$. Here we use $\alpha$ to avoid confusion with the dynamic variable $x_n$ which will be introduced later.
\end{rem}

The particular form of the potential we will focus on for this paper is
\begin{eqnarray} \label{quarticwt}
V(\lambda)= N \left(\frac{1}{2}\lambda^2 + \frac{r}{4}\lambda^4 \right),
\end{eqnarray}
corresponding to ${\bf t}=(0,0,0,t) \in \mathbb{R}^4$, $t_4 = t = r/4$. Although focusing on this quartic case may seem restrictive from the viewpoint of general map enumeration, this was the case of original interest in the physics literature \cite{bib:biz}. For $V$ given by Equation \eqref{quarticwt}, we have the following result, obtained by setting $F(\lambda)= \lambda$ in Theorem \ref{Workhorse}, differentiating the resulting expansion term by term with respect to $t_1$ and then setting $t_1 = 0$.
\begin{theorem} \cite{bib:emp08}\label{thm:31}
For the recurrence coefficients $b_{n}^2$ of the three-term recurrence \eqref{rec}, associated to the weight with potential \eqref{quarticwt}, let $\alpha = n/N$ be in a neighborhood of 1, and let $t$ have positive real part. Then as $n\rightarrow \infty$, $b_{n}^2$ has an asymptotic expansion of the form
\begin{equation} 
b_{n}^2 = \alpha \left({z}_0(t,\alpha)+\frac{1}{n^2}{z}_1(t,\alpha) \cdots  \right), \label{eq:empex}
\end{equation}
uniformly valid on compact sets in $t$. The coefficients are analytic functions in a neighborhood of 0 with Taylor-Maclaurin expansion
$$
{z}_g\left(t,\alpha\right) = \sum_{j=0}^\infty (-1)^j \frac{\kappa^{(g)}_j}{j!} (\alpha t)^j
$$
 where 
$\kappa^{(g)}_j$ is the the number of labeled $g$-maps  with $j$ 4-valent vertices and exactly two vertices that are 1-valent. 
\end{theorem}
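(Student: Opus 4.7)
The plan is to derive this theorem from Theorem \ref{Workhorse} applied to $F(\lambda) = \lambda$. Taking this $F$ gives an asymptotic expansion
\[
\int_{-\infty}^{\infty} \lambda\, K_{\mathbf{t}, n}(\lambda, \lambda)\, d\lambda = F_0(\alpha, \mathbf{t}) + n^{-2} F_1(\alpha, \mathbf{t}) + \cdots,
\]
uniformly valid for $\alpha$ near $1$ and $\mathbf{t} \in \mathbb{T}(T,\gamma)$, with each $F_m$ analytic in $\mathbf{t}$. Because Theorem \ref{Workhorse} allows term-by-term differentiation in any $t_j$, I would apply $\partial_{t_1}$ to both sides and then evaluate at $\mathbf{t} = (0,0,0,t)$. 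At this symmetric point the left-hand side itself vanishes (the weight is even and each $\pi_\ell$ has definite parity), but its first $t_1$-derivative does not.

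The algebraic heart of the proof is to show that this $t_1$-derivative, evaluated at $t_1 = 0$, is proportional to $b_n^2$. The key tool is the identity $\partial_{t_1} \pi_\ell = N\, b_\ell^2\, \pi_{\ell-1}$, derived by differentiating the orthogonality relations $\int \pi_\ell \pi_k\, e^{-V_{\mathbf{t},N}}\, d\lambda = 0$ for $k < \ell$, observing that $\partial_{t_1} \pi_\ell$ has degree less than $\ell$, and invoking the three-term recurrence \eqref{rec} to evaluate $\int \lambda\, \pi_\ell \pi_k\, e^{-V}\, d\lambda$. Substituting this identity into $\partial_{t_1}\bigl[\int \lambda\, \pi_\ell^2\, e^{-V}\, d\lambda\bigr]$, together with the evaluation $\int \lambda^2 \pi_\ell^2\, e^{-V}\, d\lambda\big|_{t_1 = 0} = (b_{\ell+1}^2 + b_\ell^2) h_\ell$, produces a summand that reduces (after dividing by $h_\ell$ consistent with the Mercer normalization of $K_{\mathbf{t},n}$) to $N(b_\ell^2 - b_{\ell+1}^2)$. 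Summing $\ell = 0, \dots, n-1$ then telescopes to $-N b_n^2$ (using $b_0^2 = 0$). Comparing the two sides of the differentiated Workhorse expansion and rearranging gives the asymptotic series $b_n^2 = \alpha\,(z_0(t,\alpha) + n^{-2} z_1(t,\alpha) + \cdots)$, in which each $z_g$ is an explicit rescaling of $\partial_{t_1} F_g|_{t_1 = 0}$ restricted to the quartic slice; the overall $\alpha$ prefactor emerges from the relationship $\alpha = n/N$ when solving for $b_n^2$. Analyticity of $z_g$ in $t$ near $0$ is inherited from the analyticity of $F_g$ on $\mathbb{T}(T,\gamma)$.

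Finally, the Taylor coefficients of $z_g$ must be identified with $\kappa_j^{(g)}$. This uses the standard topological (Feynman-diagram) expansion of the partition function $Z_{\mathbf{t},N}$: each derivative $\partial_{t_j}$ inserts a vertex of valence $j$, so the factor of $\lambda$ in $F(\lambda) = \lambda$ together with the extra $\partial_{t_1}$ produces maps with exactly two univalent (leg) vertices, while restriction to $\mathbf{t} = (0,0,0,t)$ confines all remaining internal vertices to valence $4$, and the $n^{-2g}$ grading sorts them by genus. The signs $(-1)^j$ and factorial denominators $j!$ emerge from Taylor-expanding $e^{-V_{\mathbf{t},N}}$ in $t$. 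The principal obstacle, and the point at which the full strength of Theorem \ref{Workhorse} is indispensable, is to justify this combinatorial identification as a statement about genuine analytic generating functions rather than merely formal power series; this is precisely what the uniform validity and analyticity clauses of Theorem \ref{Workhorse} guarantee, ensuring that the Taylor expansion of $z_g$ in $t$ really does recover the map-counts $\kappa_j^{(g)}$.
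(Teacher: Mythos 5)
Your proposal is correct and takes essentially the same route as the paper, which obtains Theorem \ref{thm:31} from Theorem \ref{Workhorse} by setting $F(\lambda)=\lambda$, differentiating the expansion term by term in $t_1$, and then setting $t_1=0$, delegating the detailed justification to \cite{bib:emp08}. Your filling-in of that step — the Toda-type identity $\partial_{t_1}\pi_\ell = N b_\ell^2\,\pi_{\ell-1}$, the parity evaluations at $t_1=0$, and the telescoping of the diagonal kernel sum to $-N b_n^2$ — is precisely the mechanism behind the cited computation, so there is no substantive difference in approach.
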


\noindent   A 1-valent vertex together with its unique edge is called a \textit{leg}. An example of a 4-valent, 2-legged, $g$-map is shown in Figure \ref{gmap}.

\begin{rem} \label{cartographic} 
By this result, one may regard  ${z}_g(t, 1)$ as an \underline {exponential generating function} for counting inequivalent classes of 2-legged, 4-valent labelled  $g$-maps. Making our earlier variable replacement one has
\[
{z}_g\left(t,1\right) = {z}_g\left(\frac{r}{4},1\right) = \sum_{j=0}^\infty (-1)^j \frac{\kappa^{(g)}_j}{j!\, 4^j} r^j.
\]
Alternatively, one may consider  $\sum_{j=1}^\infty (-1)^j \hat{\kappa}^{(g)}_j r^j$, where
$ \hat{\kappa}^{(g)}_j =\frac{\kappa^{(g)}_j}{j!\, 4^j}$,
as an \underline{ordinary generating function} for {\rm unlabelled} 2-legged, 4-valent $g$-maps. Indeed, $j!$ is the size of the permutation group acting on vertex labels and $4^j$ is the size of the product of the cyclic groups acting on the distinguished edge labelling at each vertex. Then $j!\, 4^j$ is the cardinality of the orbit under the action of relabelling. This can be related to the action of the \underline{cartographic group} which acts as a subgroup of the group of permutations of all the half-edges, called \underline{darts}, attached to vertices. We refer the reader to \cite{bib:lz}, \cite{bib:emp08}(Section 5.10), and \cite{bib:vp} for more details on these matters, but the important upshot of these considerations is that due to the presence of legs in the maps being enumerated, there are no non-trivial equivalences of the type mentioned in section \ref{sec:mapsgraphs}. Consequently, $ \hat{\kappa}^{(g)}_j$  will always be an integer.
 In what follows we will be using $z_0(\frac{r}{4}, \alpha)$ where $z_0$ is uniquely determined by \eqref{eq:empex}. We note, however, that the coefficients in the Taylor-Maclaurin expansion of  $z_g(\frac{r}{4}, \alpha)$  alternate in sign and so must be respectively multiplied by $(-1)^j$ to recover 
$ \hat{\kappa}^{(g)}_j$.
\end{rem}

\begin{figure}[h] 
\begin{center} 
\includegraphics[width=.6\linewidth]{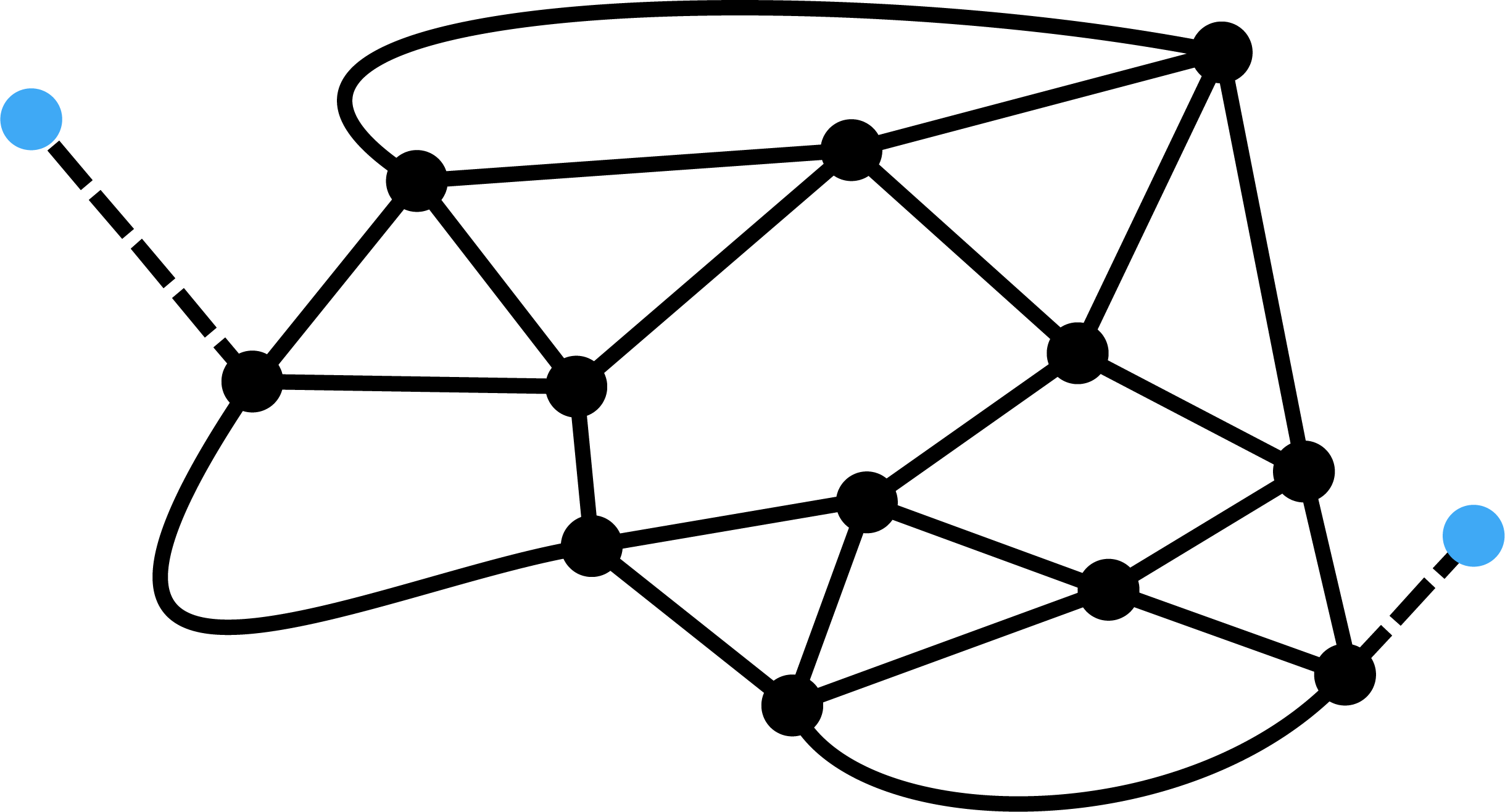}
\caption{Illustration of a 2-legged 4-valent map on the plane ($0$-map).}  \label{gmap}
\end{center}
\end{figure}
 
\noindent We will also make use of the following results, corresponding to Theorem B3 of \cite{bib:er}.
\begin{proposition}\label{cor:strip} \cite{bib:er} The asymptotic expansion (\ref{eq:empex}) is uniformly valid in a strip of constant width around the positive real $t$-axis. In addition, the coefficients $z_g(\frac{r}{4}, \alpha)$ have a maximal analytic continuation to the full complex $r$ plane minus the ray 
$(-\infty, -\frac{1}{12\alpha}]$.
\end{proposition}
\noindent We remark that this stated uniformity also follows independently from a  result due to Bleher and Its \cite{bib:bi}.
\medskip

Going further to solve for these generating functions, one can prove that the $z_g$ may be expressed as rational functions of $z_0$ \cite{bib:er}. In the case of $2\nu$-valent maps, this rational function takes the form:
\begin{equation} 
z_g=\dfrac{z_0(z_0-1)P_{3g-2}(z_0)}{(\nu-(\nu-1)z_0)^{5g-1}},
\label{eq:zgrationalform}
\end{equation} where $P_{3g-2}$ is a polynomial of degree $3g-2$, whose coefficients depend on $\nu$. This information is crucial for establishing our main results in Section \ref{sec:bridge} because it reduces the problem of finding $z_g$ to finding the \textit{finite} set of coefficients of $P_{3g-2}$. 
\begin{rem}
\label{rem:powers}
When $\nu = 2$, Equation \eqref{eq:zgrationalform} may be simplified as
\[
\label{eq:qpolyfac}
z_g=\dfrac{z_0(z_0-1)^{2 g} Q_{g-1}(z_0)}{(2-z_0)^{5g-1}},
\]
where $Q_{g-1}$ is a polynomial of degree $g-1$. An explanation is provided in Appendix \ref{app:z_counts}. Since this does not extend to the case $\nu > 2$, we continue our discussion of 4-valent maps by setting $\nu=2$ in the general form for $z_g$ stated in Equation \eqref{eq:zgrationalform}.
\end{rem}
The next statement concerns the structure of $z_0$. In \cite{bib:emp08}, the authors establish the form of $z_0$, which in our 4-valent case is given by: 
\begin{eqnarray} \label{z0}
z_0\left(\frac{r}{4},\alpha\right)=\dfrac{-1+\sqrt{1+ 12 \alpha r}}{6 \alpha r}, \qquad z_0\left(0,\alpha\right)= 1.
\end{eqnarray}
We note that $z_g$ is singular only at $z_0=2$, which corresponds to $r=-\frac1{12\alpha}$, consistent with the last statement of proposition \ref{cor:strip}.

In summary, the \textit{genus expansion} is the asymptotic expansion for the coefficients $b_n^2$ given in \eqref{eq:empex}, where the $z_g$ satisfy \eqref{eq:zgrationalform} and $z_0$ is expressed in \eqref{z0}. In particular, this expansion is uniformly valid for $\alpha = n/N$ sufficiently close to 1 and for all $r > 0$.

\subsection{The Center Manifold Expansion} \label{sec:background}
In \cite{bib:elt22}, we provided a dynamical systems description of certain non-polar orbits of the discrete Painlev\'e I Equation (dpI),
\begin{equation}
\label{eq:dpI}
x_{n+1}+x_n+x_{n-1} = \dfrac{n}{N\,r\,x_n}-\dfrac{1}{r}, \quad n \in \mathbb{N},\ x_n \in \mathbb{R}.
\end{equation}
Our focus was on solutions that remain positive for all $n \in \mathbb{N}$. 

\begin{rem} \label{rem:steq}
In Equation \eqref{eq:empex}, $z_0(t,\alpha)$ solves what is known as the string equation $1 = z_0(t,\alpha)+ 12\, t\, \alpha\, z_0^2(t,\alpha)$ (see Appendix \ref{sec:adapcent}).
From \eqref{eq:empex} one sees that as $n \to \infty$, $x_n \to \alpha z_0$. Applying this in \eqref{eq:dpI} and keeping in mind that $\alpha = n/N$, one immediately deduces that
$3 \alpha r z^2_0 + z_0 - 1 =0$, which is equivalent to the string equation with $t= r/4$. This shows that the string equation is nothing but the leading order form of the discrete Painlev\'e I equation in the continuum limit.
\end{rem}

It is natural to consider \eqref{eq:dpI} as a first order non-autonomous system in the $(x,y)$-phase plane given in terms of $(x_n,y_n) = (x_n, x_{n-1})$. In this formulation, the positivity condition of a solution becomes the requirement that it remains in the first quadrant. Such solutions are of particular interest since $x_n = b_n^2 > 0$ satisfies dpI when $b_n$ solves the recurrence relation 
\begin{equation}
\label{eq:rec2}
\lambda\, p_n(\lambda) = b_{n+1}\, p_{n+1}(\lambda) + b_n\, p_{n-1}(\lambda).
\end{equation}
In \eqref{eq:rec2}, the $p_n$ are orthonormal polynomials 
associated with the potential $V(\lambda)$ given in \eqref{quarticwt} and satisfy
\[
\int_{\mathbb{R}} p_n(\lambda) p_m(\lambda) w(\lambda) d \lambda = \delta_{nm}, \quad w(\lambda) = \exp\left(-N \left( \dfrac12 \lambda^2 + \dfrac{r}4 \lambda^4\right)\right).
\]
This should be contrasted with the monic orthogonal polynomials $\pi_n$ used in the previous section. However, the coefficients $b_n$ in \eqref{eq:rec2} are the same as in \eqref{rec}. We call the sequence of $x_n=b_n^2$ the Freud orbit, for $n>0$ \cite{bib:fre76}. The \textit{center manifold expansion} describes how $x_n = b_n^2$ depends on $n$ as $n \to \infty$ as a solution of dpI. It therefore provides information on the behavior of $b_n$ as $n \to \infty$ independently from the genus expansion. Matching the two in a region where they are both valid will give an expression for the coefficients of the polynomial $P_{3g -2}(z_0)$ appearing in Equation \eqref{eq:zgrationalform}, which in turn will lead to an expression for the generating functions $z_g$.

The approach of \cite{bib:elt22} in which the center manifold expansion is obtained, consists of the following elements. First, a change of variables 
\[
s = \frac{y}{x} + 1 + \frac{1}{r x}, \qquad f = \frac{n/N}{r x^2} - \frac{y}{x}, \qquad u = - \frac{1}{r x},
\]
transforms dpI, written as a 3-dimensional autonomous dynamical system in $(x,y,n)$ coordinates, into a system in $(s,f,u)$ coordinates that has two fixed points, $P_{-\infty}$ and $P_\infty$. Orbits that converge to $P_\infty$ (resp. $P_{-\infty}$) correspond to solutions of \eqref{eq:dpI} that grow without bounds as $n \to \infty$ (resp. $n \to -\infty$). Second, a proof that the Freud orbit converges to $P_\infty$, combined with compelling numerical evidence that this convergence occurs along the marginal eigendirection of the linearization about $P_\infty$, leads to the conjecture that the Freud orbit converges to $P_\infty$ along its center manifold. Third, an application of the center manifold theorem provides a Taylor expansion in powers of $u$ of the center manifold of $P_\infty$, valid to arbitrary order $p$:
\begin{equation}
\label{eq:cm}
s = s_\infty(u) = \sum_{j=1}^p s_j\, u^j + {\mathcal O}(u^{p+1}), \quad f = f_\infty(u) = \sum_{j=1}^p f_j\, u^j + {\mathcal O}(u^{p+1}).
\end{equation}
The coefficients $s_j$ and $f_j$ may be found explicitly order by order. Fourth, the change of variables from $(s,f,u)$ back to $(x,y,n)$, 
\[
x=-\frac{1}{r u}, \qquad y = - \frac{s+u-1}{r u}, \qquad \frac{n}{N} = \frac{s+f+u-1}{r u^2},
\]
requires that any orbit $\big(s_\infty(u_n),f_\infty(u_n),u_n\big)$ on the center manifold of $P_\infty$ should satisfy
\begin{equation}
\label{eq:unn_implicit}
\frac{n}{N} = \dfrac{s_\infty(u_n) + f_\infty(u_n) + u_n - 1}{r u_n^2} \\
\Longleftrightarrow \ \gamma\, n\, u_n^2 - u_n + 1 = s_\infty(u_n) + f_\infty(u_n),
\end{equation}
where $\gamma = r/N$ (for $\nu=2$).
 Finally, substituting a Laurent series in powers of $\sqrt n$ into the rightmost equation of \eqref{eq:unn_implicit} and solving term by term, leads to the following result.
\begin{theorem} \cite{bib:elt22}
In $(s,f,u)$ coordinates, the Freud orbit has the following asymptotic expansion
\begin{equation}
\label{eq:unn_explicit}
u_n = - \sqrt{\frac{3}{\gamma n}}-\frac{1}{2 \gamma  n} - \frac{1}{\left(8 \sqrt{3}\right) (\gamma n) ^{3/2}} + {\mathcal O}\left(n^{-5/2}\right) \quad \text{as } n \to \infty.
\end{equation}
\end{theorem}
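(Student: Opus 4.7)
The strategy is to solve the implicit equation (\ref{eq:unn_implicit}) perturbatively for $u_n$ as $n\to\infty$, using the Taylor expansions of $s_\infty$ and $f_\infty$ supplied by (\ref{eq:cm}). Since the term $\gamma n u_n^2$ must balance the $\mathcal{O}(1)$ contribution coming from $s_\infty(u_n)+f_\infty(u_n)$, one is forced to a Laurent-type ansatz in half-integer powers of $\gamma n$,
\[
u_n \;=\; \frac{c_1}{\sqrt{\gamma n}} + \frac{c_2}{\gamma n} + \frac{c_3}{(\gamma n)^{3/2}} + \mathcal{O}\!\left(n^{-2}\right).
\]
Substituting this into the right-most form of (\ref{eq:unn_implicit}), expanding $s_\infty(u_n)+f_\infty(u_n)$ as a power series in $u_n$ and then re-expanding in powers of $n^{-1/2}$, one obtains an identity between two formal series whose coefficients of $n^{-k/2}$ can be matched for $k=0,1,2,3$.

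At order $n^0$ the balance determines $c_1^2$; the sign of $c_1$ is pinned down by the constraint $u_n = -1/(r b_n^2) < 0$, which selects the negative root and yields $c_1 = -\sqrt{3}$, consistent with the leading term $-\sqrt{3/(\gamma n)}$ claimed in the theorem. The matchings at orders $n^{-1/2}$ and $n^{-1}$ then produce linear equations for $c_2$ and $c_3$ in terms of the lower-order $c_i$ and of the first few Taylor coefficients $s_j+f_j$ of the center manifold profile, yielding $c_2 = -1/2$ and $c_3 = -1/(8\sqrt{3})$.

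The main labor is therefore twofold. First, one must execute the standard center manifold reduction for the vector field in $(s,f,u)$-coordinates so as to compute $s_j$ and $f_j$ for $j \leq 2$: this is an iterative algebraic step in which $s = s_\infty(u)$ and $f = f_\infty(u)$ are substituted into the flow equations and the coefficients are solved for order by order, starting from the eigendata of the linearization at $P_\infty$. Second, one performs the series substitution and matching described above, tracking all cross terms of the form $c_i c_j$ arising from $u_n^2$, $u_n^3$, and $u_n^4$. The principal obstacle is purely bookkeeping: each order in $n^{-1/2}$ mixes contributions from several Taylor coefficients of $s_\infty$ and $f_\infty$ with products of the $c_i$, so a symbolic computation system is essentially required to avoid sign or index errors. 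The asymptotic (rather than merely formal) status of the resulting expansion is inherited from the convergence of the Freud orbit to the center manifold of $P_\infty$ established in \cite{bib:elt22}, which controls the remainder beyond any fixed truncation order.
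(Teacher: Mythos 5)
Your proposal is correct and follows essentially the same route as the paper (and as the cited source \cite{bib:elt22}): substitute a Laurent ansatz in powers of $n^{-1/2}$ into the implicit relation \eqref{eq:unn_implicit} using the Taylor data of the center manifold, solve term by term with the sign of the leading coefficient fixed by $u_n = -1/(r\,x_n) < 0$, and obtain genuine asymptotic (not merely formal) validity from the Taylor-remainder control of the center manifold expansion together with $u_n \to 0$ as $n \to \infty$.
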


\noindent This expansion may be continued to arbitrary order by appropriately selecting the order $p$ to which the Taylor expansions $s_\infty(u)$ and $f_\infty(u)$ are pushed in \eqref{eq:cm}. Moreover, because the Taylor remainder theorem provides control on the ${\mathcal O}(u^{p+1})$ terms in \eqref{eq:cm} as $u \to 0$, and because the Laurent series for $u_n$ on the Freud orbit is such that $u_n \to 0$ as $n \to \infty$, the expansion \eqref{eq:unn_explicit} is asymptotic as $n \to \infty$. Since $x_n = -1/(r u_n)$, Equation \eqref{eq:unn_explicit} leads to the \textit{center manifold expansion} of $x_n = b_n^2$ in powers of $n^{1/2}$. Although the existence of such an expansion was known \cite{bib:mnz85}, the dynamical systems context illuminates the special nature of the Freud orbit as a solution of dpI.

\section{Bridging the Two Expansions}\label{sec:bridge}

The previous section introduces two different asymptotic expansions of $x_n = b_n^2$ as $n \to \infty$, one arising from the setting of map enumeration, the other arising from dynamical systems theory. The aim of this section is to write these expansions in a common form, so that they can be equated. Caution should of course be exercised to ascertain that such a matching occurs in a region where both expansions are valid.

\subsection{Statement of both expansions}
 We start by recording both  expansions, to make clear which assumptions they involve and where they are valid. 

 \textit{The genus expansion.} Let $\alpha={n}/{N}$ be in a neighborhood of 1, and take $r>0$. Then Equation \eqref{eq:empex} tells us that the coefficients of the recurrence relation \eqref{rec}, which we now denote by $b_n^2 = x_{n,N,r}$, have the following expansion in terms of the generating functions $z_g(n,N,r)$:
\begin{equation} \label{eq:genusexpansion}
x_{n,N,r} \sim \alpha \left(\sum_{g=0}^\infty \dfrac{z_g(n,N,r)}{n^{2g}}\right),
\end{equation}
so that 
\begin{equation*}\label{eq:k1}
\left| x_{n,N,r}- \alpha \sum_{g=0}^m \dfrac{z_g(n,N,r)}{n^{2g}}\right| < \dfrac{K_{1,m}(N,r)}{n^{2m+2}}, \ \ \  m\geq0
\end{equation*}
uniformly for $\alpha \simeq 1$ and $r>0$. 
\begin{rem}
On the notation of constants: due to the presence of many error bounds, we will use the indexed constants $K_i$ or $K_{i,m}$ throughout the rest of this paper. The subscript $i$ will denote the order of appearance in this paper whereas $m$ will specify the largest index in the expansion.
\end{rem}

 \textit{The center manifold expansion.} For $N,r>0$ fixed, Equation \eqref{eq:unn_explicit}, together with the change of variable $x=-1/(r u)$, tells us that the recurrence coefficients $x_{n,N,r}$ are of the form
\begin{equation} \label{eq:centerxpansion}
x_{n,N,r}\sim \sum_{k=-1}^\infty \dfrac{c_k}{n^{k/2}},
\end{equation} so that 
\begin{equation} \label{eq:k2}
\left| x_{n,N,r} -\sum_{k=-1}^m \dfrac{c_k}{n^{k/2}} \right| < \dfrac{K_{2,m}(N,r)}{n^{(m+1)/2}}, \ \ \  m\geq-1.
\end{equation}
The nature of the dependence of $c_k$ and
$K_{2,m}$ on the parameters $N$ and $r$ will be revisited later in a rescaling argument.

There are two main challenges in relating these expansions. The first is that they are in different gauges, with the genus  expansion in $n^2$ versus the center manifold expansion in $\sqrt{n}$. The second challenge stems from their different regimes of uniform validity with respect to parameters; in the genus expansion, $N$ and $n$ go to infinity in a double scaling limit keeping $r > 0$ free, while for the center manifold expansion $N$ and $r$ are fixed but arbitrary as $n$ goes to infinity. We address both of these challenges in the next two sections, first by converting the genus expansion to the $\sqrt{n}$ gauge, and then by leveraging a rescaling argument for the center manifold expansion that allows us to send both $N$ and $r$ to infinity together with $n$, as depicted by the red arrow in the middle panel of Figure \ref{fig:val}.

\subsection{Genus Expansion in $\sqrt{n}$}
 Recall the explicit formula \eqref{z0} for the genus 0 generating function, which we now express as a function of $n$ and $\gamma = r/N$:
\begin{equation*}
z_0(n,\gamma)=\dfrac{-1+\sqrt{1+12 n\gamma}}{6  n\gamma}.
\end{equation*}
For $1/12<n \gamma$ it is straightforward to express $z_0$ as a convergent Laurent series in $\sqrt{n}$: 
\begin{equation}\label{eq:z0pu}
z_0(n,\gamma) = \sum_{i=1}^\infty a_{i,0} \left(\dfrac{1}{n\gamma}\right)^{i/2} = \sum_{i=1}^\infty a_{i,0}(\gamma) \left(\dfrac{1}{n}\right)^{i/2}.
\end{equation}
One can explicitly write the $a_{i,0}$ in terms of the Newton combinatorial coefficient $1/2 \choose i$, but for collecting terms in the expansion \eqref{eq:adapgenuspsnu2} below it will be more convenient to leave \eqref{eq:z0pu} as it is, with indices evident. Given the rational form \eqref{eq:zgrationalform} for $z_g$ in terms of $z_0$ (recall that for clarity $\nu$ is set equal to $2$ in the body of this article),
\begin{equation} \label{eq:rat2}
z_g=\frac{z_0(z_0-1)P_{3g-2}(z_0)}{(2-z_0)^{5g-1}},
\end{equation}
one derives a similar convergent series in $\sqrt{n}$ for the $z_g$:
\begin{equation}\label{eq:zgexphalf}
z_g(n,\gamma) = \sum_{i=1}^\infty a_{i,g} \left(\dfrac{1}{n\gamma}\right)^{i/2} =\sum_{i=1}^\infty a_{i,g}(\gamma) \left(\dfrac{1}{n}\right)^{i/2}.
\end{equation} 
The derivation is a simple application of the substitution of convergent series, whose validity for $1/12<n \gamma$ becomes apparent once one notes that $z_0$ is bounded between 0 and 1 for positive $n,N,r$. Let us denote the unknown coefficients of the polynomial $P_{3g-2}$ as follows:
\begin{equation*}
P_{3g-2}(z_0)=\beta_{0,g}+\beta_{1,g}z_0 + \cdots +\beta_{3g-2,g}(z_0)^{3g-2}.
\end{equation*}
\noindent 
\begin{rem}\label{rem:abetatri}
For $i \leq 3g-1$,  $a_{i,g}$ takes the form:
\begin{equation}
a_{i,g} =\left(\dfrac{-1}{2^{5g-1}3^{i/2}} \right)\beta_{i-1,g}+ L_{i,g}\label{eq:aigleadinglemma},
\end{equation} where $L_{i,g}$ is linear in $\beta_{j,g}$ for $j< i-1$. Proof of this fact is a consequence of the more general result for valence $2 \nu$ (see Lemma \ref{lemma:zgcoeffdependence} of Appendix \ref{sec:adaptgenus}) and simply amounts to collecting terms at the appropriate order. Given the form of dependence described in equation \eqref{eq:aigleadinglemma}, one finds that solving for these $\beta_{i,g}$ is achieved by solving a simple triangular system. In the appendices, the quantities $a_{i,g}(\gamma) := a_{i,g}/\gamma^{i/2}$ are denoted by $a_{i,g,2}(\gamma)$ to indicate that $\nu = 2$.
\end{rem}  

With the $z_g$ expressed as convergent series in inverse half powers of $n$, we can derive the following bivariate expansion for $x_n$ in the asymptotic gauge $\sqrt{n}$.

\begin{lemma}\label{lemma:bivariate}
Let $\alpha= \frac{n}{N}$ be in a neighborhood of 1, and let $\xi= \frac{n}{r}$ be fixed or bounded. Then, for $n$ large, $x_{n}$ has an asymptotic expansion as $n,N, r  \rightarrow \infty$, at relative rates given by $\alpha$ and $\xi$. The precise meaning of this is as follows. Define the partial sums 
\begin{equation}\label{eq:adapgenuspsnu2}
    \mathcal{G}^{(d)}:=\alpha\sum_{j=0}^{\lfloor \frac{d-1}{4} \rfloor}\sum_{i=1}^{d-4j}\dfrac{a_{i,j}(\gamma)}{n^{2j+i/2}}.
\end{equation}
Then for $n$ large we have the approximation:
\begin{equation}
|x_{n,\frac{n}{\alpha},\frac{n}{\xi}} - \mathcal{G}^{(d)}|<  \dfrac{K_{3,d}(\alpha, \gamma)}{n^{{(d+1)}/{2}}}, \ \ \  d\geq1. \label{eq:doublexpansionerror}
\end{equation}
In the above, the parameter $\gamma = r/ N = \alpha/\xi$ is finite and independent of $n$.
\end{lemma}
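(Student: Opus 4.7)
The plan is to combine two truncations---of the genus expansion \eqref{eq:genusexpansion} in powers of $n^{-2}$, and of the convergent Laurent expansion of each $z_g$ in powers of $n^{-1/2}$---and to recognize $\mathcal{G}^{(d)}$ as precisely the collection of terms of joint order at most $n^{-d/2}$ produced by this double truncation. First I would apply the genus expansion at a suitable order $m = m(d)$; second, I would replace each $z_g(n,\gamma)$ (for $0 \leq g \leq m$) by its Laurent series in $n^{-1/2}$ truncated at an order tailored so that the surviving tail contributes at most $O(n^{-(d+1)/2})$ after division by $n^{2g}$.

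Concretely, I would pick $m$ with $m \geq \lceil (d-3)/4 \rceil$ (so that $2m+2 \geq (d+1)/2$) and $m \geq \lfloor (d-1)/4 \rfloor$ (so that every genus appearing in $\mathcal{G}^{(d)}$ is covered). By the genus expansion and the error bound \eqref{eq:k1}, strengthened by Corollary \ref{cor:strip} to permit $r > 0$ arbitrary,
\[
\left| x_{n,N,r} - \alpha \sum_{g=0}^{m} \frac{z_g(n,N,r)}{n^{2g}} \right| \leq \frac{K_{1,m}(N,r)}{n^{2m+2}} \leq \frac{K_{1,m}(N,r)}{n^{(d+1)/2}}.
\]
Under the reparametrization $N = n/\alpha$, $r = n/\xi$, the ratio $\gamma = r/N = \alpha/\xi$ is fixed, and $K_{1,m}$ is then controlled by a function of $(\alpha,\gamma)$ alone.

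Next I would expand each $z_g$ in $n^{-1/2}$. Equation \eqref{z0} together with a binomial expansion gives $z_0(n,\gamma) = \sum_{i \geq 1} a_{i,0}(\gamma)\, n^{-i/2}$ whenever $12 n \gamma > 1$. Because the rational representation \eqref{eq:rat2} writes $z_g$ as $z_0(z_0-1) P_{3g-2}(z_0)/(2-z_0)^{5g-1}$ with $z_0(n,\gamma) \in (0,1)$ in the regime of interest, substitution of the $z_0$ series produces a convergent Laurent series $z_g(n,\gamma) = \sum_{i \geq 1} a_{i,g}(\gamma)\, n^{-i/2}$ on the same range, together with a tail bound of the form
\[
\left| z_g(n,\gamma) - \sum_{i=1}^{I} \frac{a_{i,g}(\gamma)}{n^{i/2}} \right| \leq \frac{C_g(\gamma)}{n^{(I+1)/2}}
\]
for $n$ large. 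Choosing $I = d - 4g$ when $g \leq \lfloor (d-1)/4 \rfloor$ gives a contribution to the total error bounded by $|\alpha|\, C_g(\gamma)/n^{(d+1)/2}$ after division by $n^{2g}$. For the remaining genera $\lfloor (d-1)/4 \rfloor < g \leq m$, the leading behavior $z_g = O(n^{-1/2})$ yields $z_g/n^{2g} = O(n^{-2g-1/2})$, and $2g + 1/2 \geq (d+1)/2$ in this range, so these terms also contribute $O(n^{-(d+1)/2})$. Summing the finitely many contributions over $g$ and combining them with the genus-expansion remainder gives \eqref{eq:doublexpansionerror} with $K_{3,d}(\alpha,\gamma)$ equal to the resulting sum of constants.

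The main obstacle is establishing uniformity as $(n,N,r) \to \infty$ along the scaling $N = n/\alpha$, $r = n/\xi$ with $\alpha \simeq 1$ and $\gamma = \alpha/\xi$ fixed. Two uniformities are needed: that of $K_{1,m}(N,r)$ when $r$ is not bounded (supplied by Corollary \ref{cor:strip}, which extends the region of validity to a strip around the whole positive $t$-axis); and that of the Laurent tails $C_g(\gamma)$ of each $z_g$, which reduces---via \eqref{eq:rat2}---to uniformity of the binomial tails for $z_0$ combined with the fact that the denominator $(2-z_0)^{5g-1}$ is bounded away from zero throughout the parameter region. Once both ingredients are in hand, the remainder is bookkeeping to balance the two truncation orders so that the combined errors collapse to the single bound $K_{3,d}(\alpha,\gamma)/n^{(d+1)/2}$.
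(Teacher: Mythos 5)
Your proposal is correct and follows essentially the same route as the paper's proof (Lemma \ref{lemma:genusadaptbound} with $\nu=2$): truncate the genus expansion using the $r$-uniformity of its error constants, replace each $z_g$ by its truncated convergent series in $n^{-1/2}$ with controlled tails, and combine via the triangle inequality, exploiting that each $z_g(\gamma)=\mathcal{O}(n^{-1/2})$ in this scaling. The only difference is bookkeeping: the paper fixes $m=\lfloor\frac{d-1}{4}\rfloor$ and gains the extra half power through a reverse-triangle-inequality step absorbing the $(m+1)$-st genus term (plus the floor inequality of Lemma \ref{lemma:floorineq}), whereas you take a slightly larger $m$ and then discard the surplus genus terms directly.
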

\noindent The generalization of Lemma \ref{lemma:bivariate} to valence $2 \nu$ is proven in Lemma \ref{lemma:genusadaptbound} of Appendix \ref{sec:adaptgenus}. 

\subsection{Scaling Properties of the Center Manifold Expansion}

We now move to extending the regime of the center manifold expansion to variable $N$ and $r$, in order to allow these parameters to tend to infinity, so that we may equate coefficients with those of \eqref{eq:adapgenuspsnu2}. First, observe the following rescaling of the Freud orbit:

\begin{lemma}\label{lemma:orbitrescale}
The Freud orbit satisfies the rescaling relation
\begin{equation}
    x_{n,N,r} = \frac{1}{n} x_{n,\frac{N}{n},\frac{r}{n}}.
\end{equation}
\end{lemma}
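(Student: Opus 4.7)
The plan is to deduce the rescaling from an exact scaling symmetry of the weight $w(\lambda) = e^{-V(\lambda)}$ with $V$ as in \eqref{quarticwt}, rather than to derive it directly from the discrete Painlev\'e I equation \eqref{eq:dpI}. The weight-based route is cleaner because the Freud orbit is canonically defined by $w$, so uniqueness is automatic, whereas the uniqueness of a positive dpI solution is the more delicate subject of \cite{bib:elt22}.

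The key step is to perform the affine substitution $\lambda = c^{1/2}\mu$ with $c > 0$, under which
\[
V(\lambda) = N\!\left(\tfrac{1}{2}\lambda^2 + \tfrac{r}{4}\lambda^4\right) = (Nc)\!\left(\tfrac{1}{2}\mu^2 + \tfrac{rc}{4}\mu^4\right),
\]
so that, up to the Jacobian factor $c^{1/2}\,d\mu$, the weight with parameters $(N,r)$ pulls back to the weight with parameters $(Nc, rc)$. The uniqueness of monic orthogonal polynomials for a positive weight then forces
\[
\pi_m\!\left(c^{1/2}\mu;\, N, r\right) = c^{m/2}\,\pi_m(\mu;\, Nc, rc),
\]
and a short computation of the normalizations $h_m = \int \pi_m^2\, w\, d\lambda$ yields $h_m(N, r) = c^{\,m+1/2}\, h_m(Nc, rc)$. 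Taking the ratio $b_m^2 = h_m/h_{m-1}$ gives the one-parameter scaling law
\[
x_{m, N, r} = c \cdot x_{m, Nc, rc}, \qquad m \geq 1,\ c > 0.
\]
Specializing to $c = 1/n$ and $m = n$ yields exactly the stated identity $x_{n,N,r} = \tfrac{1}{n}\, x_{n,\, N/n,\, r/n}$.

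There is no serious obstacle here: the argument amounts to careful bookkeeping on a linear change of variable, with the only delicate point being that the Jacobian $c^{1/2}$ combines with the $c^m$ coming from $\pi_m^2$ to produce $c^{m+1/2}$ in $h_m$, and hence a net factor $c$ in the ratio $b_m^2$. As a consistency check, the same scaling is visible at the dpI level: the substitution $y_m = c\, x_{m, N, r}$ turns \eqref{eq:dpI} into dpI with parameters $(N/c, r/c)$; this dpI-side argument would, however, have to be supplemented with an initial-condition or uniqueness statement to close by itself, which is precisely the complication the weight-based proof avoids.
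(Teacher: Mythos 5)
Your proof is correct, and it takes a route that differs in its bookkeeping from the paper's. Both arguments rest on the same underlying scaling symmetry of the weight under $\lambda = c^{1/2}\mu$, but the paper (Theorem \ref{thm:orbitrescale} in Appendix \ref{sec:generalnuproof}) channels it through the moments: it computes $\mu_m(\sigma N,\sigma^{\nu-1}r_{2\nu})=\sigma^{-(m+1)/2}\mu_m(N,r_{2\nu})$ by the same substitution, then invokes Szeg\H{o}'s Hankel determinant formula $x_n = D_{n-2}D_n/D_{n-1}^2$ and tracks the homogeneity of $D_n$ in the moments to get $x_{n,\sigma N,\sigma^{\nu-1}r_{2\nu}}=\sigma^{-1}x_{n,N,r_{2\nu}}$, of which the lemma is the $\nu=2$, $\sigma=1/n$ case. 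You instead apply the substitution directly to the orthogonality relations, use uniqueness of monic orthogonal polynomials for a positive weight to get $\pi_m(c^{1/2}\mu;N,r)=c^{m/2}\pi_m(\mu;Nc,rc)$, and read off the scaling of $b_m^2=h_m/h_{m-1}$ from the norms $h_m$; your exponent bookkeeping ($h_m(N,r)=c^{m+1/2}h_m(Nc,rc)$, hence a net factor $c$ in the ratio) checks out. Your version is arguably more transparent, since it avoids determinants and makes the net factor of $c$ visible as a single cancellation of $c^{m+1/2}/c^{m-1/2}$; the paper's determinant route requires no knowledge of the recurrence-coefficient/norm identity and transfers verbatim to the general even weight \eqref{eq:generalweight}, where the correct scaling is $(N,r_{2\nu})\mapsto(\sigma N,\sigma^{\nu-1}r_{2\nu})$ --- your substitution handles that case just as easily, since $\lambda=c^{1/2}\mu$ sends it to $(Nc, r_{2\nu}c^{\nu-1})$. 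Your closing remark about the dpI-side consistency check, and why it would need a separate uniqueness or initial-condition argument to stand alone, is also well taken and matches the paper's choice not to argue at the level of \eqref{eq:dpI}.
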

\noindent The proof of this lemma follows as a special case of Theorem \ref{thm:orbitrescale} found in Appendix \ref{sec:generalnuproof}. 
The explicit dependence of the coefficients $c_k$ of the center manifold expansion \eqref{eq:centerxpansion} in terms of $N$ and $r$ is given by the following lemma.

  \begin{lemma}\label{lemma:nrdegreesub}
The coefficients $c_k(N,r)$ of the center manifold expansion \eqref{eq:centerxpansion} satisfy the rescaling condition
\begin{equation}
    c_i( \sigma N, \sigma r) = \dfrac{1}{\sigma}c_i(N,r).
\end{equation}
\end{lemma}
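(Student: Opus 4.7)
The plan is to deduce the lemma from a homogeneity property of the Freud orbit itself, combined with the uniqueness of coefficients of an asymptotic expansion in the gauge $\{n^{-k/2}\}_{k \geq -1}$. Concretely, I would first establish the functional identity
$$x_{n,\sigma N,\sigma r} \;=\; \frac{1}{\sigma}\,x_{n,N,r} \qquad \text{for all } \sigma > 0,\ n \in \mathbb{N},$$
and then substitute it into both sides of the center manifold expansion \eqref{eq:centerxpansion}.

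For the scaling identity, the cleanest route is through the orthogonal polynomial description $x_{n,N,r} = b_n^2$. The change of variable $\lambda = \sqrt{\sigma}\,\mu$ sends the weight $w(\lambda) = \exp(-N(\lambda^2/2 + r\lambda^4/4))$ to the weight with parameters $(\sigma N,\sigma r)$, and the rescaled polynomials $\tilde{\pi}_n(\mu) := \sigma^{-n/2}\pi_n(\sqrt{\sigma}\,\mu;N,r)$ are then monic-in-$\mu$ and orthogonal with respect to this new weight (the Jacobian $\sqrt{\sigma}$ is a positive constant and does not affect the vanishing of off-diagonal integrals). By uniqueness of monic orthogonal polynomials, $\tilde{\pi}_n(\mu) = \pi_n(\mu;\sigma N,\sigma r)$. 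Substituting $\lambda = \sqrt{\sigma}\,\mu$ into the three-term recurrence \eqref{rec} and dividing by $\sigma^{(n+1)/2}$ yields
$$\mu\,\tilde{\pi}_n(\mu) \;=\; \tilde{\pi}_{n+1}(\mu) \;+\; \sigma^{-1}\,b_n^2(N,r)\,\tilde{\pi}_{n-1}(\mu),$$
from which $b_n^2(\sigma N,\sigma r) = \sigma^{-1}\,b_n^2(N,r)$. This is the natural $\sigma$-parameter generalization of Lemma \ref{lemma:orbitrescale} (which is the instance $\sigma = 1/n$), and is consistent with Theorem \ref{thm:orbitrescale} in Appendix \ref{sec:generalnuproof}.

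With the scaling identity in hand, the conclusion is immediate. Writing \eqref{eq:centerxpansion} at the rescaled parameters gives $x_{n,\sigma N,\sigma r} \sim \sum_{k \geq -1} c_k(\sigma N,\sigma r)\,n^{-k/2}$, while \eqref{eq:centerxpansion} at $(N,r)$ combined with the identity gives $x_{n,\sigma N,\sigma r} = \sigma^{-1}\,x_{n,N,r} \sim \sigma^{-1} \sum_{k \geq -1} c_k(N,r)\,n^{-k/2}$. Uniqueness of the coefficients of an asymptotic series in a fixed gauge then forces $c_k(\sigma N,\sigma r) = \sigma^{-1}\,c_k(N,r)$ for each $k \geq -1$. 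The only delicate point is ensuring that the scaling really carries the Freud orbit at $(N,r)$ onto the Freud orbit at $(\sigma N,\sigma r)$, rather than onto some other positive solution of dpI; the orthogonal polynomial derivation above resolves this automatically, since the Freud orbit is read off directly from the transformed recurrence coefficients.
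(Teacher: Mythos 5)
Your proof is correct, but it proceeds along a genuinely different route from the paper's. The paper proves this lemma (in its general form, Lemma \ref{lemma:appnrdegreesub}) by working directly with the Freud/discrete string equation \eqref{eq:gennurec}: each coefficient $c_m$ is determined by a dominant-balance relation at a given order in $n^{-1/\nu}$, and the homogeneity of the path polynomial $M_\nu$ lets the scaling be propagated by strong induction, starting from the balance $1/N = r\,c_{-1}^2\,M_2(\vec 1)$; the exact orbit scaling of Theorem \ref{thm:orbitrescale} is proved separately (via moments and Hankel determinants) and is not used for this lemma. You instead prove the exact identity $x_{n,\sigma N,\sigma r}=\sigma^{-1}x_{n,N,r}$ first -- your substitution $\lambda=\sqrt{\sigma}\,\mu$ together with uniqueness of monic orthogonal polynomials is essentially the same change-of-variables mechanism as the paper's moment computation, just packaged through the recurrence \eqref{rec} rather than through Hankel determinants -- and then deduce the coefficient scaling from uniqueness of the coefficients of an asymptotic expansion in the fixed gauge $\{n^{-k/2}\}$, which is legitimate since \eqref{eq:centerxpansion}--\eqref{eq:k2} hold for every fixed pair of positive parameters, in particular at $(\sigma N,\sigma r)$. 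Your approach buys economy: one identity yields both Theorem \ref{thm:orbitrescale} (hence Lemma \ref{lemma:orbitrescale}) and the present lemma, and your closing remark correctly disposes of the only delicate point, namely that the rescaled sequence is again the Freud orbit for the rescaled weight. The paper's induction buys something different: it operates purely at the level of the formal series determined by the string equation, so it does not need to invoke the validity of the asymptotic expansion at the rescaled parameters, and it displays explicitly how the homogeneity of $M_\nu$ forces the scaling order by order, which is the structural fact exploited again in the $2\nu$-valent generalization $c_i(\sigma N,\sigma^{\nu-1}r_{2\nu})=\sigma^{-1}c_i(N,r_{2\nu})$ (your change of variables generalizes as well, with $\lambda=\sqrt{\sigma}\,\mu$ sending $(N,r_{2\nu})$ to $(\sigma N,\sigma^{\nu-1}r_{2\nu})$, so this is a difference of emphasis rather than of reach).
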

\noindent The proof is given for the more general case of even, regular valence  in Lemma \ref{lemma:appnrdegreesub} of Appendix \ref{sec:adapcent}. Letting $n \to \infty$ while keeping $\alpha = \frac{n}{N}$ and $\xi = \frac{n}{r}$ constant in the above lemmas leads to the following theorem. 

\begin{theorem}\label{lemma:bmnnew}
Let $\alpha$ be in a neighborhood of 1, and let $\xi$ be bounded above and away from 0. Then, for $n$ large we have the following approximation:
\begin{equation}\label{eq:nu2centerrescaled}
   \left|x_{n,\frac{n}{\alpha},\frac{n}{\xi}}- \sum_{k=-1}^m \dfrac{c_{k}(\frac{1}{\alpha},\frac{1}{\xi})}{n^{1+k/2}} \right|
< \dfrac{K_{2,m}(1/\alpha,1/\xi)}{n^{(m+3)/2}}, \ \ \  m\geq-1.
\end{equation}
\end{theorem}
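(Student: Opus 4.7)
The plan is to obtain this theorem as an essentially immediate consequence of Lemma \ref{lemma:orbitrescale} combined with the center manifold expansion of Section \ref{sec:background}, with no new dynamical input required. The key observation is that the rescaling relation lets us trade the $n$-dependent parameters $(n/\alpha, n/\xi)$ appearing on the left-hand side for the $n$-independent parameters $(1/\alpha, 1/\xi)$, at which the center manifold expansion already applies in its originally stated form.

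First I would apply Lemma \ref{lemma:orbitrescale} with $N = n/\alpha$ and $r = n/\xi$ to obtain the identity
\[
x_{n,\frac{n}{\alpha},\frac{n}{\xi}} \,=\, \frac{1}{n}\, x_{n,\frac{1}{\alpha},\frac{1}{\xi}}.
\]
Under the hypotheses that $\alpha$ lies in a neighborhood of $1$ and $\xi$ is bounded above and away from $0$, the parameters $(1/\alpha, 1/\xi)$ lie in a fixed compact subset of $(0,\infty)^2$ independent of $n$, which places them squarely in the regime where the center manifold expansion \eqref{eq:centerxpansion}--\eqref{eq:k2} is valid.

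Next I would invoke that expansion at the fixed parameters $(N,r) = (1/\alpha, 1/\xi)$, namely
\[
\left|\, x_{n,\frac{1}{\alpha},\frac{1}{\xi}} - \sum_{k=-1}^m \dfrac{c_{k}(1/\alpha, 1/\xi)}{n^{k/2}}\, \right| < \dfrac{K_{2,m}(1/\alpha, 1/\xi)}{n^{(m+1)/2}},
\]
and divide both sides by $n$. Substituting the identity from the previous step on the left then yields the claimed bound exactly. The factor of $1/n$ coming from the rescaling accounts precisely for the shift from $n^{k/2}$ to $n^{1+k/2}$ in the expansion and from $n^{(m+1)/2}$ to $n^{(m+3)/2}$ in the error term.

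The argument is essentially a one-line manipulation, so the only real task is a bookkeeping check rather than a substantive obstacle. I note that Lemma \ref{lemma:nrdegreesub} is not logically needed for this derivation, but it provides a useful internal consistency check: applying the center manifold expansion directly at $(n/\alpha, n/\xi)$ and then invoking Lemma \ref{lemma:nrdegreesub} with $\sigma = n$ to rewrite $c_k(n/\alpha, n/\xi) = (1/n)\, c_k(1/\alpha, 1/\xi)$ reproduces the same leading-order asymptotic form as the one stated in the theorem, confirming that the two rescaling lemmas act compatibly on the expansion.
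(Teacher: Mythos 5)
Your proposal is correct and is essentially identical to the paper's proof (given in Lemma \ref{lemma:centeradaptbound} for general $\nu$): rescale the Freud orbit via the identity $x_{n,\frac{n}{\alpha},\frac{n}{\xi}} = \frac{1}{n}\,x_{n,\frac{1}{\alpha},\frac{1}{\xi}}$, apply the fixed-parameter center manifold estimate \eqref{eq:k2} at $(1/\alpha,1/\xi)$, and divide by $n$. Your observation that Lemma \ref{lemma:nrdegreesub} is not strictly needed for the bound as stated (since the partial sum already uses $c_k(1/\alpha,1/\xi)$) is also accurate.
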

\noindent As before, we provide the proof, for general even valence, in Lemma \ref{lemma:centeradaptbound} of Appendix \ref{sec:compare}. This result enables a single vertical ray in the middle panel of Figure \ref{fig:val} to be extended along the direction given by the red arrow.

\subsection{Comparison of the two expansions}
\label{sec:comps}

Thus far, we have reformulated the center manifold and genus expansions to use the same gauge in an overlapping parameter regime. The following theorem establishes the equivalence of the adapted expansions.
\begin{theorem}\label{thm:comparison}
Let $\alpha=\dfrac{n}{N}$ be in a neighborhood of 1, and let $\xi=\dfrac{n}{r}$ be bounded above and away from 0. Then, for $n$ large the difference between the genus expansion in gauge $\sqrt{n}$ and the center manifold expansion can be bounded
 \begin{equation}
\left|\mathcal{G}^{(m+2)}- \sum_{k=-1}^{m} \dfrac{c_k(\frac{1}{\alpha},\frac{1}{\xi})}{n^{1+k/2}} \right|  <  \dfrac{K_{4,m}(1/\alpha,1/\xi)}{n^{(m+3)/2}} \label{eq:app72thm}, \ \ \  m\geq-1.
\end{equation}  
\end{theorem}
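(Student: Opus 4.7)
The plan is to compare the two partial sums indirectly, via the common object $x_{n,n/\alpha,n/\xi}$ which each of them is known to approximate. Since Lemma~\ref{lemma:bivariate} and Theorem~\ref{lemma:bmnnew} both provide sharp bounds on how well $x_{n,n/\alpha,n/\xi}$ is captured by $\mathcal{G}^{(d)}$ and by the truncated center manifold expansion respectively, a triangle inequality with appropriately matched truncation orders should yield the desired estimate essentially for free.

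First I would align the orders. Setting $d = m+2$ in Lemma~\ref{lemma:bivariate} gives
\begin{equation*}
\bigl| x_{n,n/\alpha,n/\xi} - \mathcal{G}^{(m+2)} \bigr| < \frac{K_{3,m+2}(\alpha,\gamma)}{n^{(m+3)/2}}, \quad m \geq -1,
\end{equation*}
so that the residual in the genus expansion matches the target order $n^{-(m+3)/2}$. Theorem~\ref{lemma:bmnnew} already delivers exactly that same order for the center manifold truncation,
\begin{equation*}
\left| x_{n,n/\alpha,n/\xi} - \sum_{k=-1}^{m} \frac{c_k(1/\alpha,1/\xi)}{n^{1+k/2}} \right| < \frac{K_{2,m}(1/\alpha,1/\xi)}{n^{(m+3)/2}}.
\end{equation*}

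Second, I would add and subtract $x_{n,n/\alpha,n/\xi}$ inside the absolute value in \eqref{eq:app72thm} and invoke the triangle inequality together with the two displays above to obtain
\begin{equation*}
\left| \mathcal{G}^{(m+2)} - \sum_{k=-1}^{m} \frac{c_k(1/\alpha,1/\xi)}{n^{1+k/2}} \right| \leq \frac{K_{3,m+2}(\alpha,\gamma) + K_{2,m}(1/\alpha,1/\xi)}{n^{(m+3)/2}}.
\end{equation*}
Using $\gamma = \alpha/\xi$ to regard the first summand as a function of $(1/\alpha,1/\xi)$, one simply sets
$K_{4,m}(1/\alpha,1/\xi) := K_{3,m+2}(\alpha,\gamma) + K_{2,m}(1/\alpha,1/\xi)$,
which produces exactly the bound claimed in \eqref{eq:app72thm}.

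The only point requiring a bit of care — and, I expect, the only real obstacle — is the compatibility of the hypotheses of the two prior results. Lemma~\ref{lemma:bivariate} requires $\alpha \simeq 1$ and $\xi$ fixed or bounded with $\gamma = \alpha/\xi$ finite and independent of $n$, while Theorem~\ref{lemma:bmnnew} requires $\alpha \simeq 1$ and $\xi$ bounded above and away from $0$. Taking the intersection of these conditions recovers precisely the hypotheses of Theorem~\ref{thm:comparison}, which geometrically corresponds to the overlapping region depicted in the rightmost panel of Figure~\ref{fig:val}. Once this compatibility is noted, the proof reduces to bookkeeping, and all the substantive analytic work has already been carried out in the two preceding results.
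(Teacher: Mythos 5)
Your proposal is correct and matches the paper's own argument: Theorem \ref{thm:equivalentexpansions} in Appendix \ref{sec:compare} proves exactly this by a triangle inequality combining the genus-expansion bound \eqref{eq:doublexpansionerror} at truncation order $d=m+\nu$ (so $d=m+2$ for $\nu=2$) with the rescaled center manifold bound \eqref{eq:nu2centerrescaled}, absorbing the two constants into $K_{4,m}$ using that $\gamma=\alpha^{\nu-1}/\xi$ is independent of $n$. Your handling of the parameter hypotheses is likewise the same as the paper's, so there is nothing to add.
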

\noindent A simple proof using the triangle inequality to combine previously established estimates, \eqref{eq:doublexpansionerror} and \eqref{eq:nu2centerrescaled},   is provided in Theorem  \ref{thm:equivalentexpansions} of Appendix \ref{sec:compare}. This will establish the equivalence of these two asymptotic sequences.

\section{Closed-form expressions for $z_g$, $e_g$ and map counts}
\label{sec:closed_forms}

We are now ready to extract the coefficients of $P_{3g-2}$ from the two expansions of $x_{n,N,r}$, in order to obtain a closed-form expression of $z_g$ in terms of $z_0$ for each value of $g$. Generating functions for 4-regular (without legs) maps, $e_g$, are obtained from the $z_g$ by solving an inhomogeneous Cauchy-Euler equation, as described in \cite{bib:emp08} and \cite{bib:er14}.

\subsection{Closed-form expressions for $z_g(z_0)$}
We note that a \textit{finite} truncation of the center manifold expansion is sufficient to solve for the $z_g$. This follows from two essential facts about these expansions. First, the rescaled center manifold expansion \eqref{eq:centerxpansion} is already written in the $\sqrt n$ gauge and it is therefore immediate to identify which terms should be equated to those in the genus expansion  \eqref{eq:adapgenuspsnu2}. Second, and critically, for a fixed genus $g$, the polynomial $P_{3g-2}$ has a finite number ($3g-1$ to be precise) of  unknown coefficients $\beta_{i,g}$ and solving for these $\beta_{i,g}$ amounts to solving the simple triangular system discussed in remark \ref{rem:abetatri}. Tracking the order of the first occurrence of $\beta_{3g-2,g}$, we show in Appendix \ref{subsec:extracting} equation \eqref{eq:knucount} that we must include terms in the center manifold expansion up to $k=k_\nu$, where
\[
k_\nu = 5g\nu-2\nu-3g+1,
\]
 to obtain the expression of $z_g$ in terms of $z_0$. When $\nu = 2$, $k_2=7g-3$.

\begin{rem}
Using the factored form of $z_g$ provided in \eqref{eq:qpolyfac}, one can significantly reduce the number of terms needed to solve for $z_g$, down to $k_2=5g-2$.
\end{rem}
We briefly illustrate the matching process for $z_1$. The first few terms of the center manifold expansion  \eqref{eq:centerxpansion} are 
\[x_n \sim \frac{\sqrt{n}}{\sqrt{3rN}}-\frac{1}{6 r}+\frac{\sqrt{N}}{24\sqrt{3nr^3}}+\frac{\left(\frac{1}{n}\right)^{3/2} \left(48
   r^2-N^2\right)}{1152 \sqrt{3} r^2 \sqrt{r N}}-\frac{1}{144 r n^2}+ \cdots.
\]
For $\alpha$ and $\xi$ fixed as $n \to \infty$, this leads to
\begin{equation}
   x_n = \alpha\left(\dfrac{\gamma^{-1/2}}{3n^{1/2}}-\dfrac{\gamma^{-1}}{6n}+\dfrac{\gamma^{-3/2}}{24\sqrt{3}n^{3/2}}+\dfrac{48\gamma^{-1/2}-\gamma^{-5/2}}{1152\sqrt{3}n^{5/2}}-\dfrac{\gamma^{-1}}{144n^3}\right) + {\mathcal O}(n^{-7/2}).
\label{eq:s4} 
\end{equation}
On the other hand, the genus expansion reads
\begin{align}
\label{eq:gammanaddress}
    x_n = \alpha&\left(\dfrac{a_{1,0}}{(\gamma n)^{1/2}} +\dfrac{a_{2,0}}{(\gamma n)^{2/2}}+\dfrac{a_{3,0}}{(\gamma n)^{3/2}}+\dfrac{a_{4,0}}{(\gamma n)^{4/2}} +\dfrac{a_{5,0}}{(\gamma n)^{5/2}} \right.
    +\dfrac{a_{6,0}}{(\gamma n)^{6/2}} \nonumber \\ &+ \left. \dfrac{a_{1,1}}{\gamma^{1/2}n^{5/2}}+\dfrac{a_{2,1}}{\gamma^{2/2}n^{6/2}}+ {\mathcal O}(n^{-7/2})\right).
\end{align}
Our goal is to compare the two expansions above to find $P_{3\cdot 1-2}=P_{1}$, with two unknowns coefficients, $\beta_{0,1}$ and $\beta_{1,1}$, which can be obtained from $a_{1,1}$ and $a_{2,1}$.  From the terms of degrees $\gamma^{-1/2}{n^{-5/2}}$ and  ${\gamma^{-1}}{n^{-3}}$, we readily see that
 \begin{equation*}
     a_{1,1}=\dfrac{48}{1152\sqrt{3}},  \ \  \ \ a_{2,1}=\dfrac{-1}{144}.
 \end{equation*} Relating $a_{1,1}$ and $a_{2,1}$ back to $\beta_{0,1}$ and $\beta_{1,1}$ (as described in Remark \ref{rem:abetatri}), we have the triangular system:
 
\begin{equation*}
\dfrac{48}{1152\sqrt{3}} =\dfrac{-\beta_{0,1}}{16\sqrt{3}}, \ \ \ \
\dfrac{-1}{144}=\dfrac{-(\beta_{0,1}+2\beta_{1,1})}{96}
\end{equation*}
whose unique solution is  $\beta_{0,1}=-2/3$ and $\beta_{1,1}=2/3$. Thus, we obtain
\begin{equation}
z_1=\dfrac{z_0(z_0-1)(-\frac{2}{3}+\frac{2}{3}z_0)}{(2-z_0)^4} = \dfrac{2 z_0(z_0-1)^2}{3 (2-z_0)^4} \label{eq:z1beta},
\end{equation} which agrees with the computation of $z_1$ found in \cite{bib:emp08}. 

The methodology introduced in this article works because we can equate coefficients in the genus and center manifold expansions, once these are truncated to a particular order. The coefficients $a_{i,g}$ are thus obtained by equating two bivariate polynomials in $1/\sqrt \gamma$ and $1/\sqrt n$. From a computational point of view, it is easier to set $\alpha = 1$ and rewrite these polynomials as functions of $1/\xi = \gamma$ and $1/\sqrt r = \sqrt{\xi/n}$. As explained at the end of Section \ref{subsec:extracting}, this transformation is such that the unknowns $a_{i,g}$ only appear in terms that involve $\xi^{-2g}$ (when $\nu=2$), thereby making it easier to locate those coefficients in the truncated expansions. For illustration, in the above example this change of variable gives $\gamma^{-1/2}{n^{-5/2}} = \xi^{-2} r^{-5/2}$ and  ${\gamma^{-1}}{n^{-3}}=\xi^{-2} r^{-3}$, where the exponent of $1/\xi$ is equal to $2 g$ in both terms. The closed-form expression of any $z_g$ in terms of $z_0$ may be obtained by equating the relevant terms in the two expansions and solving for the $\beta_{i,g}$. Below, we give expressions for $z_2$ through $z_7$, which were derived in this manner, with the help of Mathematica \cite{bib:mathematica}.

 \begin{equation}
 z_2 
 = \frac{14 z_{0} \left(z_{0}-1\right)^{4} \left(9 z_{0}-4\right)}{9 \left(2-z_{0}\right)^{9}}. \label{eq:z2}
 \end{equation} 

\begin{align}
 z_3
 = \frac{4 z_{0} \left(z_{0}-1\right)^{6} \left(8097 z_{0}^{2}-6616 z_{0}+444\right)}{27 \left(2-z_{0}\right)^{14}}.
\label{eq:z3}
\end{align}

\begin{align}
z_4 
= \frac{2 z_{0} \left(z_{0}-1\right)^{8}}{81 \left(2-z_{0}\right)^{19}} \left(9348347 z_{0}^{3}-10899460 z_{0}^{2}+1683284 z_{0}+743704\right).
\label{eq:z4}
\end{align}

\begin{align}
z_5 
= \frac{28 z_{0} \left(z_{0}-1\right)^{10}}{81 \left(2-z_{0}\right)^{24}}\left(228146283 z_{0}^{4}-343379456 z_{0}^{3}+89349936 z_{0}^{2}+50426664 z_{0}-16460352\right).
\label{eq:z5}
\end{align}

\begin{align}
z_{6} = 
\frac{4 z_{0} (z_{0}-1)^{12}}{729 (2-z_{0})^{29}}
&\left(7669263871659 z_{0}^{5}
-14108672477756 z_{0}^{4}+5354520803304 z_{0}^{3} \right. \nonumber\\
&\left. +2989338317984 z_{0}^{2}-2040880028176 z_{0}+236635393760\right).
\label{eq:z6}
\end{align}

\begin{align}
z_7 = \frac{8 z_{0} \left(z_{0}-1\right)^{14}}{2187 \left(2-z_{0}\right)^{34}}
& \big(8837111271832321 z_{0}^{6}-19191494504274856 z_{0}^{5}+9758098469191604 z_{0}^{4}
\nonumber \\
& +4849961265803344 z_{0}^{3}-5422884537586736 z_{0}^{2}+1237758341566528 z_{0} \label{eq:z7}\\
&-26678563494080\big).\nonumber
\end{align}
Consistent with Remark \ref{rem:powers}, each expression for $z_g$ above involves a polynomial, $Q_{g-1}$, of degree $g-1$ in $z_0$. These polynomials, normalized so that their $L^2$ norm over the interval $[-1, 1]$ is equal to $1$, are plotted in the top panel of Figure \ref{fig:poly}. We note that their roots are real and interlaced. The bottom panel shows normalized histograms of the roots with 5, 6, and 7 bins over the $[-1,1]$ interval, together with a possible limit of the empirical distribution of the zeros of the $Q_{g-1}$, given by
\[
q(x) = \frac{\exp(x - 1)}{\hbox{erf}(\sqrt{2}) \sqrt{\pi (1 - x)}}.
\]
Providing an explanation for these remarkable observations will be the subject of future exploration.

\begin{figure}[h] 
\begin{center} 
\includegraphics[width=.95\linewidth]{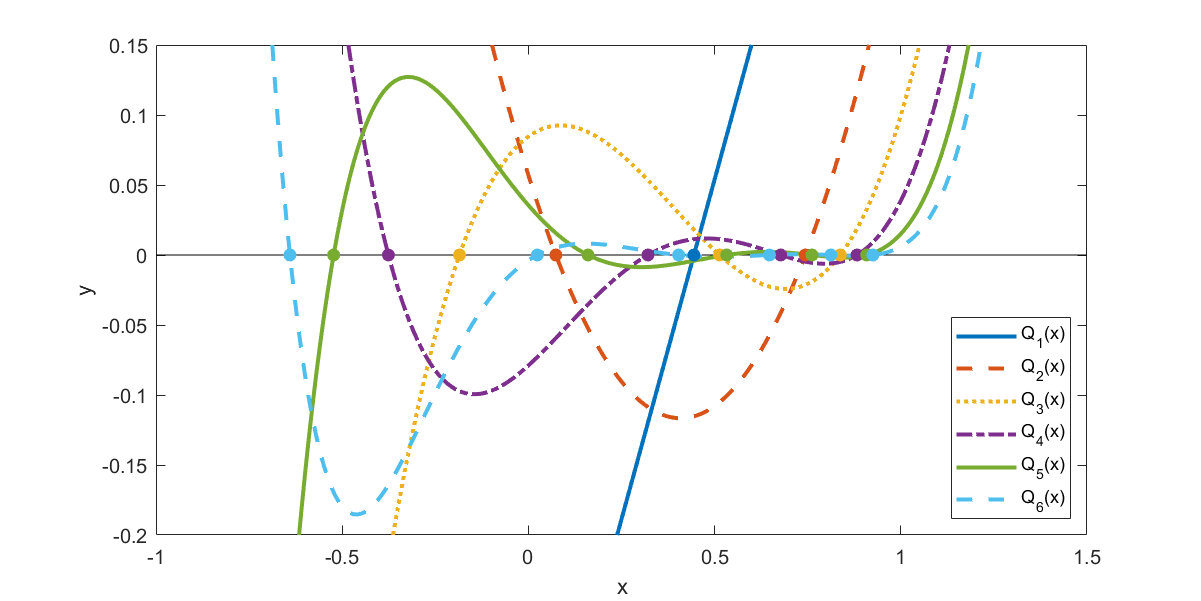}
\\
\includegraphics[width=.95\linewidth]{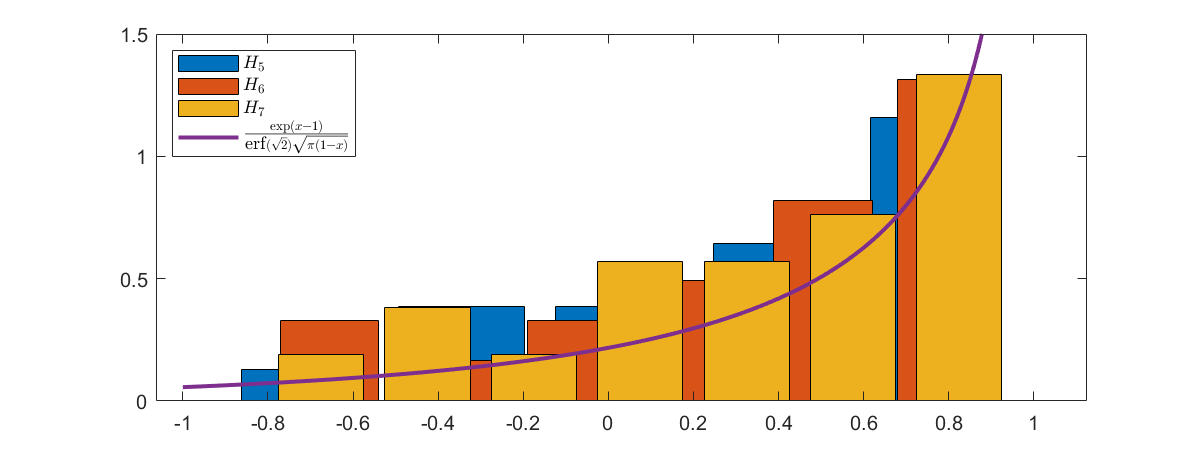}
\caption{Top: Graphs of the normalized polynomials $Q_{g-1}$, for $2 \le g \le 7$. Bottom: histograms of the zeros of these polynomials, using 5 ($H_5$), 6 ($H_6$), and 7 ($H_7$) bins, together with a possible asymptotic fit for the empirical distributions. }  \label{fig:poly}
\end{center}
\end{figure}

As indicated in Theorem \ref{thm:31}, explicit map counts for a surface of genus $g$ are obtained by repeated differentiation of $z_g$:
\begin{equation}
\label{eq:map_counts_z}
\text{Count of unlabeled 4-valent, 2-legged $g$-maps with $j$ vertices} =
\dfrac{(-1)^j}{j !} \left. \dfrac{d^j z_g}{d r^j}\right\vert_{r=0},
\end{equation}
where $z_g$ is expressed in terms of $z_0$, $z_0$ is a function of $r$ obtained from \eqref{z0} with $\alpha=1, \dfrac{d z_0}{d r} = 
-3 z_0^3/(2 - z_0)$, and $z_0 = 1$ when $r=0$. For reference, we used Maple \cite{bib:maple} to calculate the counts for genera 0 through 7 and a low number of vertices. These are recorded in Table \ref{table:mapcounttable} of Appendix \ref{app:z_counts}.

\subsection{Closed-form expressions for $e_g(z_0)$}

Using the procedure described in \cite{bib:emp08} and \cite{bib:er14}, the generating functions $e_g(z_0)$ for 4-valent $g$-maps can be recursively derived from the expressions for $z_g(z_0)$. Specifically, $e_g$ solves a forced Cauchy-Euler equation of the form 
\[
s^{2}\dfrac{d^{2} e_{g}}{d s^{2}} - 4 (g -1) s \dfrac{d e_{g}}{d s} + 2 e_{g} (2 g -1) (g -1)
 = \mathit{drivers}_{g},
\]
where $\mathit{drivers}_g$ is a function of $\big\{e_k(z_0)\big\}_{k < g}$ and $\big\{z_k(z_0)\big\}_{k \le g}$, $s=-t = -r/4$, $z_0$ is understood as a function of $s$, and $\dfrac{d z_0}{d s} = 12 z_0^3/(2 - z_0).$ Knowledge of the $z_k$ for $k \le g$ is therefore sufficient to obtain $e_g$. Using Equations \eqref{z0} and  \eqref{eq:z1beta} through \eqref{eq:z7}, we find, with the help of Maple \cite{bib:maple}, the following generating functions $e_g$ for genera 0 through 7. 
\begin{equation}
e_{0} = \frac{1}{2} \ln \! \left(z_{0}\right)+\frac{3}{8}-\frac{5 z_{0}}{12}+\frac{z_{0}^{2}}{24}. \label{eq:e_0}
\end{equation}
\begin{equation}
e_{1} = -\frac{1}{12} \ln \! \left(2-z_{0}\right). \label{eq:e_1}
\end{equation}
\begin{equation}
e_{2} =
= -\frac{\left(z_{0}-1\right)^{3} \left(3 z_{0}^{2}-21 z_{0}-82\right)}{720 \left(2-z_{0}\right)^{5}}.
\label{eq:e_2}
\end{equation}
\begin{align}
e_{3} = 
-\frac{\left(z_{0}-1\right)^{5}}{9072 \left(2-z_{0}\right)^{10}}\left(9 z_{0}^{5}-135 z_{0}^{4}+855 z_{0}^{3}-2925 z_{0}^{2}-32704 z_{0}+17260\right).
\label{eq:e_3}
\end{align}
\begin{align}
e_{4} = 
-\frac{\left(z_{0}-1\right)^{7}}{38880 \left(2-z_{0}\right)^{15}}\big(27 z_{0}^{8}&-621 z_{0}^{7}+6426 z_{0}^{6}-39312 z_{0}^{5}+156870 z_{0}^{4}-423738 z_{0}^{3} \nonumber\\
& -13719796 z_{0}^{2}+12438536 z_{0}-1421392\big).
\label{eq:e_4}
\end{align}
\begin{align}
e_{5} =
-\frac{\left(z_{0}-1\right)^{9}}{85536 \left(2-z_{0}\right)^{20}}\big(81 z_{0}^{11}&-2511 z_{0}^{10}+36045 z_{0}^{9}-317115 z_{0}^{8}+1906335 z_{0}^{7}-8258841 z_{0}^{6}\nonumber \\
&+26471691 z_{0}^{5}-63319725 z_{0}^{4}-6114807776 z_{0}^{3}+7592114712 z_{0}^{2}\\
&-1573981616 z_{0}-383964880\big).
\label{eq:e_5}
\nonumber
\end{align}

\begin{align}
e_{6} = 
-\frac{\left(z_{0}-1\right)^{11}}{79606800 \left(2-z_{0}\right)^{25}}\big(&167913 z_{0}^{14}-6548607 z_{0}^{13}+120225708 z_{0}^{12}-1379237382 z_{0}^{11} \nonumber \\
&+11065634613 z_{0}^{10}-65823071391 z_{0}^{9}+300177734274 z_{0}^{8} \nonumber \\
&-1069792529256 z_{0}^{7}+3007388659374 z_{0}^{6}-6676436144466 z_{0}^{5}\\
&-1987745167400532 z_{0}^{4}+3113497571095248 z_{0}^{3}\nonumber\\
&-955888270184512 z_{0}^{2} 
-369974786833952 z_{0}+139728961867968\big). \nonumber
\label{eq:e_6}
\end{align}

\begin{align}
e_{7} = 
-\frac{\left(z_{0}-1\right)^{13}}{104976 \left(2-z_{0}\right)^{30}}\big(729 z_{0}^{17}&-34263 z_{0}^{16}+766179 z_{0}^{15}-10836585 z_{0}^{14}+108693900 z_{0}^{13} \nonumber\\
&-821542176 z_{0}^{12}+4852565172 z_{0}^{11}-22920200316 z_{0}^{10} \nonumber\\
&+87835205250 z_{0}^{9}-275401525230 z_{0}^{8}+708906419910 z_{0}^{7}\\
&-1496166685650 z_{0}^{6}-1413940192593664 z_{0}^{5}+2672305782348584 z_{0}^{4} \nonumber\\
&-1119321797794336 z_{0}^{3}-479347256993504 z_{0}^{2}+370359088049920 z_{0} \nonumber\\
&-46240156833920\big). \nonumber
\label{eq:e_7}
\end{align}
While expressions for $z_j$ and $e_j$ with $j\le 3$ are known, we believe the above formulations for $z_4$ through $z_7$ and $e_4$ through $e_7$ are new.
Counts of unlabeled 4-valent maps are obtained by taking derivatives of $e_g$ with respect to $s$ and setting $s=0$ (recall that $z_0(0)=1$):
\begin{equation}
\label{eq:map_counts_e}
\text{Count of unlabeled 4-valent $g$-maps with $j$ vertices} = \dfrac{1}{4^j\,j !} \left. \dfrac{d^j e_g}{d s^j}\right\vert_{s=0}.
\end{equation}
These counts (obtained with Maple \cite{bib:maple}) are given in Table \ref{table:mapcounttable_e} of Appendix \ref{app:e_counts} for maps with up to 15 vertices on surfaces of genera $0 \le g \le 7$. Unlike the situation described in Remark \ref{cartographic} for the $z_g$, there are no legs in the enumerations corresponding to the $e_g$ to break symmetry. So there will be non-trivial equivalences, which are reflected in the fact that the unlabelled counts given by \eqref{eq:map_counts_e} are often rational numbers.

\subsection{ Comparison with known results in the literature}
\label{sec:comparison} We checked that the expressions for $z_1$, $z_2$, and $z_3$ provided above are equal to those given on pages 62, 63, and 66-67 of \cite{bib:emp08} for $\nu=2$\footnote{\label{footnote:2551}~Our analysis revealed a small typo in the expression for $z_{2,\nu}$ given in Section 5.4 on page 63 of \cite{bib:emp08}, where the coefficient 25551 should instead be 2551. Equations \eqref{eq:z2} and \eqref{eq:conjz2} are in agreement with the corrected expression.}. Because $z_4$ through $z_7$ are new, no direct comparisons are available. However, \cite{bib:er} (see also section \ref{sec:doublescale}) provides a recurrence formulation of the coefficient $a_{3g-1}^{(g)}(\nu)$ of $(\nu-(\nu-1) z_0)^{-(5 g -1)}$ in the partial fraction expansion of $z_g / z_0$ (Proposition 4.3 on page 511 of \cite{bib:er}),
\begin{equation} \label{eq:recursion}
a_{3(g + 1) - 1}^{(g + 1)}(\nu) = \dfrac{\nu^3 (25 g^2 - 1)}{6} a_{3 g - 1}^{(g)}(\nu) + \dfrac{\nu}{2} \sum_{m = 1}^g a_{3 m - 1}^{(m)}(\nu)\,a_{3 (g - m + 1) - 1}^{(g - m + 1)}(\nu),\qquad
a_{2}^{(1)}(\nu) = \dfrac{\nu^2}{6},
\end{equation}
which we confirmed was satisfied by the corresponding terms in all of the $z_g$ expressions presented in this article (for which $\nu = 2$).
Similarly, the expressions for $e_0$, $e_1$, and $e_2$ stated above are identical to those provided in \cite{bib:emp08} (pages 70, 71, and 72 with the constants $K_j$ set to 0) and to those on page 489 of \cite{bib:er}. A recent preprint by Bleher, Gharakhloo, and McLaughlin contains a closed-form formula for the number of labeled 4-valent maps on surfaces of genus 3 (\cite{bib:bgm} Theorem 1.6), which we used to check our expression for $e_3(z_0)$. In addition, for $g \ge 2$, we confirmed that the coefficients of the highest order terms in the partial fraction expansions of $e_g$ and $z_g/z_0$ in powers of $(2-z_0)^{-1}$ are related according to Equation (2-14) of \cite{bib:er14}, and that the constant term $a_0^{(g)}$ in the partial fraction expansion of $e_g$ satisfies the recurrence relation stated in Equation (2-15) of \cite{bib:er14},
\[
a_0^{(g)} = 
-2 \left(2 g -3\right)! \left(\frac{1}{\left(2 g +2\right)!}-\frac{1}{12 \left(2 g \right)!}+\frac{1}{\left(2 g -1\right)!} \sum_{k = 2}^{g -1} \frac{\left(\prod_{j =0}^{2 g -2 k +1}\left(2-2 k -j \right)\right) a_0^{(k)}}{\left(2 g -2 k +2\right)!}\right), \qquad g \ge 2,
\]
where the summation is set to zero for $g-1<2$.

Finally, the counts given in Tables \ref{table:mapcounttable} and \ref{table:mapcounttable_e} of Appendices \ref{app:z_counts} and \ref{app:e_counts} were compared to (and agreed with) the numerical values obtained by running a combinatorial code developed by V. Pierce \cite{bib:vp}, for $0 \le g \le 2$ and maps with up to 4 vertices. The algorithm underlying this code is based on cartographic group methods, mentioned in Remark \ref{cartographic}. Similarly, the number of labelled 4-valent 3-maps with 5 vertices was reported in Section 5.11 of \cite{bib:emp08} and is in agreement with Table \ref{table:mapcounttable_e}. We also checked that the results of Table \ref{table:mapcounttable_e} of the present article (in Appendix \ref{app:e_counts}) are in agreement with the counts shown in Table 2 of \cite{bib:dy17}, for genera 1 through 5 (see footnote\footnote{~For genus zero, we found a typo in Table 2 of \cite{bib:dy17} for row $k=8$: 154928203970560 should read 154948203970560. All of the other entries for $g=0$ agree with the present work.} for $g=0$). This provides a verification of the expressions for $e_4$ and $e_5$, and indirectly $z_4$ and $z_5$, since the former are obtained from the latter. The algorithm underlying the code used in \cite{bib:dy17} is based on the resolvent for the Lax difference operator appearing in the Toda Lattice equations associated to orthogonal polynomials. (We note that a similar algorithm was developed at the continuum limit level in \cite{bib:w}.) The relation between this resolvent and the discrete string equations used in the present paper is explained in Sections 1.2.2 and 4.1 of \cite{bib:ew}.

\section{Conclusions}
\label{sec:conclusions}

In this paper we have made a detailed comparison of two asymptotic expansions for the Freud orbit, a particular solution of the discrete Painlev\'e I equation (dpI): the genus expansion, which is based on a Riemann-Hilbert analysis of orthogonal polynomial systems, and the center manifold expansion, which is based on a dynamical systems analysis of dpI. The difference between them stems from the fact that the two expansions are obtained under different scaling limit assumptions, which have potentially different a priori parameter domains of validity in the large $n$ limit. However, in rescaling these expansions, we find there is a large overlap between their respective regions of uniform validity. Connecting the two expansions in this common parameter regime is the key technical mechanism that leads to the main result of the paper concerning map enumeration.  That result is two-fold. First, it provides an effective elementary means for counting the number of genus $g$, 4-valent maps with an arbitrary number of vertices. For illustration, counts of maps with up to 15 vertices on surfaces of genera 0 through 7 are provided in appendices \ref{app:z_counts} and \ref{app:e_counts}. Second, it yields an optimal bound on the finite number of steps required for evaluating all these counts, independent of the number of vertices.

The work presented here goes beyond the question of enumerating 4-valent $g$-maps. Indeed, the methodology we have introduced extends to maps of even valence $2 \nu$, through the lemmas provided in Appendices \ref{sec:adaptgenus} through \ref{sec:compare}. In addition, knowledge of how map counts change with parameters may provide insights into connections between generating functions and hierarchies of continuous Painlev\'e equations. Finally, the dynamical systems approach of \cite{bib:elt22}, which led to one of the asymptotic expansions used here, raises interesting questions on the role played by special solutions of dynamical systems in other areas of physics and mathematics. We elaborate on these ideas below.

\subsection{Generating Functions for Higher $\nu$}

\label{sec:highernu}
The present methodology may be extended to other forms of the potential
$V_{{\bf t}, N}(\lambda)$ (see Equation \eqref{potential}), as long as the two asymptotic expansions of $b_n^2$ are available. For the genus expansion, we have the  work of Ercolani, McLaughlin and Pierce \cite{bib:em,bib:emp08,bib:er} on 
potentials of the form
\[
V_{{\bf t}, N}(\lambda)= \left(\dfrac{\lambda^2}{2}+t_{2\nu}\lambda^{2\nu}\right). 
\] They proved that the expansion
\begin{equation}
b_{n,N}^2 \sim\dfrac{n}{N}\left(z_{0,\nu}(s)+\frac{1}{n^2}z_{0,\nu}(s) \cdots  \right) 
\label{eq:generalexp}
\end{equation} and rational expression 
\begin{equation}
z_{g,\nu}=\dfrac{z_{0,\nu}(z_{0,\nu}-1)P_{3g-2,\nu}(z_{0,\nu})}{(\nu-(\nu-1)z_{0,\nu})^{5g-1}}
\end{equation}
hold for these general types of potentials, where the $z_{g,\nu}$ are now generating functions for 2$\nu$-valent 2-legged $g$-maps, as stated in Equation \eqref{eq:zgrationalform}. The coefficients of $P_{3g-2,\nu}$ are still unknown for general $g$, although expressions for the generating functions $z_{g,\nu}$\, with $0 \le g \le 3$, are provided in \cite{bib:em,bib:emp08,bib:er}. For the center manifold expansion, we have the results of \cite{bib:mnz85}, which rely on an ordinary difference equation approach stemming from Poincar\'e-Perron type methods. They suffice to establish asymptotic expansions for the $b_n^2$, as needed for the higher $\nu$ case
considered in Appendices \ref{sec:adaptgenus} through \ref{sec:compare}.

As an example, Freud's equation for $\nu=3$ is
\begin{equation}
x_n+6t_6x_n(x_{n-1}x_{n-2}+x_{n-2}^2+x_{n}^2+2x_{n}x_{n-1}
+x_{n+1}x_{n+1}+2x_{n}x_{n+1}+x_{n+1}^2+x_{n+1}x_{n+2})=\dfrac{n}{N}, \label{eq:highernurecur}
\end{equation}
where $x_n = b_n^2$.
From Theorem 1 of M\'{a}t\'{e}, Nevai and Zaslavsky \cite{bib:mnz85} (see also \cite{bib:bmn88}),  we know that $x_n$ will have an asymptotic expansion in powers of $n^{1/3}$. 
Using Mathematica \cite{bib:mathematica} to compute this expansion to order $n^{-20/3}$ and mirroring the procedure described in Section \ref{sec:bridge}, we recover the closed form expression for $z_{1,3}$ derived in \cite{bib:emp08}. In addition this method provides the following result regarding $z_{2,3}$, also in agreement with \cite{bib:emp08} (see footnote \ref{footnote:2551}).

\begin{proposition}\label{conj:z23}
The generating function for labeled 6-valent, 2-legged maps on a genus 2 surface can be expressed as
\begin{align}
z_{2,3}&=\dfrac{z_{0,3}(z_{0,3}-1)}{(3-2z_{0,3})^9}\left(\dfrac{2673}{5}-\dfrac{62451}{20}z_{0,3}+\dfrac{25407}{4}z_{0,3}^2-\dfrac{27386}{5}z_{0,3}^3+\dfrac{8567}{5}z_{0,3}^4\right) \nonumber \\
& = \frac{z_{0,3} (z_{0,3}-1)^{2}}{20 \left(3-2 z_{0,3}\right)^{9}}\big(34268 z_{0,3}^{3}-75276 z_{0,3}^{2}+51759 z_{0,3}-10692\big),
\label{eq:conjz2}
\end{align}
where $z_{0,3}$ is the generating function for labeled planar 6-valent, 2-legged maps.
\end{proposition}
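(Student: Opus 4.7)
My plan is to adapt the matching methodology of Section \ref{sec:bridge} to the case $\nu = 3$, $g = 2$. By the general rational form \eqref{eq:zgrationalform}, one has
\[
z_{2,3} = \frac{z_{0,3}(z_{0,3}-1)P_{4,3}(z_{0,3})}{(3 - 2 z_{0,3})^9},
\]
where $P_{4,3}(z_{0,3}) = \beta_{0,2} + \beta_{1,2} z_{0,3} + \cdots + \beta_{4,2} z_{0,3}^4$ has exactly $5$ unknown coefficients. The goal is therefore to produce $5$ independent linear equations on the $\beta_{i,2}$ by matching two asymptotic expansions of $x_n = b_n^2$ termwise in a common region of uniform validity, just as was done for $\nu = 2$ to obtain \eqref{eq:z1beta}--\eqref{eq:z7}.

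The first step is to produce the analogue of the genus expansion in the appropriate gauge. The $\nu = 3$ string equation $1 = z_{0,3} + 6 \cdot 6 \,t_6\, \alpha\, z_{0,3}^3$ (cf.\ Remark \ref{rem:steq}) has a unique branch with $z_{0,3}(0,\alpha) = 1$, and inverting it gives a convergent Puiseux-type series for $z_{0,3}$ in powers of $(n\gamma)^{-1/3}$ analogous to \eqref{eq:z0pu}. Substituting this series into the rational form above yields a convergent series for $z_{2,3}$ in powers of $n^{-1/3}$, whose coefficients are affine functions of the unknown $\beta_{i,2}$, with triangular leading dependence as in Remark \ref{rem:abetatri}. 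The result is a bivariate partial sum $\mathcal{G}^{(d)}_3$ entirely parallel to \eqref{eq:adapgenuspsnu2}, but in the gauge $n^{1/3}$ rather than $n^{1/2}$.

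The second step is to construct the center manifold expansion for $\nu = 3$. I would apply the Máté--Nevai--Zaslavsky theorem to Freud's equation \eqref{eq:highernurecur}, which guarantees an asymptotic expansion of $x_n$ in powers of $n^{1/3}$, and then invoke the scaling identities generalizing Lemmas \ref{lemma:orbitrescale} and \ref{lemma:nrdegreesub} (proved in Appendices \ref{sec:adapcent} and \ref{sec:generalnuproof} for arbitrary even valence) to rewrite this expansion in terms of the rescaled parameters $\alpha = n/N$ and $\xi = n/r$ at fixed $\alpha \simeq 1$ and $\xi$ bounded above and away from $0$. By the $\nu = 3$ version of Theorem \ref{lemma:bmnnew}, this expansion is uniformly valid in the same region as $\mathcal{G}^{(d)}_3$.

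Third, I would invoke the generalization of Theorem \ref{thm:comparison}: in the overlap region the two expansions agree modulo an $O(n^{-(d+1)/3})$ remainder, so equating their coefficients gives genuine equations on the $\beta_{i,2}$. According to the index formula $k_\nu = 5 g\nu - 2\nu - 3 g + 1$, extracting $\beta_{4,2}$ requires pushing the center manifold expansion to order $k_3 = 19$, i.e.\ up to terms in $n^{-19/3}$; the paper already notes that Mathematica yields this expansion through order $n^{-20/3}$. By Remark \ref{rem:abetatri} the resulting system for $(\beta_{0,2},\dots,\beta_{4,2})$ is triangular with nonzero diagonal, hence uniquely solvable, and produces the stated polynomial. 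The main obstacle is purely computational: the nonlinear Freud equation \eqref{eq:highernurecur} couples $x_{n-2},\dots,x_{n+2}$ and the $n^{1/3}$ gauge forces one to carry roughly $20$ Puiseux orders through the center manifold algorithm of \cite{bib:elt22}, so the practical difficulty is organizing the symbolic bookkeeping and cross-checking the leading coefficients against the known value of $z_{1,3}$ and against the recurrence \eqref{eq:recursion} for $a_{3g-1}^{(g)}(3)$ at $g = 2$.
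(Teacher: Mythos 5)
Your proposal follows essentially the same route as the paper: Proposition \ref{conj:z23} is obtained there precisely by taking the M\'at\'e--Nevai--Zaslavsky asymptotic expansion of $x_n$ in powers of $n^{1/3}$ for the $\nu=3$ Freud equation \eqref{eq:highernurecur}, computing it symbolically to order $n^{-20/3}$, and mirroring the matching procedure of Section \ref{sec:bridge} using the general-valence rescaling and comparison lemmas of the appendices, exactly as you outline. Two small corrections for when you carry it out: the relevant scaling parameter is $\xi = n^{\nu-1}/r_{2\nu} = n^{2}/r_{6}$ rather than $n/r$, and the $\nu=3$ string equation from \eqref{eq:zggendefnoscale} reads $1 = z_{0,3} + 10\,\alpha^{2} r_{6}\, z_{0,3}^{3} = z_{0,3} + 60\,\alpha^{2} t_{6}\, z_{0,3}^{3}$, not $1 = z_{0,3} + 36\, t_{6}\, \alpha\, z_{0,3}^{3}$.
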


\subsection{Links with Higher-order Continuous Painlev\'e Equations} \label{sec:doublescale}

One of the principal interests and applications for the paper \cite{bib:er} was to provide a foundation for resolving the relation between a double scaling limit of dpI and the continuous Painlev\'e I equations that physicists had conjectured in some of the earliest explorations on quantum gravity \cite{bib:bk90,bib:fik91,bib:fikn06}. In \cite{bib:er} it was shown that the rational function in (\ref{eq:zgrationalform}) has a global Laurent polynomial representation of the form
\begin{eqnarray} \label{rational}
z_g (z_0) &=& z_0 \left\{ \frac{a_0^{(g)}}{(2 - z_0)^{2g}} + \frac{a_1^{(g)}}{(2 - z_0)^{2g+1}}+ \cdots + \frac{a_{3g-1}^{(g)}}{(2 - z_0)^{5g-1}}\right\}. \,\,\,\,\,\,\,\,\,\,\,\,\,\,\,\,\,\,\,\,
\end{eqnarray}
It was further established that in the double scaling limit for $\zeta = 2^{19/5} 3^{6/5} N^{4/5} (r + \frac1{12})$, as $N \to \infty$ and simultaneously $r \to - \frac1{12}$ from above, the sequence, in $g$, of top coefficients $a_{3g-1}^{(g)}$ precisely equals the coefficients of the asymptotic expansion of the {\it tri-tronquee} solution to the continuous Painlev\'e I equation, $d^2y/d\xi^2 = 6 y^2 + \xi$, \, in the {\it non-polar} sector. This analysis is the source  of the recursion formula \eqref{eq:recursion} for these asymptotic expansion coefficients,  which we used in section \ref{sec:comparison} to confirm our counts. It is natural to wonder if there are asymptotic structures of interest related to the lower coefficients in (\ref{rational}). That question continues to motivate applications of the explicit calculations carried out in the present paper.

The work in \cite{bib:er} also derives a novel extension of all these results to the general class of potentials of the form $V_{2\nu}(\lambda)= N \left(\frac{1}{2}\lambda^2 + \frac{r}{2\nu}\lambda^{2\nu} \right)$, but in which the Painlev\'e I equation is replaced by the $\nu^{th}$ equation in the associated {\it continuous} Painlev\'e I hierarchy. This is the continuous analogue of the hierarchy of discrete string, or Freud, equations mentioned in Appendix \ref{sec:adapcent}. A detailed exploration of the connections between generating functions and higher-order continuous Painlev\'e equations remains to be performed.

\subsection{Dynamical Systems perspective.} 

The bridge between the two expansions we have described corresponds to the unification of two perspectives: a Plancherel-Rotach type analysis initiated by Freud \cite{bib:fre76} and further developed by Nevai and co-authors \cite{bib:mnz85}, and more recent advances in Riemann-Hilbert analysis, related to integrable systems theory, as seen in the work of \cite{bib:fik} and \cite{bib:emp08}. The relation with discrete dynamical systems goes back to Freud who used that perspective to describe the leading order asymptotics of recurrence  coefficients  for families of orthogonal polynomials with exponential weights \cite{bib:fre76}. Later, motivations coming from random matrix theory and quantum gravity revived interest in these questions and led to re-interpretations  of discrete Painlev\'e equations as discrete string equations \cite{bib:fik}. We saw in section \ref{sec:doublescale} deep, physically meaningful, connections between multiple scaling limits of solutions to  discrete and continuous Painlev\'e systems. Such connections arise elsewhere in the literature \cite{bib:hfc} and it will be of interest to compare such results to our own.

Lew and Quarles \cite{bib:lq83} broadened the dynamical perspective for dPI to include other non-polar orbits, different from Freud's. More specifically, they used contraction mapping techniques to prove the existence of a one-dimensional family of solutions that remain positive under the dPI evolution. The overlap analysis presented in this paper solves, from a dynamical systems perspective, a connection problem for the non-polar solution between the regime $r >> 1$ where the purely quartic part of the potential is dominant and that near $r = -1/12$ related to the double-scaling limit mentioned at the start  of section \ref{sec:doublescale}. This has relevance for  non-perturbative string theory \cite{bib:bk90}. 

The global dynamical systems framework of \cite{bib:elt22} suggests two directions of future exploration. First, extending the analysis of \cite{bib:elt22} to general $\nu$ seems natural but presents some challenges, not least of which is that the phase space dimension of the dynamical system increases with $\nu$. However, for odd valence there is one important case, that of 3-valent ($\nu = 3/2$) graphs or, dually, triangulations that is dynamically tractable. As was the case for quadrangulations, formulas for $e_g(z_0)$ are already known for 3-valent maps when $0 \le g \le 2$ \cite{bib:ep,bib:bd13}. Corresponding counts (calculated with Maple \cite{bib:maple}) for graphs with up to 16 vertices are given in Appendix \ref{app:e_counts3}. The methodology introduced in the present article lays out a path toward obtaining counts for higher values of $g$. Interestingly, the results of \cite{bib:mnz85} do not help here since there is no corresponding family of classical orthogonal polynomials. However, our dynamical systems approach does apply, thereby providing a means to get a full asymptotic expansion of center manifold type. For instance, formally seeking an expression corresponding to the center manifold expansion leads to the following formula for $z_{2,3/2}$:
\begin{align*}
z_{2,3/2} & = \frac{z_{0,3/2} }{(z_{0,3/2}^2-3)^9}\left(\frac{243}{16} - 297 z_{0,3/2}^2 + \frac{15513}{16} z_{0,3/2}^4 - \frac{9705}{8} z_{0,3/2}^6 + \frac{9045}{16} z_{0,3/2}^8 + \frac{93}{4} z_{0,3/2}^{10} \right. \\ & \left. \qquad \qquad \qquad \qquad - \frac{993}{16} z_{0,3/2}^{12} - \frac{9}{8} z_{0,3/2}^{14}
\right)\\
&= -\frac{3 z_{0,3/2} \left(z_{0,3/2}^2-1\right)^{4}}{16 \left(z_{0,3/2}^{2}-3\right)^{9}}\left(6 z_{0,3/2}^{6}+355 z_{0,3/2}^{4}+1260 z_{0,3/2}^{2}-81\right).
\end{align*}
Combining the genus and center manifold expansions to obtain triangulation counts for topological surfaces of higher genus is something we will explore in future work. Indeed, triangulations are, from many mathematical perspectives, the class of maps of broadest interest.

Second, from the viewpoint of dynamics, many critical features go beyond all orders from what can be seen in just asymptotics. This was already evident in the pioneering work of Lew and Quarles \cite{bib:lq83} and such a realization is manifest in the results, both theoretical and numerical, found in \cite{bib:elt22}. Consequently, many of the algebraic structures we have been working with in this paper, such as string equations and generating functions, which are based on asymptotic expansions of particular orbits, necessarily extend to a plethora of other orbits that differ from the particular orbit only beyond all orders. This opens many avenues for dynamical and numerical exploration that we plan to pursue.

\subsection{Closed-form expressions for the map counts}
The number of regular $g$-maps may be obtained by taking successive derivatives of $z_g(z_0)$ or $e_g(z_0)$ and evaluating the result at $r=0$ or $s=0$ (corresponding to $z_0 = 1$), as indicated in Equations \eqref{eq:map_counts_z} and \eqref{eq:map_counts_e}. Knowledge of $z_g$ and $e_g$ as functions of $z_0$ is therefore sufficient to obtain such counts. A remaining challenge is to formulate the result as a closed-form expression that is solely a function of the regular valence $2 \nu$ and the number of vertices $j$. For instance, the number of $2\nu$-valent $0$-maps with $j$ vertices is \cite{bib:emp08}
\begin{equation}
\label{eq:zero_map_couts}
{\mathcal N}_{0,\nu} (j) = \left(2 \nu {2\nu -1 \choose \nu -1}\right)^j
\dfrac{(\nu j - 1)!}{\big((\nu-1) j + 2 \big)!}.
\end{equation}
A few similar results are known for low values of $g$ \cite{bib:bd13,bib:bgm}. When $\nu=2$, the expressions for $z_g$ and $e_g$ presented in this article may be used to derive closed-form expressions for ${\mathcal N}_{g,\nu} (j)$ for all genera for which $z_g(z_0)$ is known. This work is beyond the scope of the present article and will be described separately \cite{bib:elt23}.

In summary, the present article illustrates how results from Riemann-Hilbert analysis and either Plancherel-Rotach asymptotics or center manifold theory may be combined to provide a solution to a longstanding combinatorial problem in  map enumeration. In addition, including a dynamical systems perspective opens the door to further explorations that have the potential to reveal deep connections between various branches of mathematics.

\appendix

\section{Genus Expansion in the $n^{1/\nu}$ gauge}
\label{sec:adaptgenus}
In this appendix, we reformulate the genus expansion for $x_n$ in the gauge $n^{1/ \nu}$, where $\nu > 1$ is an integer.
From \cite{bib:emp08} we have the following (rescaled) polynomial equation, known as the {\it string equation}, that implicitly defines $z_{0,\nu}$:
\begin{equation} \label{eq:zggendefnoscale}
    1= z_{0,\nu}+ {2\nu -1\choose \nu -1}\left( \dfrac{n}{N} \right)^{\nu -1}r_{2\nu} z_{0,\nu}^{\nu},
\end{equation}
where $r_{2\nu}$ here is related to the variables in \eqref{potential} as $t_{2\nu}=r_{2\nu}/2\nu.$ The parameter $r$ used previously in this article may be expressed in this notation as $r=r_4$. Defining $\gamma := r_{2 \nu}/N^{\nu-1}$, we rewrite the string equation: 
\begin{equation} \label{eq:zggendef}
    1= z_{0,\nu}+ {2\nu -1\choose \nu -1}\gamma n^{\nu-1} z_{0,\nu}^{\nu}.
\end{equation}
Using the Newton-Puiseux theorem \cite{bib:bk86}, we can derive the following convergent expansion for $z_{0,\nu}$:
\begin{equation} \label{eq:zggenconv}
z_{0,\nu} = \sum_{i=\nu-1}^\infty a_{i,0,\nu}(\gamma)n^{-i/ \nu}.
\end{equation}
Note that the value of the lower bound of summation
reflects the balance in equation \eqref{eq:zggendef}.
Using the expansion \eqref{eq:zggenconv} for $z_0$ and the rational form for $z_g$:
\begin{equation}\label{eq:zggenrationaldef}
z_{g,\nu}=\dfrac{z_{0,\nu}(z_{0,\nu}-1)P_{3g-2,\nu}(z_{0,\nu})}{(\nu-(\nu-1)z_{0,\nu})^{5g-1}}
\end{equation}
where $P_{3g-2,\nu}$ is a polynomial in $z_{0,\nu}$ of degree $3g -2$, we can derive a convergent expansion for $z_{g,\nu}$: 
\begin{equation}\label{eq:zggenseriesdef}
z_{g,\nu} = \sum_{i=\nu-1}^\infty a_{i,g,\nu}(\gamma) n^{-i/ \nu},
\end{equation}
where the dependence of the coefficients $a_{i,g,\nu}$ on $\gamma$ has been made explicit.
Denote the coefficients of  $P_{3g-2,\nu}$ as $\beta_{i,g,\nu}$, for $0\leq i\leq 3g-2$.

\begin{lemma}
\label{lemma:zgcoeffdependence}
For $1\leq i \leq 3g-1$,  the coefficient $a_{i(\nu-1),g,\nu}$ takes the form
\begin{equation}
a_{i(\nu-1),g,\nu} =\left(\dfrac{-a^{i}_{\nu-1,0,\nu}}{\nu^{5g-1}} \right)\beta_{i-1,g,\nu}+ L_{i,g,\nu}\label{eq:aigleadinglemmagen},
\end{equation} where $L_{i,g,\nu}$ is linear in $\beta_{j,g,\nu}$ for $j< i-1$. Although this dependence is implicit, the coefficients $a_{k,g,\nu}$ are functions of $\gamma$.
\end{lemma}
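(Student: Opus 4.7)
The plan is to derive the expansion of $z_{g,\nu}$ in powers of $u := z_{0,\nu}$, identify how each coefficient depends on the unknowns $\beta_{j,g,\nu}$, and then substitute the Newton-Puiseux expansion \eqref{eq:zggenconv} to read off the coefficient of $n^{-i(\nu-1)/\nu}$. The key compatibility I expect to exploit is that $\beta_{i-1,g,\nu}$ first appears inside the coefficient of $u^i$ in \eqref{eq:zggenrationaldef}, and that the leading-order term of $z_{0,\nu}^i$ in $n^{-1/\nu}$ is precisely $a_{\nu-1,0,\nu}^i\, n^{-i(\nu-1)/\nu}$.

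First I would Taylor-expand the geometric factor in \eqref{eq:zggenrationaldef} around $u = 0$, writing $(\nu - (\nu-1)u)^{-(5g-1)} = \sum_{r \ge 0} d_r u^r$ with $d_0 = \nu^{-(5g-1)}$, and combine this with $u(u-1) = -u + u^2$ and $P_{3g-2,\nu}(u) = \sum_{q=0}^{3g-2} \beta_{q,g,\nu} u^q$. Collecting the coefficient of $u^k$ gives
\[
z_{g,\nu} = \sum_{k \ge 1} c_k\, u^k, \qquad c_k = -\sum_{q=0}^{k-1} \beta_{q,g,\nu}\, d_{k-1-q} + \sum_{q=0}^{k-2} \beta_{q,g,\nu}\, d_{k-2-q},
\]
with the convention $\beta_{q,g,\nu} = 0$ for $q > 3g-2$. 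For $1 \le k \le 3g-1$ the only contribution to $\beta_{k-1,g,\nu}$ comes from the first sum at $q = k-1$, producing the coefficient $-d_0 = -\nu^{-(5g-1)}$; every other $\beta$ appearing in $c_k$ has index at most $k-2$. Hence $c_k = -\nu^{-(5g-1)}\, \beta_{k-1,g,\nu} + M_k$, where $M_k$ is linear in $\beta_{0,g,\nu},\ldots,\beta_{k-2,g,\nu}$.

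Next I would substitute $z_{0,\nu} = \sum_{j \ge 1} a_{j(\nu-1),0,\nu}\, n^{-j(\nu-1)/\nu}$, a representation that follows by induction from the string equation \eqref{eq:zggendef} because its effective small parameter is $(\gamma n^{\nu-1})^{-1/\nu}$. For each $k$,
\[
u^k = a_{\nu-1,0,\nu}^k\, n^{-k(\nu-1)/\nu} + \bigl(\text{$\beta$-independent subleading terms in } n^{-1/\nu}\bigr).
\]
Collecting the coefficient of $n^{-i(\nu-1)/\nu}$ in $\sum_k c_k u^k$, two types of contributions arise: the leading order of $c_i u^i$, which yields $-\nu^{-(5g-1)} a_{\nu-1,0,\nu}^i\, \beta_{i-1,g,\nu}$ together with $M_i \cdot a_{\nu-1,0,\nu}^i$; and subleading pieces of $c_k u^k$ for $k < i$, each of which is linear in $\beta_{0,g,\nu},\ldots,\beta_{k-1,g,\nu}$ with $k-1 \le i-2$. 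Both of the latter contributions involve only $\beta_{j,g,\nu}$ with $j \le i-2 < i-1$, so lumping them into $L_{i,g,\nu}$ produces exactly \eqref{eq:aigleadinglemmagen}.

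The hard part will be the two-level bookkeeping: carefully tracking how the $u$-expansion and the subsequent $n^{-1/\nu}$-expansion interact, and in particular verifying that no subleading piece of $z_{0,\nu}$ (coming from the inductive solution of the string equation) contaminates the $\beta_{i-1,g,\nu}$ coefficient. The substantive content of the lemma is really the compatibility statement that these two hierarchies line up so that $\beta_{i-1,g,\nu}$ first appears at exactly the $n$-order one would naively predict, which is what makes the resulting triangular system for $\{\beta_{i,g,\nu}\}$ well-defined and solvable.
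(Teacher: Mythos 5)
Your proposal is correct and is essentially the paper's own argument: both expand the rational form \eqref{eq:zggenrationaldef} around $z_{0,\nu}=0$, substitute the Puiseux series \eqref{eq:zggenconv}, and collect the coefficient of $n^{-i(\nu-1)/\nu}$, finding that $\beta_{i-1,g,\nu}$ enters only through the product of the leading term $-\nu^{-(5g-1)}$ of the prefactor with $a_{\nu-1,0,\nu}^{i}$, while all remaining contributions are linear in $\beta_{j,g,\nu}$ with $j\le i-2$. Your intermediate bookkeeping via the coefficients $c_k$ of powers of $u=z_{0,\nu}$, instead of the paper's splitting of $P_{3g-2,\nu}$ into blocks, is only a cosmetic reorganization of the same collect-terms computation.
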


\begin{proof}
With $a_{i(\nu-1),g,\nu}$ being the coefficient of $n^{-i(\nu-1) /\nu}$ in the expansion \eqref{eq:zggenseriesdef} for $z_{g,\nu}$, we simply need to collect terms in \eqref{eq:zggenrationaldef} at this order. We will view the rational function \eqref{eq:zggenrationaldef} as the product of two terms
\begin{align}
    P_{3g-2,\nu}(z_{0,\nu}) = \sum_{k=0}^{3g-2}\beta_{k,g,\nu}\, z_{0,\nu}^k &=
    \sum_{k=0}^{i-2}\beta_{k,g,\nu}\,n^{-k(\nu-1)/\nu}\Big(\sum_{j=0}^\infty a_{j+\nu-1,0,\nu}\, n^{-j/\nu}\Big)^k \nonumber\\
    &\ \ +\beta_{i-1,g,\nu}\,n^{-(i-1)(\nu-1)/\nu}\Big(\sum_{j=0}^\infty a_{j+\nu-1,0,\nu}\, n^{-j/\nu}\Big)^{i-1} \nonumber\\
    &\ \ +\sum_{k=i}^{3g-2}\beta_{k,g,\nu}\,n^{-k(\nu-1)/\nu}\Big(\sum_{j=0}^\infty a_{j+\nu-1,0,\nu}\, n^{-j/\nu}\Big)^k \nonumber\\
    &=  \Tilde{L}_{i,g,\nu} + \dfrac{a^{i-1}_{\nu-1,0,\nu}}{n^{(i-1)(\nu-1) /\nu}}\beta_{i-1,g,\nu} + \mathcal{O}\left(n^{-((i-1)(\nu-1)+1))/\nu}\right)
    \end{align}
and
    \begin{align}
    & \dfrac{z_{0,\nu}(z_{0,\nu}-1)}{(\nu-(\nu-1)z_{0,\nu})^{5g-1}} = \dfrac{-z_{0,\nu}}{\nu^{5 g - 1}} \big(1-z_{0,\nu}\big)\left(1-\dfrac{\nu-1}{\nu} z_{0,\nu}\right)^{1-5 g}\nonumber\\
    = & \dfrac{-a_{\nu-1,0,\nu}\, n^{-(\nu-1)/\nu} \left(1+{\mathcal O}\Big( n^{-1/\nu}\Big)\right)}{\nu^{5 g - 1}} \left(1-{\mathcal O}\Big( n^{-(\nu-1)/\nu}\Big)\right) \left(1- {\mathcal O}\Big( n^{-(\nu-1)/\nu}\Big)\right)^{1-5g}\nonumber\\
    = &\dfrac{-a_{\nu-1,0,\nu}}{\nu^{5g-1}n^{(\nu-1) /\nu}}+ \mathcal{O}\left(n^{-1}\right), \label{eq:ltilde}
\end{align}
where $\Tilde{L}_{i,g,\nu}$ are the terms collected from $\beta_{j,g,\nu}z_{0,\nu}^j$ (for $j< i-1$) in $P_{3g-2,\nu}$, which is linear by inspection. Since none of the terms in \eqref{eq:ltilde} depend on any of the $\beta_{j,g,\nu}$, the result follows.
\end{proof}

\begin{rem}\label{rem:coeffsuffice}
Per Lemma \ref{lemma:zgcoeffdependence}, knowing the $3g-1$ coefficients $\{a_{i(\nu-1),g,\nu}\}_{i=1}^{3g-1}$ in the expansion of $z_{g,\nu}$ guarantees that one can always solve for all $\beta_{i,g,\nu}$, through a non-singular triangular system. In turn, one then has the entirety of the $z_{g,\nu}$ expansion \eqref{eq:zggenseriesdef}, by expanding its rational form  \eqref{eq:zggenrationaldef}. 
\end{rem}

\begin{lemma}
Let $n,N,r_{2\nu}\rightarrow\infty$ at related rates $\alpha = n/N$ and $\xi = n^{\nu-1}/r_{2 \nu}$, then $x_{n,\frac{n}{\alpha},\frac{n^{\nu-1}}{\xi}}$ has an asymptotic expansion in this multi-scale regime of the form
\begin{equation}
\label{eq:shiftedgenuserrorbound}
\left| x_{n,\frac{n}{\alpha},\frac{n^{\nu-1}}{\xi}} - \alpha \sum_{g=0}^m \dfrac{z_{g,\nu}(\gamma)}{n^{2g}}\right| < \dfrac{K_{5,m}(\alpha,\gamma)}{n^{2m+2+\frac{\nu-1}{\nu}}}, \ \ \  m\geq0,
\end{equation}
where $\gamma = r_{2 \nu}/N^{\nu-1} = \alpha^{\nu-1} / \xi$ is independent of $n$.
\end{lemma}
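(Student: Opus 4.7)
The plan is to obtain the bound (\ref{eq:shiftedgenuserrorbound}) by applying the basic genus expansion from (\ref{eq:genusexpansion}) one order beyond what is strictly required, and then absorbing the discarded term into the error estimate. This trick is what produces the extra gauge factor of $n^{-(\nu-1)/\nu}$ relative to the original $n^{-(2m+2)}$ bound.

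First, I would invoke the genus expansion at truncation order $m+1$, which yields
\begin{equation*}
\left|x_{n,N,r_{2\nu}}-\alpha \sum_{g=0}^{m+1}\dfrac{z_{g,\nu}(n,N,r_{2\nu})}{n^{2g}}\right|<\dfrac{K_{1,m+1}(N,r_{2\nu})}{n^{2m+4}},
\end{equation*}
uniformly for $\alpha\simeq 1$ and $r_{2\nu}>0$, where the extension of uniformity along the positive real $r_{2\nu}$-axis is secured by Corollary \ref{cor:strip} together with the independent result of Bleher--Its \cite{bib:bi}. Then the triangle inequality gives
\begin{equation*}
\left|x_{n,\frac{n}{\alpha},\frac{n^{\nu-1}}{\xi}}-\alpha \sum_{g=0}^{m}\dfrac{z_{g,\nu}}{n^{2g}}\right|\le \dfrac{K_{1,m+1}(n/\alpha,n^{\nu-1}/\xi)}{n^{2m+4}}+\dfrac{\alpha\,|z_{m+1,\nu}|}{n^{2m+2}}.
\end{equation*}

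Next, I would use the fact that under the parameter substitutions $N=n/\alpha$ and $r_{2\nu}=n^{\nu-1}/\xi$, the combination $\gamma=\alpha^{\nu-1}/\xi$ is independent of $n$, so the convergent expansion (\ref{eq:zggenseriesdef}) gives
\begin{equation*}
z_{m+1,\nu}=a_{\nu-1,m+1,\nu}(\gamma)\,n^{-(\nu-1)/\nu}+O\!\left(n^{-1}\right).
\end{equation*}
Hence the second term on the right above is bounded by $C(\alpha,\gamma)/n^{2m+2+(\nu-1)/\nu}$, which already matches the target rate. Since $2m+4>2m+2+(\nu-1)/\nu$ for all $\nu\ge 1$, the first term will be subdominant provided $K_{1,m+1}$ does not grow too rapidly in $n$.

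The principal technical hurdle is therefore to control the growth of $K_{1,m+1}(n/\alpha,n^{\nu-1}/\xi)$ as $n\to\infty$. To handle this I would invoke the rescaling relation for the Freud orbit (Lemma \ref{lemma:orbitrescale} and its general-$\nu$ analogue Theorem \ref{thm:orbitrescale} of Appendix \ref{sec:generalnuproof}), together with the compatible rescaling of the coefficients $z_{g,\nu}$ inherited from the string equation (\ref{eq:zggendef}). These identities convert the large-parameter problem into one in which the error constant can be tracked explicitly, giving polynomial-in-$n$ control of $K_{1,m+1}$ that is readily absorbed by the $n^{-(2m+4)}$ factor. Combining the two estimates produces a single bound of the form $K_{5,m}(\alpha,\gamma)/n^{2m+2+(\nu-1)/\nu}$, as required. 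The delicate point is verifying that the rescaling indeed yields a constant depending only on $\alpha$ and $\gamma$, rather than reintroducing hidden $n$-dependence through the auxiliary variable $\tilde\alpha=n\alpha$; this is where the global control of $x_{n,N,r}$ guaranteed by Corollary \ref{cor:strip} is used in full strength.
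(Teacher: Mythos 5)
Your first half reproduces the paper's argument: invoke the genus expansion truncated at order $m+1$, use the triangle inequality (the paper uses the reverse triangle inequality, to the same effect) to peel off the $z_{m+1,\nu}$ term, and then use the convergent series \eqref{eq:zggenseriesdef} to see that $z_{m+1,\nu}(\gamma)=a_{\nu-1,m+1,\nu}(\gamma)\,n^{-(\nu-1)/\nu}+O(n^{-1})$ with $\gamma$ independent of $n$, so this term contributes exactly at the rate $n^{-(2m+2+(\nu-1)/\nu)}$ while the order-$(m+1)$ remainder, of size $n^{-(2m+4)}$, is subdominant. Up to that point you are on the paper's track.

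The problem is your handling of the constant. You correctly cite the uniformity of the genus-expansion error bound for $\alpha\simeq 1$ and all $r_{2\nu}>0$ (Corollary \ref{cor:strip}, Bleher--Its, and the result of \cite{bib:er} quoted in the appendix), and that uniformity is the whole point: it means $K_{1,m+1}$ can be taken to depend on $\alpha$ alone, and since the parameter point $(N,r_{2\nu})=(n/\alpha,n^{\nu-1}/\xi)$ lies in the region $\{\alpha\simeq 1,\ r_{2\nu}>0\}$ for every $n$, there is simply no $n$-dependence in the constant to control --- the proof is finished at that stage, which is exactly how the paper closes it. Instead you re-introduce a constant $K_{1,m+1}(N,r_{2\nu})$, declare that controlling its growth as $n\to\infty$ is ``the principal technical hurdle,'' and propose to handle it with the Freud-orbit rescaling of Theorem \ref{thm:orbitrescale}. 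That detour fails as described: rescaling by $\sigma=n$ sends $(n/\alpha,\,n^{\nu-1}/\xi)$ to $(1/\alpha,\,1/\xi)$, but then the 't Hooft ratio of the rescaled problem is $n\alpha$, which leaves the neighborhood of $1$ where the genus expansion is valid, so no statement about its error constant can be transported this way; Corollary \ref{cor:strip} gives uniformity in $r$ (equivalently $t$), not in the 't Hooft parameter, and cannot repair this --- this is precisely the ``hidden $n$-dependence through $\tilde\alpha=n\alpha$'' you flag in your last sentence and leave unresolved. In the paper the rescaling machinery belongs exclusively to the center-manifold side (Lemma \ref{lemma:appnrdegreesub}, Theorem \ref{thm:orbitrescale}, Lemma \ref{lemma:centeradaptbound}); for the present lemma you should delete the detour and use the $r_{2\nu}$-uniformity you already quoted, which turns your sketch into the paper's proof.
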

\begin{proof}
First, similar to Theorem \ref{thm:31}, when $n,N\rightarrow\infty$ at the related rate $\alpha = n/N$, the coefficient $x_{n,N,r_{2\nu}}=x_{n,n/\alpha,r_{2\nu}}$ has an asymptotic expansion (the genus expansion) of the form \cite{bib:emp08}
\begin{equation}
\label{eq:rfixedasympbound}
\left| x_{n,\frac{n}{\alpha},r_{2 \nu}} - \alpha \sum_{g=0}^m \dfrac{z_{g,\nu}(\alpha,r_{2 \nu})}{n^{2g}}\right| < \dfrac{K_{1,m}(\alpha)}{n^{2m+2}}, \ \ \  m\geq0,
\end{equation}
where it was also shown in \cite{bib:er} that the constants $K_{1,m}$ are uniform for $r_{2\nu}>0$. This fact is critical as it allows us to vary $r_{2\nu}$ and still maintain control over the error in \eqref{eq:rfixedasympbound}. Applying the reverse triangle inequality with this bound, we have the inequalities (recall that $r_{2 \nu} = \frac{n^{\nu-1}}{\xi}$ by definition of $\xi$):
\begin{align*}
    \left\vert x_{n,\frac{n}{\alpha},\frac{n^{\nu-1}}{\xi}} - \alpha \sum_{g=0}^m \dfrac{z_{g,\nu}(\gamma)}{n^{2g}}\right\vert
     -\left \vert \alpha\dfrac{z_{m+1,\nu}(\gamma)}{n^{2m+2}} \right|
     & \leq  \left \vert \left(x_{n,\frac{n}{\alpha},\frac{n^{\nu-1}}{\xi}} - \alpha \sum_{g=0}^m \dfrac{z_{g,\nu}(\gamma)}{n^{2g}}\right)-\alpha\dfrac{z_{m+1,\nu}(\gamma)}{n^{2m+2}}\right\vert \\
    &= \left \vert x_{n,\frac{n}{\alpha},\frac{n^{\nu-1}}{\xi}} - \alpha \sum_{g=0}^{m+1} \dfrac{z_{g,\nu}(\gamma)}{n^{2g}}\right\vert \\
    & < \dfrac{K_{1,m+1}(\alpha)}{n^{2m+4}}.
\end{align*}
From the leading order term observed in \eqref{eq:zggenseriesdef}, we know that $z_{g,\nu}(\gamma) = a_{\nu-1,g,\nu}(\gamma) \left(n^{\frac{-(\nu-1)}{\nu}}\right) + O(n^{-1})$ and that $\gamma$ is independent of $n$ when $n,N,r_{2\nu}\rightarrow\infty$ at related rates $\alpha = n/N$ and $\xi = n^{\nu-1}/r_{2 \nu}$. With $\nu \geq 2$ we also have that $2m+4>2m+2+\frac{\nu-1}{\nu}$. Combining these facts with the previous inequality completes the proof:
\begin{align*}
    \left\vert x_{n,\frac{n}{\alpha},\frac{n^{\nu-1}}{\xi}} - \alpha \sum_{g=0}^m \dfrac{z_{g,\nu}(\gamma)}{n^{2g}}\right\vert   & < \left \vert \alpha\dfrac{z_{m+1,\nu}(\gamma)}{n^{2m+2}} \right|+\dfrac{K_{1,m+1}(\alpha)}{n^{2m+4}} < \dfrac{K_{5,m}(\alpha,\gamma)}{n^{2m+2+\frac{\nu-1}{\nu}}}.
\end{align*}

\end{proof}

\begin{lemma} \label{lemma:floorineq}
For $d,\nu \in \mathbb{Z}, d\geq \nu-1, \text{ and } \nu \geq 2$, we have the inequality

\begin{equation*}
    2\lfloor \frac{d-\nu+1}{2\nu} \rfloor+2+\frac{\nu-1}{\nu} \geq \dfrac{d+1}{\nu}.
\end{equation*}
\end{lemma}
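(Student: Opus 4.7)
The plan is to unpack the floor function on the left-hand side and then reduce the claim to an elementary inequality on $d$. Concretely, set
\[
k \;:=\; \left\lfloor \frac{d-\nu+1}{2\nu} \right\rfloor,
\]
so that by definition $k \leq \frac{d-\nu+1}{2\nu} < k+1$. The desired inequality is equivalent, after clearing the denominator $\nu$ on both sides, to
\[
2\nu k + 3\nu - 1 \;\geq\; d+1,
\]
i.e.\ to $d \leq 2\nu k + 3\nu - 2$. So the goal reduces to deriving this integer bound on $d$ from the definition of $k$.

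From the upper half of the floor inequality, $\frac{d-\nu+1}{2\nu} < k+1$, one gets the real inequality $d < 2\nu k + 3\nu - 1$. The key step is to use that $d \in \mathbb{Z}$: a strict integer inequality $d < 2\nu k + 3\nu - 1$ (whose right-hand side is itself an integer) upgrades to $d \leq 2\nu k + 3\nu - 2$, which is exactly the bound we need. Reversing the algebraic rearrangement above then yields the stated inequality.

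There is essentially no obstacle here; the statement is a routine manipulation of the floor function. The only point requiring any care is the integrality upgrade in the previous paragraph — without it one would lose a factor of $\frac{1}{\nu}$ and the inequality would fail to be sharp enough. The hypothesis $d \geq \nu - 1$ only serves to guarantee $k \geq 0$ (so that the floor expression and the partial sum in Lemma~\ref{lemma:bivariate} are meaningful); it plays no role in the inequality itself, which is valid for all integers $d$ and $\nu \geq 2$.
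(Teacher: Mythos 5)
Your proof is correct and is essentially the paper's argument in a slightly different dress: the paper writes $d-\nu+1 = 2\nu m + r$ with $0 \le r \le 2\nu-1$ and reduces the claim to $r \le 2\nu - 1$, which is exactly your integrality upgrade from $d < 2\nu k + 3\nu - 1$ to $d \le 2\nu k + 3\nu - 2$. Your side remark that $d \ge \nu-1$ is not actually needed is also accurate.
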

\begin{proof}
Write 
\begin{equation}
    0 \le d-\nu+1 = 2\nu m + r, \text{ with } 0 \le r \leq 2\nu-1, \quad  r,m \in \mathbb{Z}
\end{equation} so that $\lfloor \frac{d-\nu+1}{2\nu} \rfloor =  m \ge 0$. Then,
\begin{equation*}
    \dfrac{d+1}{2\nu} = m +\dfrac{r+\nu}{2\nu}.
\end{equation*}
Thus, 
\begin{equation*}
\lfloor \frac{d-\nu+1}{2\nu} \rfloor+1+\frac{\nu-1}{2\nu}  \geq \dfrac{d+1}{2\nu}
\Longleftrightarrow m+1 +\frac{\nu-1}{2\nu}  \geq m +\dfrac{r+\nu}{2\nu} \Longleftrightarrow 2\nu-1 \geq r,
\end{equation*}
and the result follows.
\end{proof}

\begin{lemma}\label{lemma:genusadaptbound}
Let $n,N,r_{2\nu}\rightarrow\infty$ at related rates $\alpha = n/N$ and $\xi = n^{\nu-1}/r_{2 \nu}$ and define the partial sums
\begin{equation}\label{eq:adapgenusps}
    \mathcal{G}^{(d)}:=\alpha\sum_{j=0}^{\lfloor \frac{d-\nu+1}{2\nu} \rfloor}\sum_{i=\nu-1}^{d-2j\nu}\dfrac{a_{i,j,\nu}(\gamma)}{n^{2j+i/\nu}}.
\end{equation}
Then, these partial sums serve as an equivalent asymptotic sequence for $x_{n,N,r_{2\nu}}$ in this scaling limit, meaning that
\begin{equation}
    \left|x_{n,\frac{n}{\alpha},\frac{n^{\nu-1}}{\xi}}- \mathcal{G}^{(d)}\right| < \dfrac{K_{3,d}(\alpha,\gamma)}{n^{(d+1)/\nu}}, \ \ \  d\geq\nu-1
\end{equation}
for constants $K_{3,d}(\alpha,\gamma)$ depending on $\alpha$ and $\gamma$.
\end{lemma}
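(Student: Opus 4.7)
The plan is to combine two independent error estimates via the triangle inequality: the shifted-genus bound \eqref{eq:shiftedgenuserrorbound}, which controls the gap between $x_n$ and the truncation $\alpha\sum_{g=0}^m z_{g,\nu}(\gamma)/n^{2g}$ of the genus series, together with the convergent Puiseux-type expansion \eqref{eq:zggenseriesdef}, which allows each $z_{g,\nu}(\gamma)$ in that truncation to be replaced by a polynomial in $n^{-1/\nu}$. The key is to choose the genus cutoff $m$ so that the shifted-genus error is already $O(n^{-(d+1)/\nu})$, and then to truncate each individual series for $z_{g,\nu}$ at the exact order dictated by the double sum defining $\mathcal{G}^{(d)}$, so that each tail, after multiplication by $n^{-2g}$, is also $O(n^{-(d+1)/\nu})$.

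First, I would set $m = \lfloor (d-\nu+1)/(2\nu)\rfloor$ in \eqref{eq:shiftedgenuserrorbound} to obtain
\[
\left| x_{n,\frac{n}{\alpha},\frac{n^{\nu-1}}{\xi}} - \alpha \sum_{g=0}^m \dfrac{z_{g,\nu}(\gamma)}{n^{2g}}\right| < \dfrac{K_{5,m}(\alpha,\gamma)}{n^{2m+2+(\nu-1)/\nu}}.
\]
By Lemma \ref{lemma:floorineq}, the exponent satisfies $2m+2+(\nu-1)/\nu \geq (d+1)/\nu$, so this contribution is already of the claimed form. Next, for each $g \in \{0,\ldots,m\}$, the choice of $m$ guarantees $d - 2g\nu \geq \nu-1$, so I would truncate \eqref{eq:zggenseriesdef} at index $i = d - 2g\nu$. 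A standard Taylor-type tail bound applied to the convergent expansion yields
\[
\left| z_{g,\nu}(\gamma) - \sum_{i=\nu-1}^{d-2g\nu} \dfrac{a_{i,g,\nu}(\gamma)}{n^{i/\nu}}\right| \leq \dfrac{C_g(\gamma)}{n^{(d-2g\nu+1)/\nu}}
\]
for some constant $C_g(\gamma)$ depending continuously on $\gamma$. Multiplying through by $\alpha\, n^{-2g}$ and invoking the algebraic identity $2g+(d-2g\nu+1)/\nu = (d+1)/\nu$, then summing over the finitely many $g = 0,\ldots,m$, bounds
\[
\left| \alpha \sum_{g=0}^m \dfrac{z_{g,\nu}(\gamma)}{n^{2g}} - \mathcal{G}^{(d)}\right| \leq \dfrac{\alpha \sum_{g=0}^m C_g(\gamma)}{n^{(d+1)/\nu}}.
\]
A final triangle inequality combines the two bounds, with $K_{3,d}(\alpha,\gamma)$ absorbing $K_{5,m}(\alpha,\gamma) + \alpha\sum_{g=0}^m C_g(\gamma)$ (the leftover factor coming from $n^{2m+2+(\nu-1)/\nu}$ versus $n^{(d+1)/\nu}$ being harmless for large $n$).

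The main obstacle is justifying the Taylor remainder estimate with constant $C_g(\gamma)$ for each $z_{g,\nu}$. Convergence of \eqref{eq:zggenseriesdef} for each fixed $g$ follows from the Newton-Puiseux expansion \eqref{eq:zggenconv} for $z_{0,\nu}$ applied through the rational formula \eqref{eq:zggenrationaldef}; standard complex-analytic estimates on any compact subset of the convergence region in $\gamma$ then produce $C_g(\gamma)$. Because the cutoff $m$ depends only on $d$, only finitely many such constants enter, so no uniformity in $g$ is required beyond a finite sum. The substantive content of the argument is therefore the exponent-matching provided by Lemma \ref{lemma:floorineq} together with the identity $2g+(d-2g\nu+1)/\nu = (d+1)/\nu$, which explain precisely why the bi-indexed sum \eqref{eq:adapgenusps} is the correct partial-sum structure in the $n^{1/\nu}$ gauge.
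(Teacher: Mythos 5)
Your proposal is correct and follows essentially the same route as the paper's proof: truncate each $z_{g,\nu}$ series at index $d-2g\nu$ with a tail bound of order $n^{-(d-2g\nu+1)/\nu}$, use the exponent identity $2g+(d-2g\nu+1)/\nu=(d+1)/\nu$ and Lemma \ref{lemma:floorineq}, and combine with the bound \eqref{eq:shiftedgenuserrorbound} at genus cutoff $m=\lfloor(d-\nu+1)/(2\nu)\rfloor$ via the triangle inequality. The only cosmetic difference is that the paper groups the truncation errors through an intermediate partial sum $\mathcal{G}_d$ of the genus expansion before comparing with $\mathcal{G}^{(d)}$, which is exactly the decomposition you describe.
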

\begin{proof}
First we use the convergent series \eqref{eq:zggenseriesdef} to express $z_{g,\nu}$ as a finite sum plus remainder:
 \begin{equation}
    z_{g,\nu}(\gamma) = \sum_{i=\nu-1}^{d-2g\nu} a_{i,g,\nu}(\gamma)n^{-i/ \nu} +R_{g,d}(n,\gamma), \qquad g \le \frac{d-\nu+1}{2\nu} \label{eq:expwithrem}
\end{equation}
with $R_{g,d}(n,\gamma)$ asymptotically bounded by $K_{6,g,d}(\gamma)n^{-(d-2g\nu+1)/\nu}$. We denote the partial sums of the genus expansion as
\begin{equation}
    \mathcal{G}_d := \alpha\sum_{j=0}^{\lfloor \frac{d-\nu+1}{2\nu} \rfloor} \dfrac{z_{j,\nu}(\gamma)}{n^{2j}}.
\end{equation}
The difference between $\mathcal{G}_d$ and $\mathcal{G}^{(d)}$ may be bounded as follows.
\begin{align*}
\vert\mathcal{G}_d- \mathcal{G}^{(d)}\vert & = \alpha \left\vert \sum_{j=0}^{\lfloor \frac{d-\nu+1}{2\nu} \rfloor} \left( \dfrac{1}{n^{2j}} \left( \sum_{i=\nu-1}^{d-2j\nu} a_{i,j,\nu}(\gamma)n^{-i/ \nu} +R_{j,d}(n,\gamma)\right) - \sum_{i=\nu-1}^{d-2j\nu}\dfrac{a_{i,j,\nu}(\gamma)}{n^{2j+i/\nu}} \right) \right \vert \\
& = \alpha\left\vert\sum_{j=0}^{\lfloor \frac{d-\nu+1}{2\nu} \rfloor} \dfrac{1}{n^{2j}} R_{j,d}(n,\gamma) \right\vert < \alpha\sum_{j=0}^{\lfloor \frac{d-\nu+1}{2\nu} \rfloor} K_{6,j,d}(\gamma)n^{-(d+1)/\nu}\\
& < K_{7,d}(\alpha,\gamma)n^{-(d+1)/\nu}.
\end{align*}
A simple application of the triangle inequality brings together the bound above, the estimate provided in equation \eqref{eq:shiftedgenuserrorbound}, and the inequality provided by Lemma \ref{lemma:floorineq}, to complete the proof:
\begin{align*}
   \left|x_{n,\frac{n}{\alpha},\frac{n^{\nu-1}}{\xi}}-  \mathcal{G}^{(d)}\right| &< \left|x_{n,\frac{n}{\alpha},\frac{n^{\nu-1}}{\xi}}- \mathcal{G}_d\right| + \left|\mathcal{G}_d-  \mathcal{G}^{(d)}\right| \\
   &< K_{5,\lfloor \frac{d-\nu+1}{2\nu} \rfloor}(\alpha,\gamma)n^{-(2\lfloor \frac{d-\nu+1}{2\nu} \rfloor+2+\frac{\nu-1}{\nu})} +K_{7,d}(\alpha,\gamma)n^{-(d+1)/\nu} <   K_{3,d}(\alpha,\gamma)n^{-(d+1)/\nu}.
\end{align*}

\end{proof}
This enables us to define a new asymptotic expansion for $x_n$ in the gauge $n^{-1/\nu}$, derived from the genus expansion, which we write
\begin{equation}\label{eq:adaptedgenusexpansion}
    x_{n,\frac{n}{\alpha},\frac{n^{\nu-1}}{\xi}}\sim \alpha\sum_{j=0}^{\infty}\sum_{i=\nu-1}^{\infty}\dfrac{a_{i,j,\nu}(\gamma)}{n^{2j+i/\nu}}.
\end{equation}

\section{Center Manifold Expansion with $\alpha$ and $\xi$ fixed}\label{sec:adapcent}
In what follows, for consistency with the main text, we use the phrase ``center manifold expansion" to refer to the asymptotic expansion of the Freud orbit as $n \to \infty$. It should be noted that although the connection to a center manifold has only been established when $\nu =2$ \cite{bib:elt22}, the existence of an asymptotic expansion for $\nu > 2$ is known from the work of M\'at\'e, Nevai, and Zaslavsky \cite{bib:mnz85}. To extend the discussion to the case of $2\nu$-valent maps, we start from the Freud equation \eqref{eq:generalfreudrec} instead of dpI. Establishing the validity of the center manifold expansion when $N$ and $r_{2\nu}$ grow with $n$ amounts to showing that its partial sums satisfy a rescaling condition hinted at by the string equation \eqref{eq:zggendef}. Specifically, we will apply the transformation  $N\rightarrow\sigma N, r_{2\nu}\rightarrow\sigma^{\nu-1}r_{2\nu}$.  By letting $n$ play the role of $\sigma$, we will see that the parameters $N$ and $r_{2\nu}$ can be made to go to infinity with $n$, while keeping control of the error term. This occurs because the error bounds for the rescaled expansion can be related back to the error bounds for $N$ and $r_{2\nu}$ finite by scaling out the asymptotic variable $n$. 

Take the general even weight of the form 
\begin{equation}\label{eq:generalweight}
    w(\lambda) = \exp\left[-N\left({\dfrac{\lambda^2}{2}+\dfrac{r_{2\nu}\lambda^{2\nu}} {2\nu}}\right)\right],
\end{equation}
where $\nu>1$ is a positive integer. Freud's equation \cite{bib:fre76} in this context, also referred to as the discrete string equation \cite{bib:ew}, gives:
\begin{equation}\label{eq:generalfreudrec}
    n = b_n N(J+r_{2\nu}J^{2\nu-1})_{n,n-1},
\end{equation}
where the subscript is the $(n,n-1)$ entry of the matrix sum $J+r_{2\nu}J^{2\nu-1}$ (starting row/column indexing at 0), and $J$ is the semi-infinite Jacobi matrix 

\begin{equation}
    \begin{pmatrix}
    0 &  b_1&0 & 0 &0&0 &0 \hdots \\
    b_1 & 0& b_2 & 0 &0&0 &0 \hdots \\
     0&b_2&0&b_3 & 0 &0 &0\hdots \\
    0  &0  &b_3&0&b_4  &0 &0\hdots \\
    0  &0  &0&\ddots& \ddots &\ddots  &0 \hdots \\
    \end{pmatrix}.
\end{equation}
Note that this matrix simply encodes the recurrence \eqref{eq:rec2}.
Expressed in terms of $x_n$, we find that \eqref{eq:generalfreudrec} gives
\begin{equation}\label{eq:gennurec}
    \dfrac{n}{N}= x_n + r_{2\nu}M_{\nu},
\end{equation}
where
\begin{equation} \label{eq:M_nu}
M_\nu =  M_{\nu}(x_{n+j}:|j|< \nu) = \sum_P \prod_{m=1}^\nu b^2_{n + \ell_m(P)}
= \sum_P \prod_{m=1}^\nu x_{n + \ell_m(P)}
\end{equation}
so that $M_\nu$ is a polynomial of degree $\nu$ in $x_{n+j}$ for $|j| < \nu$.
The sum here runs over a set of planar lattice walks, $P$, known as {\it Dyck paths}, which start at height $n$ and terminate at height $n-1$ and are of length 
$2\nu - 1$. $\ell_{m}(P)$ denotes the deviation of the path $P$ from height $n$ at step $m$. This representation implies that $|\ell_{m}(P)| < \nu$. See 
\cite{bib:er14} for more details on this combinatorial interpretation.

 The key feature of $M_{\nu}$ for us is the useful rescaling condition: 
\begin{equation}
    M_{\nu}(\sigma x_{n+j}:|j|< \nu)= \sigma^{\nu}M_{\nu}(x_{n+j}:|j|< \nu).
\end{equation}
 With this known structure of $M_{\nu}$, let us denote the center manifold expansion, for general $\nu$, as 
\begin{equation}\label{eq:generalnucenter}
    x_n \sim \sum_{i=-1}^\infty c_i n^{-\frac{i}{\nu}}=:\sum_{i=-1}^\infty c_i( N, r_{2\nu}) n^{-\frac{i}{\nu}}.
\end{equation}
As mentioned above, its existence is known from \cite{bib:mnz85}.

\begin{lemma}\label{lemma:appnrdegreesub} The coefficients $c_i(N,r_{2\nu})$ of the center manifold expansion \eqref{eq:generalnucenter} satisfy the rescaling condition
\begin{equation}
    c_i( \sigma N, \sigma^{\nu-1}r_{2\nu}) = \dfrac{1}{\sigma}c_i(N,r_{2\nu}).
\end{equation}
\end{lemma}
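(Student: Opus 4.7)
The strategy is to exploit an exact scale-invariance of the Freud equation that mirrors the rescaling in the statement, and then push that invariance onto the asymptotic expansion \eqref{eq:generalnucenter} by uniqueness of asymptotic coefficients.

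\textbf{Step 1 (scale-invariance of the Freud equation).} Start from
\[
\frac{n}{N} \;=\; x_n \;+\; r_{2\nu}\, M_\nu(x_{n+j}: |j|<\nu),
\]
and apply the substitution $(N, r_{2\nu}, x_n) \mapsto (\sigma N,\, \sigma^{\nu-1} r_{2\nu},\, x_n/\sigma)$. Using the homogeneity $M_\nu(\sigma^{-1} x_{n+j}) = \sigma^{-\nu} M_\nu(x_{n+j})$ noted just after \eqref{eq:M_nu}, the right-hand side becomes $\sigma^{-1}\big(x_n + r_{2\nu} M_\nu\big) = \sigma^{-1}(n/N)$, which equals the transformed left-hand side. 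Hence if $\{x_n(N,r_{2\nu})\}$ solves the Freud equation with parameters $(N,r_{2\nu})$, then $\{\sigma^{-1} x_n(N,r_{2\nu})\}$ solves it with parameters $(\sigma N, \sigma^{\nu-1} r_{2\nu})$.

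\textbf{Step 2 (identification with the rescaled Freud orbit).} The Freud orbit is singled out among all solutions of \eqref{eq:gennurec} by the requirement of remaining positive for all $n \in \mathbb{N}$ (equivalently, by corresponding to the recurrence coefficients of the orthonormal polynomials for the weight \eqref{eq:generalweight}). The rescaling $x_n \mapsto x_n/\sigma$ preserves positivity, so the rescaled orbit is again the distinguished positive solution at the new parameter values. Consequently
\[
x_n(\sigma N,\, \sigma^{\nu-1} r_{2\nu}) \;=\; \frac{1}{\sigma}\, x_n(N, r_{2\nu})
\]
for every $n$ in the regime in which both expansions are valid. (This is the natural $\nu$-fold generalization of Lemma \ref{lemma:orbitrescale}.)

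\textbf{Step 3 (transfer to the expansion coefficients).} Substitute both sides of the equality above into the center manifold expansion \eqref{eq:generalnucenter}:
\[
\sum_{i=-1}^\infty c_i(\sigma N,\, \sigma^{\nu-1} r_{2\nu})\, n^{-i/\nu}
\;\sim\;
\sum_{i=-1}^\infty \frac{c_i(N, r_{2\nu})}{\sigma}\, n^{-i/\nu}.
\]
Since asymptotic expansions in the gauge $\{n^{-i/\nu}\}$ have unique coefficients, equating like powers of $n^{-i/\nu}$ yields the claimed identity $c_i(\sigma N, \sigma^{\nu-1} r_{2\nu}) = \sigma^{-1} c_i(N, r_{2\nu})$ for every $i \ge -1$.

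The only delicate point is Step~2, namely ensuring that the rescaled orbit is again the Freud orbit and not some other solution of \eqref{eq:gennurec}; this is handled by the standard uniqueness of the positive solution at fixed parameters (cf.\ Lew--Quarles and the appendix result Theorem \ref{thm:orbitrescale}). Once that is in hand, Steps 1 and 3 are purely mechanical, relying respectively on the homogeneity of $M_\nu$ and on the uniqueness of asymptotic coefficients in a fixed gauge.
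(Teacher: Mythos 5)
Your proposal is correct, but it takes a genuinely different route from the paper. The paper proves Lemma \ref{lemma:appnrdegreesub} directly by strong induction on $i$: substituting the expansion \eqref{eq:generalnucenter} into the Freud equation \eqref{eq:gennurec}, the base case comes from the dominant balance $\frac{1}{N}=r_{2\nu}c_{-1}^{\nu}M_{\nu}(\vec{1})$, and the inductive step writes $c_m$ in terms of $c_{m+1-\nu}$ and a polynomial $R(c_j)$, $j<m$, that is homogeneous of degree $\nu$, so the scaling propagates through the recursion. Your argument instead establishes (or invokes) the exact orbit symmetry $x_{n,\sigma N,\sigma^{\nu-1}r_{2\nu}}=\sigma^{-1}x_{n,N,r_{2\nu}}$ and then transfers it to the coefficients by uniqueness of asymptotic expansions in the gauge $\{n^{-i/\nu}\}$; that transfer step is sound, since the expansion exists at the rescaled (fixed) parameter values by \cite{bib:mnz85}. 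The paper's induction is self-contained at the level of the expansion coefficients and exhibits exactly how the homogeneity of $M_{\nu}$ drives the scaling; your route is shorter and more conceptual, but it stands or falls with the exact orbit rescaling, which in the paper is Theorem \ref{thm:orbitrescale}, proved independently via Szeg\H{o}'s Hankel-determinant formula and a change of variables in the moments (so there is no circularity in citing it here).

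One caveat on your Step 2: the claim that positivity alone singles out the Freud orbit is not supported by the cited literature. For $\nu=2$, Lew--Quarles give a \emph{one-parameter family} of positive dPI solutions (the Freud orbit is distinguished by its initial data, e.g.\ $x_0=0$ and the specific $x_1=\mu_2/\mu_0$), and for general $\nu$ the Freud equation is a higher-order recurrence for which no such uniqueness-of-positive-solutions result is established. So the "uniqueness of the positive solution" shortcut does not work as stated; what does work is exactly what you fall back on, namely the characterization of the Freud orbit through the orthonormal-polynomial recurrence coefficients for the weight \eqref{eq:generalweight}, whose behavior under $(N,r_{2\nu})\mapsto(\sigma N,\sigma^{\nu-1}r_{2\nu})$ is Theorem \ref{thm:orbitrescale}. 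With that substitution your Steps 1--2 can simply be replaced by a citation of that theorem, and the remainder of your argument is complete.
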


\begin{proof} This result is proved by strong induction.

\noindent\underline{Base case}:
Considering the dominant balance of the equation \eqref{eq:gennurec}, we find that $c_{-1}$ is defined by the equation
\begin{equation}
\dfrac{ 1}{N}= r_{2\nu}c_{-1}^\nu M_{\nu}(\Vec{1})
\end{equation}
where the notation $\Vec{1}$ means that all the coefficients $x_{n+j}$ in \eqref{eq:M_nu} have been set equal to 1; in other words, $M_{\nu}(\Vec{1})$ simply counts unweighted paths.
Thus
\[
\frac{1}{\sigma N}=\big(\sigma^{\nu-1} r_{2\nu}\big) \left(\frac{c_{-1}}{\sigma}\right)^\nu M_{\nu}(\Vec{1}),
\]
indicating that $c_{-1}$ satisfies the desired scaling.

\noindent\underline{Inductive step}:
Assume this rescaling holds for all $c_i$ with $i<m$ and take $m>-1$. The defining equation for $c_m$ is derived from satisfying equation \eqref{eq:gennurec} at order $n^{1-\frac{m+1}{\nu}}$ , when $x_n$ is substituted with 
\begin{equation}
    \sum_{i=-1}^m c_i n^{-\frac{i}{\nu}}.
\end{equation}
At this order we find
\begin{equation}
    0=c_{m+1-\nu}+ r_{2\nu}[c_{m}M(\Vec{1})c_{-1}^{\nu-1}+R(c_j)] \Longleftrightarrow c_m = -\dfrac{c_{m+1-\nu}}{r_{2\nu}M_{\nu}(\Vec{1})c_{-1}^{\nu-1}}-\dfrac{R(c_j)}{M_{\nu}(\Vec{1})c_{-1}^{\nu-1}}
\end{equation}
where we can define $c_{j}=0,$ for  $j<-1$, and $R(c_j)$ is a homogeneous polynomial of degree $\nu$ in $c_j$ for $j<m$  (the remaining terms from $M_{\nu}$ at order $n^{1-\frac{m+1}{\nu}}$ which did not contain $c_m$).

Thus by the homogeneity of $R$ and the inductive hypothesis, we have that the lemma follows.
\end{proof}

\begin{cor}\label{cor:cmpsrescale}
Partial sums of the center manifold expansion \eqref{eq:generalnucenter} satisfy the rescaling 

\begin{equation*}
   \sum_{i=-1}^m c_i(\sigma N, \sigma^{\nu-1}r_{2\nu}) n^{-\frac{i}{\nu}}  = \dfrac{1}{\sigma} \sum_{i=-1}^m c_i(N, r_{2\nu}) n^{-\frac{i}{\nu}}. 
\end{equation*}
\end{cor}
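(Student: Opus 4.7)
The plan is to derive this corollary as an immediate consequence of Lemma \ref{lemma:appnrdegreesub}, by applying that lemma termwise inside the finite sum. Concretely, I would begin by writing out the left-hand side explicitly, index by index, and then replacing each coefficient $c_i(\sigma N, \sigma^{\nu-1} r_{2\nu})$ by $\frac{1}{\sigma} c_i(N, r_{2\nu})$ using Lemma \ref{lemma:appnrdegreesub}. Since the rescaling factor $\frac{1}{\sigma}$ is independent of the summation index $i$, it can be pulled out in front of the sum, yielding the right-hand side.

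More precisely, I would present the two-line computation
\[
\sum_{i=-1}^m c_i(\sigma N, \sigma^{\nu-1} r_{2\nu})\, n^{-i/\nu} \;=\; \sum_{i=-1}^m \frac{1}{\sigma}\, c_i(N, r_{2\nu})\, n^{-i/\nu} \;=\; \frac{1}{\sigma} \sum_{i=-1}^m c_i(N, r_{2\nu})\, n^{-i/\nu},
\]
where the first equality invokes Lemma \ref{lemma:appnrdegreesub} applied to each coefficient, and the second uses linearity of the finite sum.

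There is no genuine obstacle here: the corollary is purely formal once the coefficientwise rescaling is in hand, and the only thing to check is that the factor $n^{-i/\nu}$, which plays no role in the rescaling, is inert under the transformation $N \to \sigma N$, $r_{2\nu} \to \sigma^{\nu-1} r_{2\nu}$ (it depends only on $n$ and $\nu$). The essential content of the result therefore lies entirely in Lemma \ref{lemma:appnrdegreesub}; the corollary is simply the statement that this termwise identity propagates to any truncation of the center manifold expansion, which is exactly what is needed later when comparing partial sums of the two expansions in Appendix \ref{sec:compare}.
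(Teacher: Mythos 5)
Your proposal is correct and is exactly the argument the paper intends: the corollary is stated without proof because it follows immediately from the termwise rescaling of Lemma \ref{lemma:appnrdegreesub} together with linearity of the finite sum, which is precisely your two-line computation.
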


\subsection{ Freud Orbit Rescaling}\label{sec:generalnuproof}
Let us write $x_{n,N,r_{2\nu}}=:x_n(N,r_{2\nu})$ to emphasize dependence on parameters $N$ and $r_{2 \nu}$.

\begin{theorem}\label{thm:orbitrescale}
The $x_{n,N,r_{2\nu}}$ satisfy the following rescaling:
\begin{equation}
    x_{n,\sigma N,\sigma^{\nu-1}r_{2\nu}}=\dfrac{1}{\sigma}x_{n,N,r_{2\nu}}.
\end{equation}
\end{theorem}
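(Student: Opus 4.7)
The plan is to derive the rescaling identity at the level of the orthogonal polynomials themselves, from which the claim for $x_n = b_n^2$ will drop out by inspection of the three-term recurrence. The key observation is that the weight $w(\lambda)$ in \eqref{eq:generalweight} has a built-in scaling symmetry: the change of variable $\lambda = \mu/\sqrt\sigma$ sends the weight with parameters $(\sigma N, \sigma^{\nu-1} r_{2\nu})$ back to the weight with parameters $(N, r_{2\nu})$.

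First I would verify this symmetry directly. Denote by $\tilde w$ the weight with the rescaled parameters. A quick computation gives $\sigma N \cdot (\mu/\sqrt\sigma)^2/2 = N\mu^2/2$ and $\sigma N \cdot \sigma^{\nu-1} r_{2\nu}(\mu/\sqrt\sigma)^{2\nu}/(2\nu) = N r_{2\nu}\mu^{2\nu}/(2\nu)$, so $\tilde w(\mu/\sqrt\sigma) = w(\mu)$. Next, letting $\{\tilde\pi_n\}$ be the monic orthogonal polynomials for $\tilde w$, I would substitute $\lambda = \mu/\sqrt\sigma$ into the orthogonality relation
\[
\int \tilde\pi_n(\lambda)\tilde\pi_m(\lambda)\tilde w(\lambda)\,d\lambda = 0,\qquad n\ne m,
\]
which, after multiplication by the constant $\sqrt\sigma\cdot \sigma^{(n+m)/2}$, becomes
\[
\int \bigl[\sigma^{n/2}\tilde\pi_n(\mu/\sqrt\sigma)\bigr]\bigl[\sigma^{m/2}\tilde\pi_m(\mu/\sqrt\sigma)\bigr]w(\mu)\,d\mu = 0.
\]
Since $\sigma^{n/2}\tilde\pi_n(\mu/\sqrt\sigma)$ is monic of degree $n$ in $\mu$, uniqueness of the monic orthogonal family for $w$ forces
\[
\sigma^{n/2}\tilde\pi_n(\mu/\sqrt\sigma) = \pi_n(\mu),\qquad \text{i.e.,}\qquad \tilde\pi_n(\lambda) = \sigma^{-n/2}\pi_n(\sqrt\sigma\,\lambda).
\]

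Finally, I would substitute this identity into the three-term recurrence $\lambda\tilde\pi_n(\lambda) = \tilde\pi_{n+1}(\lambda) + \tilde b_n^2\,\tilde\pi_{n-1}(\lambda)$, set $\mu = \sqrt\sigma\,\lambda$, and clear the common factor $\sigma^{-(n+1)/2}$. The result is
\[
\mu\,\pi_n(\mu) = \pi_{n+1}(\mu) + \sigma\,\tilde b_n^2\,\pi_{n-1}(\mu),
\]
which, compared with $\mu\,\pi_n(\mu) = \pi_{n+1}(\mu) + b_n^2\,\pi_{n-1}(\mu)$, yields $\sigma\,\tilde b_n^2 = b_n^2$, precisely the claim. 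The only delicate point is the bookkeeping of $\sigma^{n/2}$ powers in the last step; everything else reduces to the elementary scale invariance of the weight and the uniqueness of monic orthogonal polynomials. An equivalent route would be to substitute $\tilde x_n := x_{n,N,r_{2\nu}}/\sigma$ directly into Freud's equation \eqref{eq:gennurec} with parameters $(\sigma N, \sigma^{\nu-1}r_{2\nu})$, use the homogeneity $M_\nu(\sigma\,\cdot\,) = \sigma^\nu M_\nu(\,\cdot\,)$ to see that the rescaled sequence solves it, and then invoke uniqueness of the Freud orbit (via positivity together with initial data inherited from the orthogonal polynomial definition) to conclude.
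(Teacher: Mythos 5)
Your argument is correct, but it takes a genuinely different route from the paper. Both proofs start from the same scaling symmetry of the weight under $\lambda \mapsto \lambda/\sqrt{\sigma}$, but the paper exploits it at the level of moments and Hankel determinants: it shows $\mu_n(\sigma N,\sigma^{\nu-1}r_{2\nu}) = \sigma^{-(n+1)/2}\mu_n(N,r_{2\nu})$, deduces $D_n(\sigma N,\sigma^{\nu-1}r_{2\nu}) = \sigma^{-(n+1)^2/2}D_n(N,r_{2\nu})$ by expanding the determinant over permutations, and then reads off the claim from Szeg\H{o}'s formula $x_n = D_{n-2}D_n/D_{n-1}^2$. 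You instead transfer the symmetry to the monic orthogonal polynomials themselves, using uniqueness of the monic orthogonal family to get $\tilde\pi_n(\lambda) = \sigma^{-n/2}\pi_n(\sqrt{\sigma}\,\lambda)$, and then extract $\sigma\,\tilde b_n^2 = b_n^2$ by matching three-term recurrences (the evenness of the weight guarantees the absence of a $\pi_n$ term, so the comparison is clean). Your route avoids the Hankel-determinant bookkeeping entirely and is arguably more structural, while the paper's is explicit and purely computational; the two are of comparable length and rigor. One caution about your closing alternative: deducing the result from Freud's equation \eqref{eq:gennurec} plus ``uniqueness of the Freud orbit'' is more delicate than you suggest, since for $\nu>2$ the equation couples $x_{n+j}$ for all $|j|<\nu$, so pinning down the orbit requires specifying a full block of initial data (and positivity alone does not single it out, as the Lew--Quarles family shows); your main argument, however, does not rely on this and stands on its own.
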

\begin{proof}
Denote the $i$\textsuperscript{th} moment for the weight \eqref{eq:generalweight} as $\mu_i=\mu_i(N,r_{2\nu})$. The work of Szeg\"{o} \cite{bib:Sze39} provides explicit formulas for $x_n$ in terms of Hankel determinants, which read
\begin{equation}
x_n = \dfrac{D_{n-2}D_n}{D_{n-1}^2}, \quad \text{where }  D_n=D_n(N,r_{2\nu}): = \begin{vmatrix}
\mu_0& \mu_1&\cdots&\mu_n \\
\mu_1& \mu_2&\cdots&\mu_{n+1} \\
\vdots& \vdots&\vdots&\vdots \\
\mu_{n-1}& \mu_n&\cdots&\mu_{2n-1} \\
\mu_n& \mu_{n+1}&\cdots&\mu_{2n}
    \end{vmatrix}.
\end{equation}
The proof of theorem \ref{thm:orbitrescale} relies on the two following lemmas.
\begin{lemma}
The moments $\mu_n(N,r_{2\nu})$ satisfy the rescaling relation:
\begin{equation}
  \mu_n( \sigma N,\sigma^{\nu-1}r_{2\nu})= \sigma^{-(n+1)/2}\mu_n(N,r_{2\nu}).
\end{equation}
\end{lemma}

\begin{proof}
The proof of the lemma follows from a straightforward change of variables.
\begin{align*}
    \mu_n(\sigma N,\sigma^{\nu-1}r_{2\nu}) &= \int_\mathbb{R} \lambda^n \exp\left[-\sigma N\left({\dfrac{\lambda^2}{2}+\dfrac{\sigma^{\nu-1}r_{2\nu}\lambda^{2\nu}} {2\nu}}\right)\right] d\lambda \\
    &= \int_\mathbb{R} \left(\dfrac{\theta}{\sqrt{\sigma}}\right)^n \exp\left[-\sigma N\left({\dfrac{\left(\dfrac{\theta}{\sqrt{\sigma}}\right)^2}{2}+\dfrac{\sigma^{\nu-1}r_{2\nu}\left(\dfrac{\theta}{\sqrt{\sigma}}\right)^{2\nu}} {2\nu}}\right)\right] d\left(\dfrac{\theta}{\sqrt{\sigma}}\right)\\
    &= \sigma^{-(n+1)/2}\int_\mathbb{R} \theta^n \exp\left[- N\left({\dfrac{\theta^2}{2}+\dfrac{r_{2\nu}\theta^{2\nu}} {2\nu}}\right)\right] d\theta \\
    &= \sigma^{-(n+1)/2}\mu_n(N,r_{2\nu}).
\end{align*}
\end{proof}

\begin{lemma}
The Hankel determinants $D_n$ satisfy the rescaling relation:
\begin{equation}
   D_n(\sigma N,\sigma^{\nu-1}r_{2\nu})=\sigma^{\frac{-(n+1)^2}{2}}D_n(N,r_{2\nu}).
\end{equation}
\end{lemma}

\begin{proof}
\begin{align*}
    D_n(\sigma N,\sigma^{\nu-1}r_{2\nu}) &= \sum_{\rho \in S_{n+1}} sgn(\rho) \prod_{i=1}^{n+1} \mu_{i+\rho(i)-2}(\sigma N,\sigma^{\nu-1}r_{2\nu}) \\
    &= \sum_{\rho \in S_{n+1}} sgn(\rho) \prod_{i=1}^{n+1} \sigma^{\frac{-(i+\rho(i)-1)}{2}} \mu_{i+\rho(i)-2}(N,r_{2\nu}) \\
    &= \sigma^{\frac{-(n+1)(n+2)+(n+1)}{2}}D_n(N,r_{2\nu})\\
    &= \sigma^{\frac{-(n+1)^2}{2}}D_n(N,r_{2\nu}).
    \end{align*}
\end{proof}   
\noindent Finally we directly deduce theorem \ref{thm:orbitrescale}:
\begin{align*}
        x_{n,\sigma N,\sigma^{\nu-1}r_{2\nu}} &= \dfrac{D_{n-2}(\sigma N,\sigma^{\nu-1}r_{2\nu})D_n(\sigma N,\sigma^{\nu-1}r_{2\nu})}{D_{n-1}(\sigma N,\sigma^{\nu-1}r_{2\nu})^2}\\
        &=\dfrac{\sigma^{\frac{-(n-1)^2}{2}}D_{n-2}(N,r_{2\nu})\sigma^{\frac{-(n+1)^2}{2}}D_n(N,r_{2\nu})}{\sigma^{-n^2}D_{n-1}(N,r_{2\nu})^2}\\
        &=\dfrac{1}{\sigma}\dfrac{D_{n-2}(N,r_{2\nu})D_n(N,r_{2\nu})}{D_{n-1}(N,r_{2\nu})^2}\\
        &=\dfrac{1}{\sigma}x_{n,N,r_{2\nu}}.
\end{align*}
\end{proof} 

\section{Comparison of Expansions}\label{sec:compare}
Let $r_{2\nu}={n^{\nu-1}}/{\xi}$ and $N={n}/{\alpha}$. We can now let $r_{2\nu}$ and $N$ go to infinity with $n$ at these relative rates. First, we derive an error bound for the rescaled center manifold expansion, now that it is clear how its partial sums and the $x_{n,N,r_{2\nu}}$ behave under the above rescaling.

\subsection{Equivalence of the genus and center manifold expansions}
\begin{lemma}\label{lemma:centeradaptbound}
With $r_{2\nu}$ and $N$ related to $n$ as above, as $n$ goes to infinity, the partial sums of the center manifold expansion can be rescaled with $x_n$ to derive the following error bound:
\begin{equation}
   \left|x_{n,\frac{n}{\alpha},\frac{n^{\nu-1}}{\xi}}- \sum_{k=-1}^m \dfrac{c_{k}\left(\frac{1}{\alpha},\frac{1}{\xi}\right)}{n^{1+k/\nu}} \right|
< \dfrac{K_{2,m}(1/\alpha,1/\xi)}{n^{(m+1+\nu)/\nu}}, \ \ \  m\geq-1
\end{equation}
where the constant $K_{2,m}$ from the center manifold estimate \eqref{eq:k2} only depends on $\alpha$ and $\xi$.
\end{lemma}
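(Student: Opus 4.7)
The plan is to derive the bound by combining the center manifold error estimate at fixed parameters with the two rescaling identities already established: Theorem \ref{thm:orbitrescale} for the Freud orbit and Corollary \ref{cor:cmpsrescale} for the partial sums of the center manifold expansion. The key observation is that both the orbit and its asymptotic partial sums scale by a common factor of $1/\sigma$ under the substitution $(N,r_{2\nu}) \mapsto (\sigma N, \sigma^{\nu-1} r_{2\nu})$, so their difference also scales by $1/\sigma$. Choosing $\sigma = n$ is what turns the ordinary center manifold estimate into a statement about the regime where $N$ and $r_{2\nu}$ grow linearly (respectively as $n^{\nu-1}$) with $n$.

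Concretely, I would first write down the center manifold estimate \eqref{eq:k2}, generalized to gauge $n^{1/\nu}$, at the fixed parameter values $(N,r_{2\nu}) = (1/\alpha, 1/\xi)$:
\begin{equation*}
\left| x_{n,1/\alpha,1/\xi} - \sum_{k=-1}^m \frac{c_k(1/\alpha,1/\xi)}{n^{k/\nu}} \right| < \frac{K_{2,m}(1/\alpha,1/\xi)}{n^{(m+1)/\nu}},
\end{equation*}
which is the standard result from \cite{bib:mnz85} (and for $\nu=2$ from \cite{bib:elt22}). Here $\alpha,\xi$ are treated as fixed positive numbers, so $K_{2,m}(1/\alpha,1/\xi)$ is a genuine constant in $n$.

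Next, I would apply Theorem \ref{thm:orbitrescale} with $\sigma = n$ and base parameters $(1/\alpha, 1/\xi)$ to obtain
\begin{equation*}
x_{n,n/\alpha,n^{\nu-1}/\xi} = \tfrac{1}{n}\, x_{n,1/\alpha,1/\xi},
\end{equation*}
and then apply Corollary \ref{cor:cmpsrescale} (again with $\sigma = n$) to obtain the corresponding identity for partial sums:
\begin{equation*}
\sum_{k=-1}^m \frac{c_k(n/\alpha,n^{\nu-1}/\xi)}{n^{k/\nu}} = \tfrac{1}{n} \sum_{k=-1}^m \frac{c_k(1/\alpha,1/\xi)}{n^{k/\nu}}.
\end{equation*}
Subtracting these two identities factors out an overall $1/n$ from the difference between $x_{n,n/\alpha,n^{\nu-1}/\xi}$ and $\sum_{k=-1}^m c_k(1/\alpha,1/\xi)/n^{1+k/\nu}$.

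Finally, multiplying the fixed-parameter center manifold estimate above by $1/n$ yields
\begin{equation*}
\left| x_{n,n/\alpha,n^{\nu-1}/\xi} - \sum_{k=-1}^m \frac{c_k(1/\alpha,1/\xi)}{n^{1+k/\nu}} \right| < \frac{K_{2,m}(1/\alpha,1/\xi)}{n^{1+(m+1)/\nu}} = \frac{K_{2,m}(1/\alpha,1/\xi)}{n^{(m+1+\nu)/\nu}},
\end{equation*}
which is the claimed inequality. The mild subtlety is to verify that the rescaling identity for partial sums truncates cleanly at order $m$ (so that the constants $c_k$ appearing on the right are exactly those at parameter values $(1/\alpha,1/\xi)$, not at the rescaled values), but this is precisely the content of Corollary \ref{cor:cmpsrescale}; no extra work is needed. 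Thus the only real input beyond the two rescaling lemmas is the classical existence of the center manifold expansion at fixed parameters, and the rest is algebraic bookkeeping.
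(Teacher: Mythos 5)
Your proposal is correct and follows essentially the same route as the paper: the paper likewise applies the fixed-parameter estimate at $(N,r_{2\nu})=(1/\alpha,1/\xi)$, invokes Theorem \ref{thm:orbitrescale} and Corollary \ref{cor:cmpsrescale} with $\sigma=n$ to pull out the common factor $1/n$, and then multiplies the estimate through by $1/n$ to get the exponent $(m+1+\nu)/\nu$. No gaps; your remark about the truncated partial-sum rescaling being exactly Corollary \ref{cor:cmpsrescale} matches how the paper handles it.
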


\begin{proof}
Recall that the center manifold expansion is defined in Equation \eqref{eq:generalnucenter} as
\begin{equation} \label{eq:cm_gen}
    x_{n,N,r_{2\nu}} \sim \sum_{i=-1}^\infty c_i( N, r_{2\nu}) n^{-\frac{i}{\nu}}
\end{equation}
where $N$ and $r_{2 \nu}$ are assumed to be arbitrary but finite.
The proof reduces to using the estimate
\[
\left| x_{n,N,r_{2\nu}} -\sum_{i=-1}^m \dfrac{c_i( N, r_{2\nu})}{n^{i/\nu}} \right| < \dfrac{K_{2,m}(N,r_{2\nu})}{n^{(m+1)/\nu}}, \ \ \  m\geq-1
\]
implied by \eqref{eq:cm_gen} (and corresponding to 
\eqref{eq:k2} of the main text), once we leverage the rescalings established in Corollary \ref{cor:cmpsrescale} and Theorem \ref{thm:orbitrescale}:

\begin{align*}
       \left|x_{n,\frac{n}{\alpha},\frac{n^{\nu-1}}{\xi}}- \sum_{k=-1}^m \dfrac{c_{k}\left(\frac{1}{\alpha},\frac{1}{\xi}\right)}{n^{1+k/\nu}} \right| 
    &= \left|\dfrac{1}{n}x_{n,\frac{1}{\alpha},\frac{1}{\xi}}- \dfrac{1}{n}\sum_{k=-1}^m \dfrac{c_k\left(\frac{1}{\alpha},\frac{1}{\xi}\right)}{n^{k/\nu}} \right|\\
    &< \dfrac{1}{n}\dfrac{K_{2,m}(1/\alpha,1/\xi)}{n^{(m+1)/\nu}}=\dfrac{K_{2,m}(1/\alpha,1/\xi)}{n^{(m+1+\nu)/\nu}}.
\end{align*}
\end{proof}
\begin{theorem}\label{thm:equivalentexpansions}
Let $r_{2\nu}$ and $N$ go to infinity with $n$ at relative rates $r_{2\nu} = {n^{\nu-1}}/{\xi}$ and $N = {n}/{\alpha}$. The genus expansion in the $n^{1/\nu}$ gauge \eqref{eq:adaptedgenusexpansion} and the center manifold \eqref{eq:generalnucenter} are equivalent.
\end{theorem}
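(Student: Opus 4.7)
The plan is to follow the template of Theorem \ref{thm:comparison} from Section \ref{sec:comps} and extend it from the $\nu=2$ case to general $\nu \geq 2$. The essential groundwork is already in place: Lemma \ref{lemma:genusadaptbound} exhibits the genus expansion as an asymptotic sequence in the $n^{-1/\nu}$ gauge, and Lemma \ref{lemma:centeradaptbound} does the same for the rescaled center manifold expansion, both with uniform error bounds in the regime where $\alpha$ lies near $1$ and $\xi$ is bounded above and away from $0$. What remains is to combine these two bounds by the triangle inequality and then invoke uniqueness of asymptotic expansions in a common gauge.

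The first concrete step is to align the two error rates by choosing truncation indices appropriately. Setting $d = m+\nu$ in Lemma \ref{lemma:genusadaptbound} produces a genus error of order $n^{-(m+\nu+1)/\nu}$, which precisely matches the rate furnished by Lemma \ref{lemma:centeradaptbound}. The triangle inequality then yields
\begin{equation*}
\left|\mathcal{G}^{(m+\nu)} - \sum_{k=-1}^{m}\frac{c_k(1/\alpha,1/\xi)}{n^{1+k/\nu}}\right| < \frac{K_{8,m}(\alpha,\xi)}{n^{(m+\nu+1)/\nu}}, \qquad m \geq -1,
\end{equation*}
for a suitable constant $K_{8,m}(\alpha,\xi)$, which is the direct analogue of Theorem \ref{thm:comparison} for general valence.

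The conclusion then follows by reading this inequality as a statement about Laurent polynomials in $n^{-1/\nu}$. With $\alpha$ and $\xi$ fixed, the genus partial sum contributes terms $n^{-(2j\nu+i)/\nu}$ while the center manifold partial sum contributes terms $n^{-(\nu+k)/\nu}$, so both live in the same space of finite $\R$-linear combinations of powers $n^{-p/\nu}$ with integer $p \geq \nu - 1$. A nontrivial such combination supported on exponents $p \leq m+\nu$ cannot be of order $o(n^{-(m+\nu)/\nu})$ (extract the coefficients one at a time, starting from the smallest exponent, by multiplying through by the appropriate power of $n^{1/\nu}$ and letting $n \to \infty$), so the coefficients of each $n^{-p/\nu}$ on the two sides must agree for all $p$ in that range. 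Since $m$ is arbitrary, this yields the claimed equivalence.

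The one bookkeeping point deserving care, and the closest thing to an obstacle, is the combinatorial description of which pairs $(i,j)$ in the genus expansion contribute to a fixed power $n^{-p/\nu}$ (namely those with $2j\nu + i = p$ and $i \geq \nu - 1$, a finite and explicit set). This is precisely the structural information needed to set up the triangular system of Remark \ref{rem:coeffsuffice} that recovers the coefficients $\beta_{i,g,\nu}$ of $P_{3g-2,\nu}$ from the $c_k(1/\alpha,1/\xi)$, so this verification dovetails naturally with the extraction procedure carried out in Section \ref{sec:closed_forms}.
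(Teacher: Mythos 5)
Your proposal is correct and follows essentially the same route as the paper: truncate the genus expansion at $d=m+\nu$ so the error rates match, and combine the bounds of Lemmas \ref{lemma:genusadaptbound} and \ref{lemma:centeradaptbound} by the triangle inequality. The extra paragraph on extracting coefficients power by power merely makes explicit the standard uniqueness-of-asymptotic-coefficients step that the paper leaves implicit (and later uses in Section \ref{subsec:extracting}), so it is a harmless addition rather than a different argument.
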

\begin{proof}
The proof follows from a simple application of the triangle inequality, together with the bounds established in Lemmas \ref{lemma:genusadaptbound} and \ref{lemma:centeradaptbound}. For $m \ge -1$,
\begin{align*}
\left|\mathcal{G}^{(m+\nu)} -\sum_{k=-1}^m \dfrac{c_k\left(\frac{1}{\alpha},\frac{1}{\xi}\right)}{n^{1+k/\nu}}  \right| &\leq
    \left|x_{n,\frac{n}{\alpha},\frac{n^{\nu-1}}{\xi}}- \mathcal{G}^{(m+\nu)}\right| +  \left|x_{n,\frac{n}{\alpha},\frac{n^{\nu-1}}{\xi}}-   \sum_{k=-1}^m \dfrac{c_k\left(\frac{1}{\alpha},\frac{1}{\xi}\right)}{n^{1+k/\nu}} \right| 
    \\ &<  \dfrac{K_{3,m+\nu}(\alpha,\gamma)}{n^{(m+1+\nu)/\nu}} + \dfrac{K_{2,m}(1/\alpha,1/\xi)}{n^{(m+1+\nu)/\nu}} \\
    &< \dfrac{K_{4,m}(1/\alpha,1/\xi)}{n^{(m+1+\nu)/\nu}},
\end{align*}
where we have used the fact that $\gamma = \alpha^{\nu-1} / \xi$ is independent of $n$.
\end{proof}

\subsection{Finding the $a_{i,g,\nu}$}\label{subsec:extracting}
As noted in  remark \ref{rem:coeffsuffice}, we only need to know the $3g-1$ coefficients $\{a_{i(\nu-1),g,\nu}\}_{i=1}^{3g-1}$ to obtain a closed-form expression for $z_{g,\nu}$ in terms of $z_{0,\nu}$. By Theorem \ref{thm:equivalentexpansions}, it suffices to extract these coefficients from the center manifold expansion, as it is equivalent to the genus expansion in the $n^{1/\nu}$ gauge. Since the $a_{i,g,\nu}$ enter into $\mathcal{G}^{(m+\nu)}$ sequentially, we only need to track the last coefficient needed to solve for $z_{g,\nu}$, which is $a_{(3g-1)(\nu-1),g,\nu}$. Using theorem \ref{thm:equivalentexpansions} and Equation \eqref{eq:adapgenusps} to determine which $c_k$ this equates to, we find
\begin{equation} \label{eq:knucount}
    1+\dfrac{k_\nu}{\nu} = 2g+\dfrac{(3g-1)(\nu-1)}{\nu} \Longleftrightarrow  k_\nu =5g\nu-2\nu-3g+1.
\end{equation}

Returning to the defining equation \eqref{eq:zggendef} for $z_{0,\nu}$, but letting $\gamma^{{1}/{(\nu-1)}} n$ play the role of $n$, we can rewrite the convergent series \eqref{eq:zggenconv} to illustrate the dependence on $\gamma$ explicitly:
\begin{equation*} \label{eq:zggammavariable}
z_{0,\nu} = \sum_{i=\nu-1}^\infty a_{i,0,\nu}(1)\big(\gamma^{\frac{1}{\nu-1}} n\big)^{-i/ \nu},
\end{equation*}
In other words, we find that \begin{equation*}
   a_{i,0,\nu}(\gamma)= a_{i,0,\nu}(1) \, \gamma^{\frac{-i}{\nu(\nu-1)}}.
\end{equation*}
Similarly for higher genus, we can write $z_{g,\nu}$ as a doubly infinite sum
\[
z_{g,\nu} = \sum_{j=0}^\infty \omega_j z_{0,\nu}^j = \sum_{j=0}^\infty \omega_j \left(\sum_{i=\nu-1}^\infty a_{i,0,\nu}(1)\big(\gamma^{\frac{1}{\nu-1}} n\big)^{-i/ \nu}\right)^j,
\] since $z_{g,\nu}$ is a rational function of the convergent series for $z_0$ in powers of $n^{-1/\nu}$.
When collecting terms in powers of $n$ in the above expression, the coefficient of $n^{-k/\nu}$ involves terms in $\gamma^{-1/(\nu (\nu-1))}$ whose exponents add up to $k$. Therefore, we can write
\[
z_{g,\nu} = \sum_{k=\nu-1}^\infty a_{k,g,\nu}(1)\big(\gamma^{\frac{1}{\nu-1}} n\big)^{-k/ \nu}.
\]
Consequently,
\begin{equation*}
   a_{i,g,\nu}(\gamma) =a_{i,g,\nu}(1)\, \gamma^{\frac{-i}{\nu(\nu-1)}}.
\end{equation*} 
Thus, \eqref{eq:adaptedgenusexpansion} may be re-expressed as a bivariate expansion in which $a_{i,g,\nu}$ arises as the unique term which is a multiple of the monomial $\gamma^{\frac{-i}{\nu(\nu-1)}}n^{-(2g+i/\nu)}$. As a result, we can easily find the $a_{i,g,\nu}$ from inspection of order, just as was witnessed when comparing expansions \eqref{eq:s4} and  \eqref{eq:gammanaddress}. In practice, since $\alpha$ is arbitrary but near 1, we may set $\alpha = 1$. Then, $\gamma = \alpha^{\nu -1} / \xi = 1/ \xi$ and since $r_{2\nu} = {n^{\nu-1}}/{\xi}$,
\[
\gamma^{\frac{-i}{\nu(\nu-1)}}n^{-(2g+i/\nu)} = \left(\xi^{-2g} r_{2\nu}^{-(2g + i/\nu)} \right)^{1/(\nu - 1)}.
\]
This shows that the unknown $a_{i,g,\nu}$ may be easily located in the expansion \eqref{eq:adaptedgenusexpansion} first by collecting terms in $\xi^{-2 g / (\nu - 1)}$ and then by identifying the coefficient of $r_{2\nu}^{-(2g + i/\nu)/(\nu-1)}$.

\section{Counts of 4-valent 2-legged maps for genera 0 through 7}
\label{app:z_counts}
Table \ref{table:mapcounttable} below shows counts of unlabeled 2-legged $g$-maps obtained using the Taylor expansion of $z_g$, as described in Equation \eqref{eq:map_counts_z}. As explained in Remark \ref{cartographic} these counts are all integral. The number of labeled 2-legged $g$-maps is obtained by multiplying each row by $4^j\cdot j!$,
where $j$ is the corresponding number of vertices. Considering the Euler characteristic $\chi$ of the cellular polyhedron determined by a $g$-map with $V$ $2\nu$-valent vertices and 2 legs, we see that
\[
\chi = 2 - 2 g = (V + 2) - E + F = (V + 2) - \dfrac{1}{2} \left(2 \nu V + 2\right) + F \ge V(1-\nu) + 2,
\]
where the number of edges is $E = \nu V + 1$, and $F \ge 1$ is the number of faces. The above equation thus implies
\[
V \ge \dfrac{2 g}{\nu -1} \qquad \text{for } \nu > 1. 
\]
Consequently, counts for maps with a number of vertices strictly less than $2 g/(\nu-1)$ are all zero, as observed below in Table \ref{table:mapcounttable} (for which $\nu = 2$). In addition, given that the counts are obtained from Equation \eqref{eq:map_counts_z}, $z_g$, as a function of $z_0$, will have a factor of $(z_0-1)^{\lceil 2 g/(\nu-1)\rceil}$.

\begin{table}[!htbp]
\centering
\begin{tabularx}{.95\textwidth}{ |c| *{3}{Y|} }
\hline
{\bf Vertices}    & {\bf genus 0} & {\bf genus 1} & {\bf genus 2} \\
\hline
1 & $3$ & 0 & 0  \\
\hline
2            & $18$ & $6$ & 0 \\
\hline
3           & $135$ & $162$& 0\\
\hline
4    & $1134$ & $3132$ & $630$ \\
\hline
5    & $10206$  & $52650$ & $37422$ \\
\hline
6    &  $96228$  & $819396$ & $1326780$ \\
\hline
7    & $938223$ & $12145140$ & $36506862$ \\
\hline
8    & $9382230$ & $174067704$& $860304564$ \\ 
\hline
9    & $95698746$ & $2434354074$& $18243857772$ \\
\hline
10    & $991787004$ & $33415041780$& $358304450616$ \\
\hline
11 & 10413763542 & 451988208540 & 6637515628590\\
\hline
12 & 110546105292 & 6041901710664 & 117426287155716\\
\hline
13 & 1184422556700 & 79981821607428 & 2001523611771684\\
\hline
14 & 12791763612360 & 1050193148874408 & 33083648147905992\\
\hline
15 & 139110429284415 & 13694359796856360 & 532922312613419820\\
\hline
\hline
Vertices    & {\bf genus 3} & {\bf genus 4} & {\bf genus 5} \\
\hline
1 - 5 & 0 & 0 & 0 \\
\hline
6 & $207900$ & 0 & 0 \\
\hline
7 & $19943172$ & 0 & 0 \\
\hline
8 & $1061845848$ & $141891750 $ & $0$ \\ 
\hline
9 & $41576155956$ & $19177999830 $ & $0$\\
\hline
10 & $1337625029736$ & $1385054577468$ & $164991726900$  \\
\hline
11 & 37475824661352 & 71327306912598 & 29106185730300 \\
\hline
12 & 946821516450480 & 2942589735251316 & 2681355887787528 \\
\hline
13 & 22071416300654292 & 103495914888426684 & 172697001236536140 \\
\hline
14 & 482336962749597384 & 3224203267738773816 & 8760448586644050744 \\
\hline
15 & 9996484963729255992 & 91261924159660147350 & 373335639088458314520\\
\hline
\end{tabularx}

\smallskip
\begin{tabularx}{.95\textwidth}{ |c| *{2}{Y|} }
\hline
{\bf Vertices} & 
{\bf genus 6} & {\bf genus 7} \\
\hline
1 - 11 & 0 & 0 \\
\hline
12 & 292200348339900 & 0 \\
\hline
13 & 64071279522665100 & 0 \\
\hline
14 & 7226119529305407000 & 732588016195035000\\
\hline
15 & 562103677531247569740 & 193018419151189720200\\
\hline
\end{tabularx}
  \caption{Counts of unlabeled 2-legged 4-valent $g$-maps with a fixed number of vertices, for genera 0 through 7.}
\label{table:mapcounttable}
\end{table}

\section{Counts of 4-valent maps for genera 0 through 7}
\label{app:e_counts}
Table \ref{table:mapcounttable_e} below shows counts of unlabeled $g$-maps obtained using the Taylor expansion of $e_g$ as described in Equation \eqref{eq:map_counts_e}. The number of labeled $g$-maps is obtained by multiplying each row by $4^j\cdot j!$, where $j$ is the corresponding number of vertices. When $g=0$, the resulting count is given by Equation \eqref{eq:zero_map_couts} with $\nu = 2$. In the case of regular maps without legs, we do not expect integral counts before multiplication by $4^j\cdot j!$,
due to the presence of symmetries. As before, if one considers the Euler characteristic $\chi$ of the cellular polyhedron determined by a $2\nu$-valent $g$-map with $V$ vertices, we see that
\[
\chi = 2 - 2 g = V - E + F = V - \nu V + F \ge V(1-\nu) + 1,
\]
where the number of edges is $E = \nu V$, and $F \ge 1$ is the number of faces. The above equation thus implies
\[
V \ge \dfrac{2 g - 1}{\nu -1} \qquad \text{for } \nu > 1. 
\]
Consequently, counts for maps with a number of vertices strictly less than $(2 g - 1)/(\nu-1)$ are all zero, as observed below in Tables \ref{table:mapcounttable_e} (for which $\nu = 2$) and \ref{table:mapcounttable_e3} (for $\nu = 3/2$). This is in accord with Theorem (2.3) of \cite{bib:er14}, which established a conjecture due to \cite{bib:biz}. 

\begin{table}[!htbp]
\centering
\begin{tabularx}{.95\textwidth}{ |c| *{3}{Y|} }
\hline
{\bf Vertices}    & {\bf genus 0} & {\bf genus 1} & {\bf genus 2} \\
\hline
1 & 1/2 & 1/4 & 0 \\
\hline
2 & 9/8 & 15/8 & 0 \\
\hline
3 & 9/2 & 33/2 & 15/4 \\
\hline
4 & 189/8 & 2511/16 & 2007/16 \\
\hline
5 & 729/5 & 15633/10 & 28323/10 \\
\hline
6 & 8019/8 & 64233/4 & 430029/8 \\
\hline
7 & 104247/14 & 1180251/7 & 1848015/2 \\
\hline
8 & 938223/16 & 57590271/32 & 238356027/16 \\ 
\hline
9 & 483327 & 38914749/2 & 229637187 \\
\hline
10 & 82648917/20 & 850128453/4 & 136971261063/40 \\
\hline
11 & 400529367/11 & 25751800341/11 & 99551516103/2 \\
\hline
12 & 1316025063/4 & 207750029985/8 & 5672523466467/8 \\
\hline
13 & 39480751890/13 & 3767137066053/13 & 9936375583257 \\
\hline
14 & 1598970451545/56 & 45501750431811/14 & 549453974272749/4 \\
\hline
15 & 545531095233/2 & 183072982028274/5 & 1877386504673043 \\
\hline
\hline
{\bf Vertices} & {\bf genus 3} & {\bf genus 4} & {\bf genus 5} \\
\hline
1-4 & 0 & 0 & 0 \\
\hline
5 & 945/2 & 0 & 0 \\
\hline
6 & 125127/4 & 0 & 0 \\
\hline
7 & 8500491/7 & 675675/4 & 0 \\
\hline
8 & 577843065/16 & 555627195/32 & 0 \\ 
\hline
9 & 910934829 & 1967095611/2 & 241215975/2 \\
\hline
10 & 41037618141/2 & 1628891511507/40 & 68510089575/4 \\
\hline
11 & 425429109954 & 2756680837155/2 & 14249112872697/11 \\
\hline
12 & 66226454940987/8 & 323610729315237/8 & 557088690933189/8 \\
\hline
13 & 153195852757365 & 1066627646812359 & 2990111952325347 \\
\hline
14 & 38104924294385091/14 & 206945320458060549/8 & 218346687499327569/2 \\
\hline
15 & 46752178744763622 & 1173050456154224859/2 & 3522319537506492078\\
\hline
\end{tabularx}

\smallskip
\begin{tabularx}{.95\textwidth}{ |c| *{2}{Y|} }
\hline
{\bf Vertices}    & {\bf genus 6} & {\bf genus 7} \\
\hline
1-10 & 0 & 0 \\ \hline
11 & 288735522075/2 & 0 \\ \hline
12 & 211615589730825/8 & 0 \\ \hline
13 & 32850823889930175/13 & 260893168160625 \\ \hline
14 & 4717322888871388995/28 & 117949180927619475/2 \\ \hline
15 & 44155396587351637287/5 & 6851883252610003770\\ \hline
\end{tabularx}
\caption{Counts of unlabeled 4-valent $g$-maps with a fixed number of vertices, for genera 0 through 7.}
\label{table:mapcounttable_e}
\end{table}

\newpage
\section{Counts of 3-valent maps for genera 0 through 2}
\label{app:e_counts3}
Formulas for $e_0$, $e_1$, and $e_2$ as functions of $z_0$ in the case of 3-valent maps were obtained in \cite{bib:ep} and read
\begin{align*}
e_{0}\! \left(t \right) & = 
\frac{\ln \! \left(z_0 \! \left(t \right)\right)}{2}+\frac{\left(z_0 \! \left(t \right)-1\right) \left(z_0 \! \left(t \right)^{2}-6 z_0 \! \left(t \right)-3\right)}{12 z_0 \! \left(t \right)+12}, \\
e_{1}\! \left(t \right) & = 
-\frac{1}{24} \ln \! \left(\frac{3}{2}-\frac{z_0 \left(t \right)^{2}}{2}\right), \\
e_{2}\! \left(t \right) & = 
\frac{\left(z_0 \! \left(t \right)^{2}-1\right)^{3} \left(4 z_0 \! \left(t \right)^{4}-93 z_0 \! \left(t \right)^{2}-261\right)}{960 \left(z_0 \! \left(t \right)^{2}-3\right)^{5}},
\end{align*}
where $z_0(t)$ is implicitly defined by the string equation
\[
1 = z_0 \! \left(t \right)^{2}-72\, t^{2} z_0 \! \left(t \right)^{3}.
\]
Table \ref{table:mapcounttable_e3} below shows counts of unlabeled 3-valent $g$-maps obtained using the Taylor expansion of $e_g$:
\begin{equation}
\label{eq:map_counts_e3}
\text{Count of unlabeled 3-valent $g$-maps with $j$ vertices} = \dfrac{1}{3^j\,j !} \left. \dfrac{d^j e_g}{d t^j}\right\vert_{t=0}.
\end{equation}
The number of labeled $g$-maps is obtained by multiplying each row by $3^j\cdot j!$, where $j$ is the corresponding number of vertices. As before, we do not expect integral counts before multiplication by $3^j\cdot j!$ due to the presence of symmetries. In addition, because $z_0$ is a function of $t^2$ (due to the form of the string equation), odd derivatives of $e_g$, and thus counts of maps with odd numbers of vertices, are all zero. This was to be expected since the edges of the  cellular decomposition provided by the 3-valent map graph arise by a perfect pairing of the $3 j$ half edges (the darts mentioned in Remark \ref{cartographic}) coming from the triplets of edges around the $j$ vertices. This leads to a total edge count of $3 j / 2$. But this edge count can be an integer if and only if $j$ is even. Hence all map counts must be zero when the number of vertices is odd. This feature continues to be true for all regular maps of odd valence. 

\begin{table}[!htbp]
\centering
\begin{tabular}{|c||c|c|c|}
\hline 
Vertices    & genus 0 & genus 1 & genus 2 \\
\hline
2 & 2/3 & 1/6 & 0 \\
\hline
4 & 8/3 & 7/3 & 0 \\
\hline
6 & 56/3 & 332/9 & 35/6 \\
\hline
8 & 512/3 & 1864/3 & 338 \\ 
\hline
10 & 9152/5 & 54416/5 & 66132/5 \\
\hline
12 & 65536/3 & 1762048/9 & 1305280/3 \\
\hline
14 & 5912192/21 & 25136768/7 & 12963696 \\
\hline
16 & 11534336/3 & 66841600 & 362264064 \\
\hline
18 & 494474240/9 & 33984353024/27 & 29035470208/3 \\
\hline
20 & 12213813248/15 & 358871662592/15 & 1250634104832/5 \\
\hline
22 & 136779182080/11 & 5041100158976/11 & 6301063932672 \\
\hline
24 & 584115552256/3 &  79519344492544/9 & 466648673681408/3 \\
\hline
26 & 40486637895680/13 & 2226722215862272/13 & 3777286156007424 \\
\hline
28 & 355142255771648/7 & 3336406411771904 & 90485142526623744 \\
\hline
30 & 839740501295104 & 978867411892895744/15 & 2142890102656491520 \\
\hline
\end{tabular}
  \caption{Counts of unlabeled 3-valent $g$-maps with a fixed number of vertices, for genera 0 through 2. Only counts for even numbers of vertices are provided since there are no regular odd-valent $g$-maps with an odd number of vertices.}
\label{table:mapcounttable_e3}
\end{table}

The counts in Table \ref{table:mapcounttable_e3} agree with the closed-form expressions given in Equations (1.18) and (1.21) of Bleher and Dea\~no \cite{bib:bd13} for $g=0$ and $g=1$, as well as with the coefficients they provide in Equation (1.29) for $g=2$. Independently, Table \ref{table:mapcounttable_e3} agrees with the formulas for $g=0$ and $g=2$ presented in \cite{bib:ew} (Equations (11-30) and (11-32) respectively)\footnote{~Unfortunately, a term was dropped in the expression of $de_1/dy_0$ appearing in Equation (11-31) of \cite{bib:ew}, leading to incorrect counts being presented for $g=1$ just below that equation (and also below (2-9)). Restoring this omission leads to counts consistent with the numbers shown in Table \ref{table:mapcounttable_e3}, and in agreement with the closed-form expression of \cite{bib:bd13}.} and with the counts shown in Table 1 of \cite{bib:dy17} for $g=0$ through $g=2$, which are provided for up to $j = 12$ vertices.
\newpage


\begin{thebibliography}{}

\bibitem[ACKM93]{bib:a}
J. Ambjorn, L. Chekov, C. F. Kristjansen, and Yu. Makeenko,  {\em Matrix Model Calculations Beyond the Spherical Limit}, Nucl. Phys. B {\bf 404}, 127-172 (1993). 

\bibitem[Ba99]{bib:ball}
P. Ball, {\em The Self-Made Tapestry: Pattern Formation in Nature}, Oxford University Press, 1999.

\bibitem[BD13]{bib:bd13}
P. M. Bleher and A. Dea\~no, {\em Topological Expansion in the Cubic Random Matrix Model}, International Mathematics Research Notices {\bf 2013}, 2699-2755 (2013).

\bibitem[BD16]{bib:bd} 
P. Bleher, and A. Dea\~no, {\em Painlev\'e I double scaling limit in the cubic random matrix model}, Random Matrices: Theory and Applications {\bf 5}, 1650004 (2016).

\bibitem[BGM21]{bib:bgm}
P. Bleher, R. Gharakhloo, K. T-R McLaughlin, {\em Phase Diagram and Topological Expansion in the Complex Quartic Random Matrix Model}, preprint, arXiv:2112.09412v1 (2021).

\bibitem[BGR08]{bib:bgr}
E. A. Bender, Z. C. Gao and L. B. Richmond, {\em The map asymptotics constant $t_g$}, Electronic Journal of Combinatorics {\bf 15}, Research paper 51 (2008). 

\bibitem[BI05]{bib:bi} 
P. Bleher, and A. Its, {\em Asymptotics of the partition function of a random matrix}, Ann. Inst. Fourier (Grenoble) {\bf 55}, 1943-2000 (2005).

\bibitem[BIZ80]{bib:biz}
D. Bessis, C. Itzykson, and J. B. Zuber, {\em Quantum field theory techniques in graphical enumeration}, Advances in Applied Mathematics {\bf 1}, 109-157 (1980).

\bibitem[BK86]{bib:bk86}
E. Brieskorn, H. Knörrer, {\em Plane algebraic curves}, Birkh\"auser, Basel, 1986.

\bibitem[BK90]{bib:bk90}
E. Br\'ezin and V.A. Kazakov, {\em Exactly Solvable Field Theories of Closed Strings}, Phys. Lett. B {\bf 236}, 144-150 (1990).

\bibitem[BMN88]{bib:bmn88}
W. C. Bauldry, A. M\'{a}t\'{e}, and P. Nevai, {\em Asymptotics for Solutions of Systems of Smooth Recurrence Equations}, Pacific Journal of Mathematics {\bf 133}, 209-227 (1988).

\bibitem[Ch09]{bib:chap}
G. Chapuy, {\em Asymptotic enumeration of constellation and related families of maps on orientable surfaces}, Combinatorics, Probability and Computing {\bf 18}, 477-516 (2009).

\bibitem[CMS09]{bib:cms}
G. Chapuy, M. Marcus and  G. Shaeffer, {\em A bijection for rooted maps on orientable surfaces}, SIAM Journal of Discrete Mathematics {\bf 23}, 1587-1611 (2009).

\bibitem[CN06]{bib:cn}
F. Camia and C. M. Newman, {\em Two-dimensional critical percolation: the full scaling limit}, Comm. Math. Phys. {\bf 268}, 1-38 (2006).

\bibitem[DY17]{bib:dy17}
B. Dubrovin, D. Yang, {\em Generating series for GUE correlators}, Lett Math Phys {\bf 107}, 1971–2012 (2017).

\bibitem[ELT22]{bib:elt22}
N. M. Ercolani, J. Lega, and B. Tippings, {\em Dynamics of Nonpolar Solutions to the Discrete Painlev\'e I Equation}, SIAM J. Appl. Dyn. Sys. {\bf 21}, 1322-1351 (2022).

\bibitem[ELT22b]{bib:elt23}
N. M. Ercolani, J. Lega, and B. Tippings, {\em Non-recursive Counts of Graphs on Surfaces}, preprint, arXiv: 2210.00671 (2022).

\bibitem[EM03]{bib:em}
N. M. Ercolani and K. D. T.-R. McLaughlin, {\em Asymptotics of the partition function for random matrices via Riemann  Hilbert techniques, and applications to graphical enumeration},  Int. Math. Res. Not. {\bf 14}, 755-820 (2003).

\bibitem[EMP08]{bib:emp08}
N. M. Ercolani, K. D. T-R McLaughlin, V. U. Pierce, {\em Random Matrices, Graphical Enumeration and the Continuum Limit of Toda Lattices}, Communications in Mathematical Physics {\bf 278}, 31-81 (2008).

\bibitem[EP12]{bib:ep}
N.M. Ercolani and  V. U. Pierce, {\em The Continuum Limit of Toda Lattices for Random Matrices with Odd Weights}, Commun. Math. Sci. {\bf 10}, 267-305 (2012).

\bibitem[Er11]{bib:er}
N. M. Ercolani, {\em Caustics, counting maps and semi-classical asymptotics},  Nonlinearity {\bf 24}, 481-526 (2011).

\bibitem[Er14]{bib:er14}
N. M. Ercolani, {\em Conservation laws of random matrix theory}, Random Matrices {\bf 65}, 163-197 (2014).

\bibitem[EW22]{bib:ew}
N. M. Ercolani and P. Waters, 
{\em Relating random matrix map enumeration to a universal symbol calculus for recurrence operators in terms of Bessel-Appell polynomials}, Random Matrices: Theory and Applications {\bf 11}, 2250037 (2022).

\bibitem[Ey11]{bib:e}
B. Eynard, {\em Formal matrix integrals and combinatorics of maps}, in Random Matrices, Random Processes and Integrable Systems. ed. John Harnad, CRM Series in Mathematical Physics, pp. 415-442, Springer, 2011.

\bibitem[Ey16]{bib:ebook}
B. Eynard, {\em Counting Surfaces: Matrix Models and Algebraic Geometry}, Progress in Mathematical Physics, Vol. 70, Birkh\"auser, 2016.

\bibitem[FIK91]{bib:fik91}
A.S. Fokas, A.R. Its, and A.V. Kitaev, {\em Discrete Painlev\'e Equations and their Appearance in Quantum Gravity}, Commun. Math. Phys. {\bf 142}, 313-344 (1991).

\bibitem[FIK92]{bib:fik}
A. S. Fokas, A. R. Its, and A. V. Kitaev, {\em The isomonodromy approach to matrix models in 2D quantum gravity}, Commun. Math. Phys. {\bf 147}, 395-430 (1992). 

\bibitem[FIKN06]{bib:fikn06}
A. S. Fokas, A. R. Its, A. A. Kapaev, V. Y. Novokshenov, {\em Painlev\'e Transcendents, The Riemann-Hilbert Approach}, AMS
Mathematical Surveys and Monographs Vol. 128, 2006.

\bibitem[FOMG13]{bib:fomg}
A. G. Fletcher, J. M. Osborne, P. K. Maini and D. J. Gavaghan, {\em Implementing vertex dynamics models of cell populations in biology within a consistent computational framework}, Progress in Biophysics and Molecular Biology. {\bf 113}, 299-326 (2013).

\bibitem[Fre76]{bib:fre76}
G. Freud, {\em On the Coefficients in the Recursion Formulae of Orthogonal Polynomials}, Proceedings of the Royal Irish Academy {\bf A 76}, 1-6 (1976).

\bibitem[HFC20]{bib:hfc}
J. Hu, G. Filipuk and Y. Chen, {\em Differential and difference equations for recurrence coefficients of orthogonal polynomials with hypergeometric weights and B\"acklund transformations of the sixth Painlev\'e equation}, Random Matrices: Theory and Applications, 2150029 (2020).
DOI: 10.1142/S2010326321500295.

\bibitem[HS21]{bib:hs}
N. Holden and X. Sun, {\em Convergence of uniform triangulations under the Cardy embedding}, arXiv:1905.13207v3 (2021).

\bibitem[JV90]{bib:jv}
D. M. Jackson and T. I. Visintin, {\em A Character Theoretic Approach to Embeddings of Rooted Maps in an Orientable Surface of Given Genus}, Transactions of the AMS {\bf 327}, 343-363 (1990).

\bibitem[Le11]{bib:l11}
J. Lega, {\em Collective behaviors in two-dimensional systems of interacting particles}, SIAM J. Appl. Dyn. Sys. {\bf 10}, 1213-1231 (2011).

\bibitem[LQ83]{bib:lq83}
J. Lew and D. Quarles, {\em Nonnegative Solutions of a Nonlinear Recurrence},
Journal of Approximation Theory {\bf 38}, 357-379 (1983).

\bibitem[LZ04]{bib:lz}
S. Lando and A. Zvonkin, {\em Graphs on Surfaces and their Applications}, Encyclopedia of Mathematical Sciences, Low Dimensional Topology, Volume 11, 2004.

\bibitem[Map21]{bib:maple}
Maple 2021, Version 2021.2, November 2021, symbolic computing environment by Waterloo Maple (Maplesoft).

\bibitem[Mat20]{bib:mathematica}
Mathematica 12, Version 12.2, December 2020, symbolic computing environment by Wolfram.

\bibitem[MB14]{bib:mb}
P. K. Maini and R. E. Baker, {\em Modelling Collective Cell Motion in Biology}, in Advances in Applied Mathematics, 87, A.R. Ansari (ed), Springer International Publishing, 2014.

\bibitem[MNZ85]{bib:mnz85}
A. M\'at\'e, P. Nevai, T. Zaslavsky, {\em Asymptotic expansions of ratios of coefficients of orthogonal polynomials with exponential weights}, Trans. Amer. Math. Soc. {\bf 287}, 495-505 (1985).

\bibitem[Pi06]{bib:vp}
V. Pierce. {\em An Algorithm for Map Enumeration}, arXiv:math/0610586 (2006). DOI: \url{https://doi.org/10.48550/arXiv.math/0610586}. Code available on GitHub at \url{https://github.com/virgilpierce/Vertex_Counting}.

\bibitem[Sz39]{bib:Sze39}
G. Szeg\"{o}, {\em Orthogonal Polynomials},
American Mathematical Society, 1939.

\bibitem[Tip20]{bib:tip20}
B. Tippings, \textit{Discrete Painlev\'e Equations, Orthogonal Polynomials, and Counting Maps,} PhD Dissertation, The University of Arizona, 2020.

\bibitem[Tu68]{bib:tu}
W. T. Tutte, {\em On the enumeration of planar maps}, Bull. Amer. Math. Soc. {\bf 74}, 64-74 (1968).

\bibitem[W15]{bib:w}
P. Waters, {\em Solution of String Equations for Asymmetric Potentials}, Nucl. Phys. B {\bf 899}, 265-288 (2015).



\end{thebibliography}
\end{document}